\documentclass[a4paper, 10pt]{article}
\usepackage{geometry}
\newcounter{somecounter}
\setcounter{somecounter}{1}
\usepackage{booktabs, array}

\usepackage{setspace}
\usepackage{float}
\floatplacement{figure}{H}
\usepackage{standalone}
\usepackage{amsmath,amsfonts,amssymb}
\usepackage{amsthm}

\usepackage{color}
\usepackage{xcolor}
\usepackage[english]{babel}

\newtheorem{theorem}{Theorem}[section]
\newtheorem{assumption}{Assumption}[section]

\usepackage{caption}
\usepackage{subcaption}
\usepackage{bm}
\usepackage{pdflscape}
\usepackage{authblk}
\usepackage{graphicx}
\usepackage{hyperref}
\usepackage[normalem]{ulem}
\usepackage{multirow}
\usepackage{multicol}
\usepackage[flushleft]{threeparttable}
\usepackage{enumitem}
\usepackage{url}
\usepackage[utf8]{inputenc}
\usepackage{dsfont}

\usepackage{lipsum}
\makeatletter

\renewenvironment{proof}[1][\proofname]{\par
  \pushQED{\qed}%
  \normalfont \topsep6\p@\@plus6\p@\relax
  \list{}{\leftmargin=1em
          \rightmargin=\leftmargin
          \settowidth{\itemindent}{\itshape#1}%
          \labelwidth=\itemindent
          \parsep=0pt \listparindent=\parindent 
  }
  \item[\hskip\labelsep
        \itshape
    #1\@addpunct{.}]\ignorespaces
}{
  \popQED\endlist\@endpefalse
}

\makeatother
\usepackage{natbib}

\usepackage{mathtools}

\newcommand{\interior}[1]{%
 {\kern0pt#1}^{\mathrm{o}}%
}
\usepackage [english]{babel}
\usepackage [autostyle, english = american]{csquotes}
\MakeOuterQuote{"}

\usepackage{algorithm,algpseudocode}
\usepackage{caption}

\usepackage[makeroom]{cancel}

\makeatletter
\newcommand*\bigcdot{\mathpalette\bigcdot@{.5}}
\newcommand*\bigcdot@[2]{\mathbin{\vcenter{\hbox{\scalebox{#2}{$\m@th#1\bullet$}}}}}
\makeatother

\makeatother
\usepackage{natbib}

\usepackage{mathtools}

\usepackage{algorithm,algpseudocode}
\usepackage{caption}
\makeatother

\doublespacing

\title{\textbf{Bayesian Function-on-Function Regression for Spatial Functional Data}}

\author[1]{Heesang Lee}
\author[1]{Dagun Oh}
\author[3]{Sunhwa Choi}
\author[1,2]{Jaewoo Park}
\affil[1]{Department of Statistics and Data Science, Yonsei University}
\affil[2]{Department of Applied Statistics, Yonsei University}
\affil[3]{Innovation Center for Industrial Mathematics, National Institute for Mathematical Sciences}

\bibliographystyle{apalike}

\begin{document}

\maketitle

\begin{abstract}
Spatial functional data arise in many settings, such as particulate matter curves observed at monitoring stations and age population curves at each areal unit. Most existing functional regression models have limited applicability because they do not consider spatial correlations. Although functional kriging methods can predict the curves at unobserved spatial locations, they are based on variogram fittings rather than constructing hierarchical statistical models. In this manuscript, we propose a Bayesian framework for spatial function-on-function regression that can carry out parameter estimations and predictions. However, the proposed model has computational and inferential challenges because the model needs to account for within and between-curve dependencies. Furthermore, high-dimensional and spatially correlated parameters can lead to the slow mixing of Markov chain Monte Carlo algorithms. To address these issues, we first utilize a basis transformation approach to simplify the covariance and apply projection methods for dimension reduction. We also develop a simultaneous band score for the proposed model to detect the significant region in the regression function. We apply our method to both areal and point-level spatial functional data, showing the proposed method is computationally efficient and provides accurate estimations and predictions.
\end{abstract}


\noindent%
{\it Keywords: dimension reduction; function-on-function regression; functional kriging; Markov chain Monte Carlo; Gaussian process}

\section{Introduction}
\label{sec:intro}
With the improvement in data collection technology, data have become large, complex, and high-dimensional. To analyze such datasets, functional data analysis has been developed in many disciplines, including linear models, nonparametric methods, and multivariate techniques  \citep[cf.][]{ramsay2005}. Although most functional data analysis methods assume independence between functions, there is an increasing interest in modeling dependent functions. An important example is spatial functional data \citep{delicado2010statistics,  martinez2020recent}, where we observe a curve with a spatial component. Although several analysis tools, including functional kriging and significance testing, have been developed (see \cite{kokoszka2019some} for a comprehensive review), fully Bayesian approaches have been under-explored.
In this manuscript, we propose a Bayesian approach for spatial functional regression, allowing both the predictor and response as functional variables. The proposed method can carry out functional kriging and study the relationships between functional variables. For efficient computation, we utilize basis transformation approaches \citep{zhang2016functional, kowal2020bayesian} and projection methods \citep{hughes2013dimension, park2022projection, lee2022picar}.

There is a vast literature on modeling functional variables in the regression context. \cite{morris2015functional} classified functional regression methods into three categories: (1) scalar-on-function regression models \citep{cardot1999functional, reiss2017methods}, (2) function-on-scalar regression models \citep{ramsay2005, reiss2010fast}, and (3) function-on-function regression models \citep{yao2005functional, kim2018additive}. These models extend the standard regression by taking responses or covariates (or both) as functional variables. However, most previous studies assume independent random errors, which is unsuitable for spatial functional data. 

There have been several works to account for spatial correlations in the model. \cite{hu2023bayesian} developed a Bayesian clustering method by combining Markov random fields and mixture models to account for geographical information. \cite{zhong2024bayesian} proposed a spatial functional clustering by utilizing random spanning trees for partitioning and latent Gaussian models for describing within-cluster dependency. In the regression context, \cite{zhang2016functional} developed a function-on-scalar regression based on conditional autoregressive models. For scalar-on-function regression, \cite{pineda2019functional} proposed a functional simultaneous autoregressive model to study econometric data. \cite{korte2022multivariate} developed a function-on-function regression with mixture components by modeling spatial dependence through a Markov random field. They fit the model by maximizing the likelihood function with the Monte Carlo EM algorithm.
Recently, \cite{kang2023fast} proposed a sparse functional spatial generalized linear mixed model in a Bayesian framework that is computationally efficient. Note that the prediction of curves at unobserved locations is one of the main interests in many geostatistical problems. The previous works focused on estimating functional parameters, not functional prediction (i.e., functional kriging). Furthermore, Bayesian inference for spatial function-on-function regression models has not been studied. 

In terms of functional kriging, \cite{goulard1993geostatistical} proposed the exploratory method based on the stationarity assumption. They reconstructed the complete curves via a parametric model by assuming that the curves were only known at a finite set of points. By extending this, \cite{giraldo2011ordinary} proposed nonparametric approaches based on the B-splines basis representation. \cite{caballero2013universal} developed a functional universal kriging (UK) that allows a different mean trend of functions. \cite{menafoglio2013universal} established a theoretical framework for UK under a general Hilbert space. All of these approaches focused on curve prediction based on variogram fittings; therefore, it is challenging to study the relationships between functional variables. Furthermore, it is not trivial to quantify uncertainties for functional parameters. 

In this manuscript, we develop a spatial function-on-function regression (SFoFR), extending the standard FoFR by including spatial random functions. There are two main inferential challenges for SFoFR.
\begin{itemize}
\item {\textit{Complex covariance}}: We need to account for both within-curve dependence and between-curve (spatial) dependence, resulting in a complex covariance function. To address this, we utilize a basis transformation approach \citep{zhang2016functional, kowal2020bayesian} that can convert every observed function from the data space into the basis space. From this, SFoFR can be represented as matrices of basis coefficients with a simpler covariance. 
\item {\textit{High-dimensional spatial random effects}}: Although the basis transformation can simplify complex covariance, Bayesian inference for SFoFR is still challenging, even with a moderate sample size. With an increasing sample size, the dimension of the spatial random effects matrix grows, resulting in the slow mixing of the Markov chain Monte Carlo (MCMC) algorithm. To address this, we propose a projection-based SFoFR (PSFoFR) by adopting dimension reduction methods \citep{hughes2013dimension, lee2022picar}. 
\end{itemize}
This manuscript makes the following main contributions.
\begin{itemize}    
    \item To our knowledge, this study is the first attempt to develop Bayesian FoFR for fitting spatial functional data. Our Bayesian framework allows for flexible hierarchical model specification and provides uncertainties using posterior samples. 
    \item We provide a theoretical justification for the method. We show that the regression function obtained from MCMC samples converges in distribution to the true function. We also show the consistency of the estimated regression functions from the posterior mean. 
    \item We investigate the performance of PSFoFR through empirical studies. We observe that PSFoFR is computationally efficient and provides accurate estimations and predictions. Furthermore, our framework is applicable to both discrete and continuous spatial domains, covering the broader class of spatial functional data.
\end{itemize}

The remainder of this manuscript is organized as follows. In Section~\ref{sec:FunctionalKrig}, we explain the background of the existing methods. In Section~\ref{sec:method}, we propose a PSFoFR and describe functional kriging. Furthermore, we provide theoretical justification for our method. In Section~\ref{sec:simulation}, we provide simulation studies to investigate the performance of the proposed methods. In Section~\ref{sec:appli}, we apply the method to real datasets, including Japan PM2.5 data and Korea mobility data. We conclude with a summary and discussion in Section~\ref{sec:discuss}.

\section{Background}
\label{sec:FunctionalKrig}

\subsection{Spatial Functional Regression}
\label{sec:sfreg}

Functional regression models have been studied for both areal and point-level data. For areal data, \cite{zhang2016functional} developed function-on-scalar regressions with nonseparable and nonstationary covariance structures. They used a basis transformation approach to make inference scalable to higher dimensional problems. \cite{kang2023fast} proposed scalar-on-function regression based on sparse representation of high-dimensional spatial random effects. Both of these approaches modeled spatial correlations through Gaussian Markov random fields. For point-level data, \cite{aristizabal2019analysis} proposed the analysis of variance of spatial functional data, and \cite{vrimalova2022inference} developed a permutation testing approach. Both methods are based on variogram fittings from the residuals of independent function-on-scalar regression models. \cite{park2023crop} developed a scalar-on-function regression model with a low-rank representation of spatially dependent functional covariates. 

Although functional kriging is one of the important geostatistical goals, most previous approaches focused on parameter estimation. Furthermore, existing methods take either responses or covariates as functional variables, but not both. Therefore, it is challenging to study the relationships between functional variables in the presence of spatial correlations. In Section~\ref{sec:method}, we propose a Bayesian framework for SFoFR that can be applicable to both areal and point-level data. The Bayesian approach is convenient for complex hierarchical models because we can carry out estimation and prediction using posterior samples.

\subsection{Spatial Functional Kriging}
\label{sec:sfkrig}

As a seminal work, \cite{goulard1993geostatistical} considered both multivariate and functional kriging approaches to predict function values. In terms of multivariate perspectives, they regarded the vector of observed values from each curve as a multivariate random variable. Then, they used cokriging to predict the values of the multivariate random vector. As a functional approach, \cite{goulard1993geostatistical} defined the best linear unbiased predictor (BLUP) of curves at unobserved sites and fitted each curve with a parametric model. Extending such approaches, \cite{giraldo2011ordinary} proposed functional ordinary kriging (OK) by replacing parametric models with nonparametric models based on cubic B-splines. Then, they constructed BLUP of the OK predictor. However, they still assume stationarity among functions, which has limited applicability. To address this, universal kriging (UK) approaches that allow different mean functions have been developed
\citep{caballero2013universal,menafoglio2013universal}. Here, we describe the UK following \cite{caballero2013universal}, and the results can be extended to general Hilbert spaces \citep{menafoglio2013universal}.

Consider we have a non-stationary stochastic process $\{Y_{\boldsymbol{s}}(t), t \in \mathcal{T}, \boldsymbol{s}\in \mathcal{D}\}$
in $\mathcal{L}^2(\mathcal{T})$, the set of all square integrable functions on $\mathcal{T} \in \mathbb R$ defined over the spatial domain $\mathcal{D} \in \mathbb R^{2}$. Then, the $i$th observed curve realized from the random process can be modeled as \begin{equation}\label{defRandomProcess}
    Y_{\boldsymbol{s}_i}(t) =   \mathbf{X}^{'}_{\boldsymbol{s}_i}\boldsymbol{\beta}(t) + \upsilon_{\boldsymbol{s}_i}(t),
\end{equation}
where $\mathbf{X}^{'}_{\boldsymbol{s}_i} \in \mathbb{R}^{L}$ is a vector of spatial variables at location $\boldsymbol{s}_i$ and $\boldsymbol{\beta}(t)=(\beta_1(t),\cdots,\beta_L(t))'$ is the corresponding regression function vector. In addition, $\upsilon_{\boldsymbol{s}_i}(t)$ is a functional error term from a zero-mean second-order stationary and isotropic random process $\{\upsilon_{\boldsymbol{s}}(t), t \in \mathcal{T}, \boldsymbol{s}\in \mathcal{D}\}$
in $\mathcal{L}^2(\mathcal{T})$. Note that \eqref{defRandomProcess} allows the non-constant spatial mean function (drift) and belongs to a function-on-scalar regression framework. 

The UK approaches can select an optimal linear model for a given $n$ number of observations $Y_{\boldsymbol{s}_1}(t),\cdots,Y_{\boldsymbol{s}_n}(t)$. Furthermore, UK approaches can predict $Y_{\boldsymbol{s}^{\ast}_{1}}(t)$ for an unobserved location $\boldsymbol{s}^{\ast}_{1}$ by accounting for the spatial dependence structure in the model. The BLUP of the UK predictor at $\boldsymbol{s}^{\ast}_{1}$ is
\begin{equation}\label{UkPred}
   \widehat{Y}_{\boldsymbol{s}^{\ast}_{1}}(t) = \sum_{i=1}^n \lambda_i {Y_{\boldsymbol{s}_i}(t)},
\end{equation}
where the weight $\boldsymbol{\lambda}=(\lambda_1,\cdots,\lambda_n)'$ is  obtained by minimizing the global variance of the prediction error under the following unbiasedness constraint:
\begin{equation}\label{UKconstraint}
    \text{min}_{\lambda_1,...\lambda_n} \text{Var} (\widehat{Y}_{\boldsymbol{s}^{\ast}_{1}}(t) - {Y_{\boldsymbol{s}^{\ast}_{1}}(t)}),\quad \text{s.t} \quad \mathbb{E} \bigl [\widehat{Y}_{\boldsymbol{s}^{\ast}_{1}}(t) - {Y_{\boldsymbol{s}^{\ast}_{1}}(t)} \bigr] = 0.
\end{equation}
\cite{caballero2013universal,menafoglio2013universal} showed that the optimal weight $\boldsymbol{\lambda}$ in \eqref{UKconstraint}, can be obtained by solving the following linear equation.
 \begin{equation}\label{UKlinearsystem}
    \begin{pmatrix} 
        C_{11} & \cdots & C_{1n} & \mathbf{X}'_{\boldsymbol{s}_1} \\
        \vdots & \ddots & \dots & \vdots \\
        C_{n1} & \cdots & C_{nn} & \mathbf{X}'_{\boldsymbol{s}_n} &  \\
        \mathbf{X}_{\boldsymbol{s}_1} & \cdots & \mathbf{X}_{\boldsymbol{s}_n}  &  \mathbf{0} \\ 
    \end{pmatrix}
    \begin{pmatrix}
        \boldsymbol{\lambda} \\ \boldsymbol{\mu} 
    \end{pmatrix}
    = 
    \begin{pmatrix}
        C_{01} \\ \vdots \\ C_{0n} \\  \mathbf{X}_{\boldsymbol{s}^{\ast}_{1}}
    \end{pmatrix},
 \end{equation}
where $C_{ij}= \text{Cov}(Y_{\boldsymbol{s}_i}(t), Y_{\boldsymbol{s}_j}(t))$ denotes the trace-covariogram function  and $\boldsymbol{\mu}\in \mathbb R^{L}$ represents the Lagrange multiplier corresponding to the $L$ unbiasedness constraints in \eqref{UKconstraint}. The UK approaches described above can be implemented through the publicly available \texttt{R} package \texttt{fdagstat} \citep{fdagstat_2017}. 

Although UK methods are valuable for exploratory data analysis, they are not based on formal statistical inference. The fitted results from the trace-covariogram function are plugged into \eqref{UkPred} by ignoring the uncertainties in the estimation step. In terms of the model, the drift term in \eqref{defRandomProcess} is limited to the scalar covariates; therefore, we cannot study the relationships between functional variables. Furthermore, the uncertainty quantification of regression functions is not trivial for UK approaches. To address these challenges, we propose Bayesian spatial functional
regression methods in the following section.

\section{Functional Regression for Spatial Data}
\label{sec:method}

Let $Y_{\boldsymbol{s}_1}(t),..., Y_{\boldsymbol{s}_n}(t)$ be the observed functional response from a second-order stochastic process $\{Y_{\boldsymbol{s}}(t), t \in \mathcal{T}, \boldsymbol{s}\in \mathcal{D}\}$ in $\mathcal{L}^2(\mathcal{T})$. Similarly, we can define $X_{\boldsymbol{s}_1}(r),..., X_{\boldsymbol{s}_n}(r)$ as the observed functional covariates from a second-order stochastic process $\{X_{\boldsymbol{s}}(r), r \in \mathcal{V}, \boldsymbol{s}\in \mathcal{D}\}$ in $\mathcal{L}^2(\mathcal{V})$. To account for the spatial dependence between curves, we introduce a zero-mean Gaussian process $\{W_{\boldsymbol{s}}(t), t \in \mathcal{T}, \boldsymbol{s}\in \mathcal{D}\}$ that belongs to $\mathcal{L}^2(\mathcal{T})$.
Here, we propose the spatial function-on-function regression (SFoFR) as
\begin{equation}\label{fregmodel}
   Y_{\boldsymbol{s}_i}(t) = \int_{\mathcal{V}} \Psi(r,t)X_{\boldsymbol{s}_i}(r)dr + W_{\boldsymbol{s}_i}(t)+ \epsilon_{\boldsymbol{s}_i}(t),
\end{equation}
where $\epsilon_{\boldsymbol{s}_i}(t)$ is an independent error function. The functional parameter (regression function) $\Psi$ belongs to $\mathcal{L}(\mathcal{V}) \otimes \mathcal{L}(\mathcal{T})$, where $\otimes$ indicates a tensor product. By extending the standard spatial regression models, SFoFR can study the relationships between functional variables and interpolate functions at unobserved locations. Following \cite{kowal2020bayesian},  we expand the functional variables in  \eqref{fregmodel} with respect to the orthonormal basis functions $\{\phi_k\}_{k\in \mathbb{N}} \in \mathcal{L}(\mathcal{T}),\ \{\phi_g\}_{g\in \mathbb{N}} \in \mathcal{L}(\mathcal{V})$. For instance, we can orthonormalize the basis functions through QR factorization or directly use orthonormal ones (e.g., FPC, Fourier).
Then, we have
\begin{equation}\label{fregmodelcomp}
Y_{\boldsymbol{s}_i}(t) = \sum_{k=1}^{\infty} y_{ik}\phi_k(t), \quad X_{\boldsymbol{s}_i}(r)=  \sum_{g=1}^{\infty} x_{ig}\phi_g(r), \quad W_{\boldsymbol{s}_i}(t)=  \sum_{k=1}^{\infty} w_{ik}\phi_k(t), \quad \epsilon_{\boldsymbol{s}_i}(t)=  \sum_{k=1}^{\infty} e_{ik}\phi_k(t). 
\end{equation}
We can also construct the orthonormal tensor basis $\{\phi_{gk}\}_{g,k \in \mathbb{N}} = \{\phi_g\}_{g\in \mathbb{N}} \otimes\{\phi_k\}_{k\in \mathbb{N}} \in \mathcal{L}(\mathcal{V}) \otimes \mathcal{L}(\mathcal{T})$ and the regression function can be represented as
\begin{equation}\label{psi}
\Psi(r, t) = \sum_{k=1}^\infty\sum_{g=1}^\infty\psi_{gk}\phi_{gk}(r,t).
\end{equation}
In \eqref{fregmodelcomp}, $\{y_{ik}\}_{k \in \mathbb{N}}$, $\{x_{ig}\}_{g \in \mathbb{N}}$, and $\{w_{ik}\}_{k \in \mathbb{N}}$ are correlated random variables across different spatial locations. Without loss of generality, we assume that they have mean zero and satisfy spatial homoscedasticity \citep{pineda2019functional, kang2023fast}. We assume that the error coefficients $\{e_{ik}\}_{k \in \mathbb{N}}$ follow $\mathcal{N}(0, \tau^2)$ independently. 

There are computational and inferential challenges for the model specified above. Since each functional component is correlated in terms of both $t$ (within-curve dependence) and $\boldsymbol{s}_i$ (between-curve dependence), we need to model covariance functions, resulting in complex likelihood functions. To address this, we utilize a basis transformation approach \citep{zhang2016functional, kowal2020bayesian}. We transform each observed curve from the data space to the basis space and fit the model in the basis space. Then, we transform it back to the data space; the estimation and prediction results are interpreted in the data space. From this procedure, we can fit the model with a simpler covariance in the basis space while still modeling complex between and within-curve covariances in the data space. To transform to the basis space, we first take the inner product with $\phi_{k'}(t)$  on both sides of \eqref{fregmodel} as
\begin{equation}\label{fregmodeltransform}
   \int_{\mathcal{T}} \phi_{k'}(t) Y_{\boldsymbol{s}_i}(t)dt = \int_{\mathcal{T}} \phi_{k'}(t) \int_{\mathcal{V}} \Psi(r,t)X_{\boldsymbol{s}_i}(r)drdt + \int_{\mathcal{T}} \phi_{k'}(t) W_{\boldsymbol{s}_i}(t) dt + \int_{\mathcal{T}} \phi_{k'}(t)\epsilon_{\boldsymbol{s}_i}(t)dt.
\end{equation}
Since $\{\phi_k\}_{k\in \mathbb{N}}$ is an orthonormal basis function, using dominated convergence, the left-hand side of \eqref{fregmodeltransform} becomes
\begin{equation}\label{dominatedConvergence}
    \int \phi_{k'}(t)Y_{\boldsymbol{s}_i}(t)dt = \int \phi_{k'}(t)\left( \sum_{k=1}^\infty y_{ik}\phi_{k}(t) \right) dt = \sum_{k=1}^{\infty} y_{ik} \left( \int \phi_{k'}(t)\phi_{k}(t)dt \right) = y_{ik},
\end{equation} 
when $k=k'$. In a similar fashion, $\int_{\mathcal{T}} \phi_k(t)W_{\boldsymbol{s}_i}(t)dt$ and $\int_{\mathcal{T}} \phi_k(t)\epsilon_{\boldsymbol{s}_i}(t)dt$ terms can be represented as $w_{ik}$ and $e_{ik}$, respectively. Lastly, with dominated convergence, we have 
\begin{align}\label{PsiDominatedConvergence}
   \int_{\mathcal{T}} \phi_{k'}(t) \int_{\mathcal{V}} \Psi(r,t)X_{\boldsymbol{s}_i}(r)drdt  &= \int_{\mathcal{T}} \phi_{k'}(t) \int_{\mathcal{V}} \left( \sum_{k=1}^\infty \sum_{g=1}^\infty\psi_{gk}\phi_{gk}(r,t) \right) \left( \sum_{g=1}^\infty x_{ig}\phi_{g}(r) \right) drdt\nonumber\\ 
    &= \sum_{g=1}^\infty \psi_{gk} x_{ig},
    \end{align}
when $k=k'$. In practice, we approximate the above infinite basis expansions with the finite number of basis functions $k_n$ and $g_n$, where $k_n$ and $g_n$ increase as $n \rightarrow \infty$ \citep{muller2005fglm}. Under this finite approximation, if we repeat \eqref{dominatedConvergence}, \eqref{PsiDominatedConvergence} with different $k'$, we have matrices of the coefficients in the basis space. Let $\widetilde{\mathbf{Y}}=(y_{ik})_{n\times k_n}$ be the matrix of transformed curves in the basis space and $\widetilde{\mathbf{X}}=(x_{ig})_{n \times g_n}$ be the design matrix with the coefficients $\widetilde{\boldsymbol{\psi}}=(\psi_{gk})_{g_n \times k_n}$. Similarly, let  $\widetilde{\mathbf{W}}=(w_{ik})_{n \times k_n}$ be the transformed random effects matrix
and $\widetilde{\boldsymbol{\epsilon}}=(e_{ik})_{n \times k_n}$ be the transformed error matrix. Then, we can represent the SFoFR in the basis space with finite truncation as
\begin{equation}\label{fregmodeltrunc}
\widetilde{\mathbf{Y}} = \widetilde{\mathbf{X}} \widetilde{\boldsymbol{\psi}}+\widetilde{\mathbf{W}} +\widetilde{\boldsymbol{\epsilon}}, 
\end{equation}
where $\widetilde{\mathbf{W}} \sim \mathcal{MGP}(\boldsymbol{\Lambda}, \widetilde{\boldsymbol{\Sigma}})$. Following \cite{morris2006wavelet}, $\mathcal{MGP}(\boldsymbol{\Lambda}, \widetilde{\boldsymbol{\Sigma}})$ refers to a multivariate Gaussian process with a between-curve covariance $\boldsymbol{\Lambda}\in \mathbb R^{n\times n}$ and a within-curve covariance $\widetilde{\boldsymbol{\Sigma}}\in \mathbb R^{k_n\times k_n}$. From this definition, the covariance of $\text{vec}(\widetilde{\mathbf{W}}) \in R^{n k_n}$ is $\boldsymbol{\Lambda} \otimes \widetilde{\boldsymbol{\Sigma}}$, where $\otimes$ denotes a Kronecker product. (1) For continuous domains, we can use the Mat\'{e}rn class covariance to model $\boldsymbol{\Lambda}$ that depends on the variance parameter $\sigma^2$ and range parameter $\rho$. (2) For discrete domains, we can use intrinsic autoregressive models. Let $\mathbf{D} \in \mathbb R^{n\times n}$ be an adjacency matrix, where $D_{ij}=1$, if locations $i,j$ are neighbors, and otherwise $D_{ij}=0$. Then the model is defined with a precision matrix $\nu[\mbox{diag}(\mathbf{D}\mathbf{1})-\mathbf{D}]$, where $\nu$ is a spatial smoothness parameter. For both continuous and discrete domains, we can model $\widetilde{\boldsymbol{\Sigma}}$ through a simple diagonal matrix in the basis space (e.g., $\widetilde{\boldsymbol{\Sigma}}=s\mathbf{I}$). Such an independence assumption in the basis space is justified for a wide variety of basis functions \citep{zhang2016functional}. Note that while we use a simple covariance structure for $\widetilde{\boldsymbol{\Sigma}}$ in the basis space, it still allows a flexible covariance structure in the data space. In the following paragraph, we provide more details about the within-curve covariance in the data space.  

Once we specify the priors of the parameters in \eqref{fregmodeltrunc}, we can construct the MCMC algorithm to generate posterior samples from the basis space model. We provide the conditional distributions for SFoFR in the supplementary material. These posterior samples will be transformed back into the data space for interpreting the regression function and kriging. Let $\boldsymbol{\Phi} \in \mathbb R^{k_n \times n_t}$ be a matrix of basis functions $\phi_k(t)$ evaluated at $n_t$ grid points over $\mathcal{T}$. Similarly, let $\boldsymbol{\Xi} \in \mathbb R^{g_n \times n_v}$ be the matrix of basis functions $\phi_g(r)$ evaluated at $n_v$ grid points over $\mathcal{V}$. If we multiply both sides of \eqref{fregmodeltrunc} with $\boldsymbol{\Phi}$, we can transform back into the data space as
$$\widetilde{\mathbf{Y}}\boldsymbol{\Phi} = \widetilde{\mathbf{X}} \boldsymbol{\Xi} \boldsymbol{\Xi}'\widetilde{\boldsymbol{\psi}}\boldsymbol{\Phi}+\widetilde{\mathbf{W}}\boldsymbol{\Phi} +\widetilde{\boldsymbol{\epsilon}}\boldsymbol{\Phi}.$$
The left-hand side term $\widetilde{\mathbf{Y}}\boldsymbol{\Phi}$ is a finite basis expansion of the observed curves evaluated over $n_t$ grid points, which can be defined as $\mathbf{Y} \in \mathbb R^{n\times n_t}$. Similarly, $\widetilde{\mathbf{X}}\boldsymbol{\Xi}$, $\boldsymbol{\Xi}'\widetilde{\boldsymbol{\psi}}\boldsymbol{\Phi}$, $\widetilde{\mathbf{W}}\boldsymbol{\Phi}$, and $\widetilde{\boldsymbol{\epsilon}}\boldsymbol{\Phi}$ can be defined as $\mathbf{X} \in \mathbb R^{n \times n_v}$, $\boldsymbol{\psi} \in \mathbb R^{n_v \times n_t}$, $\mathbf{W} \in \mathbb R^{n\times n_t}$, and $\boldsymbol{\epsilon} \in \mathbb R^{n\times n_t}$, respectively, which are finite representations of the functional components in 
\eqref{fregmodel}. Then we have the SFoFR in the data space with finite truncation as 
\begin{equation}\label{fregmodeldata}
\mathbf{Y} = \mathbf{X} \boldsymbol{\psi}+\mathbf{W} +\boldsymbol{\epsilon}, 
\end{equation}
where $\mathbf{W} \sim \mathcal{MGP}(\boldsymbol{\Lambda}, \boldsymbol{\Sigma})$. Therefore, in the data space, we allow a flexible form of the within-curve covariance as $\boldsymbol{\Sigma}=\boldsymbol{\Phi}'\widetilde{\boldsymbol{\Sigma}}\boldsymbol{\Phi}$, while the MCMC algorithms are implemented in the basis space with a simple covariance structure as in \eqref{fregmodeltrunc}. From $\boldsymbol{\psi}=\boldsymbol{\Xi}'\widetilde{\boldsymbol{\psi}}\boldsymbol{\Phi}$, we can also interpret the regression function in the data space. A graphical description of the basis transformation approach is presented in Figure~\ref{fig:outlineTrans}.

\begin{figure}[htbp]
     \centering
         \centering
         \includegraphics[width=\linewidth]{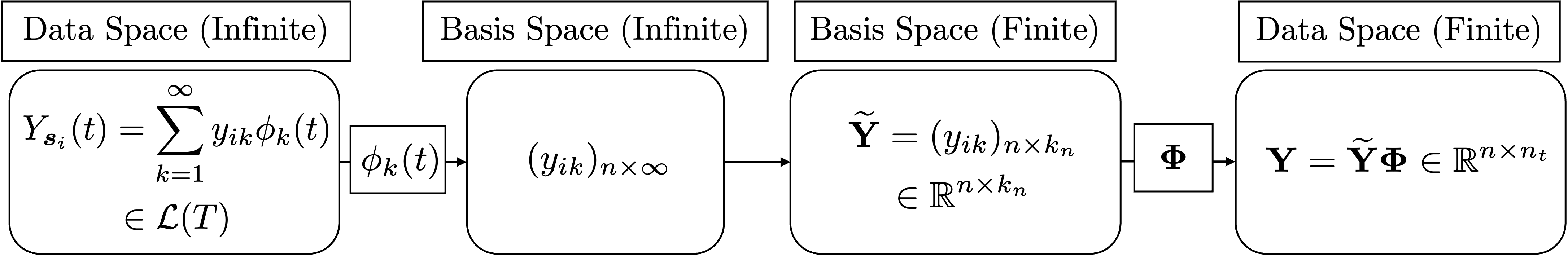}
         \caption{Outline of the basis transformation approach.}
         \label{fig:outlineTrans}
\end{figure}

However, implementing the MCMC algorithm for \eqref{fregmodeltrunc} can still be computationally demanding with increasing numbers of observations. The dimension of the transformed random effects $\widetilde{\mathbf{W}} \in \mathbb R^{n \times k_n}$ becomes larger with larger $n$, and they are spatially correlated; this can lead to the slow mixing of the chain \citep{kang2023fast}. To address this, we develop a projection-based SFoFR that is computationally efficient.

\subsection{Projection-Based Function-on-Function Regression}
\label{sec:psfofr}

Projection methods \citep{hughes2013dimension, lee2022picar} have been studied for both areal and point-level data for efficient computation. They can reduce the dimension of the spatial latent process from $n$ to $p$ $(<<n)$ based on the projection matrix $\mathbf{P} \in \mathbb R^{n \times p}$. Here, we apply these methods to the transformed random effect $\widetilde{\mathbf{W}}$ in \eqref{fregmodeltrunc}.

For the areal data over the discrete spatial domain, we construct the projection matrix $\mathbf{P}$ by taking the leading $p$ eigencomponents of Moran's operator. Its eigenvalues correspond to the degree of spatial dependence, and the corresponding eigenvectors present the spatial clustering patterns of the data exhibit \citep{hughes2013dimension}. To obtain $\mathbf{P}$, we first calculate $\mathbf{H}^{\perp}$ which is a complement of $\mathbf{H} = \widetilde{\mathbf{X}}(\widetilde{\mathbf{X}}'\widetilde{\mathbf{X}})^{-1}\widetilde{\mathbf{X}}'$. Here, $\widetilde{\mathbf{X}} \in \mathbb R^{n \times g_n}$ is a design matrix in \eqref{fregmodeltrunc}. Then we calculate Moran's operator $\mathbf{H}^{\perp} \mathbf{D} \mathbf{H}^{\perp}$, where $\mathbf{D} \in \mathbb R^{n\times n}$ is an adjacency matrix for observations. We take the leading $p$ eigenvectors of Moran's operator to construct $\mathbf{P} \in \mathbb R^{n \times p}$ and the transformed random effect $\widetilde{\mathbf{W}}$ in \eqref{fregmodeltrunc} can be approximated as $\mathbf{P}\boldsymbol{\delta}$. 

For point-level data over the continuous spatial domain, we follow \cite{lee2022picar} to construct the projection matrix $\mathbf{P}$. First, we generate triangular mesh covering the spatial domain $\mathcal{D}$ using the {\texttt{R}} package {\texttt{INLA}} \citep{lindgren2015bayesian}. The mesh constructs an undirected graph $\mathbf{G}= (\mathbf{V}, E)$ where $\mathbf{V}=\lbrace \mathbf{v}_1,\cdots,\mathbf{v}_m\rbrace$ is the set of locations of the mesh vertices and $E$ is the set of edges. Then, we construct Moran's operator 
$(\mathbf{I-11'}/m)\mathbf{N}(\mathbf{I-11'}/m)$, where $\mathbf{N}\in \mathbb R^{m\times m}$ is the adjacency matrix from the graph $\mathbf{G}$. As in the discrete case, we take the leading $p$ eigenvectors of Moran's operator to obtain $\mathbf{M} \in \mathbb R^{m \times p}$. Note that the eigenvectors present spatial dependence among the mesh vertices, not about point-level observations \citep{lee2022picar}. Therefore, to interpolate observations within the mesh, we use piecewise linear basis functions $\mathbf{A} \in \mathbb{R}^{n\times m}$. The rows of $\mathbf{A}$ correspond to the observed locations, and the columns correspond to the mesh vertices; the $i$-th row of $\mathbf{A}$ contains weights to interpolate onto point-level locations. From this, we can construct a projection matrix   $\mathbf{P}=\mathbf{AM} \in \mathbb R^{n \times p}$
and approximate the transformed random effect $\widetilde{\mathbf{W}}$ in \eqref{fregmodeltrunc} as $\mathbf{P}\boldsymbol{\delta}$. 

Therefore, for both discrete and continuous spatial domains, we have the projection-based SFoFR (PSFoFR) in the basis space as
\begin{equation}\label{projmodeltrunc}
\widetilde{\mathbf{Y}} = \widetilde{\mathbf{X}} \widetilde{\boldsymbol{\psi}}+\mathbf{P}\boldsymbol{\delta} +\widetilde{\boldsymbol{\epsilon}}.
\end{equation}
As in SFoFR, we use the MCMC algorithm to sample from the posterior distributions in the basis space model \eqref{projmodeltrunc} and transform them back into the data space for inference. We provide the details about the full conditionals for PSFoFR in the supplementary material (Appendix A). There are two clear computational advantages from the projection step. First, the projection methods significantly reduce the number of parameters that have to be updated in the MCMC sampler. In SFoFR, we have $\widetilde{\boldsymbol{\psi}} \in \mathbb R^{g_n \times k_n}$, $\widetilde{\mathbf{W}} \in \mathbb R^{n \times k_n}$, and spatial covariance parameters (2 or 3 depends on model specification). On the other hand, we have $\widetilde{\boldsymbol{\psi}} \in \mathbb R^{g_n \times k_n}$, $\boldsymbol{\delta} \in \mathbb R^{p \times k_n}$, and spatial covariance parameters in PSFoFR. Since $p<<n$, there is a significant reduction in the number of parameters. For instance, the number of parameters is reduced from 20,103 to 2,434 (about 88\% reduction) in the PM 2.5 data example (Section~\ref{sec:pm25}) due to the projection step. Second, the projection step can improve the mixing of the MCMC by reducing the correlation among the random effect parameters \citep{hughes2013dimension, lee2022picar}. To assess the improvement in MCMC mixing, we compute the effective sample size and provide the trace plots in the supplementary material (Appendix B). We observe that PSFoFR has better mixing compared to SFoFR for the same MCMC iterations, especially for the random effect parameters. 

For the point-level data, predictions at unobserved locations (i.e., kriging) are of great interest. Consider we have $n_{cv}$ number of unobserved locations $\boldsymbol{s}^{\ast}=(\boldsymbol{s}^{\ast}_{1},\cdots,\boldsymbol{s}^{\ast}_{n_{cv}}) \in \mathcal{D}$. Similar to \eqref{fregmodelcomp}, let $\widetilde{\mathbf{Y}}_{\boldsymbol{s}^{\ast}} =(y_{ik})_{n_{cv}\times k_n}$ and $\widetilde{\mathbf{X}}_{\boldsymbol{s}^{\ast}}=(x_{ig})_{n_{cv} \times g_n}$ be the matrices of basis coefficients obtained from response curves and covariate curves at $\boldsymbol{s}^{\ast}$, respectively. For PSFoFR, we can construct a piecewise linear basis function matrix  $\mathbf{A}_{\boldsymbol{s}^{\ast}} \in \mathbb{R}^{n_{cv} \times m}$, where the $i$-th row of $\mathbf{A}_{\boldsymbol{s}^{\ast}}$ contains weights to interpolate the vertices onto $\boldsymbol{s}^{\ast}$. As before, we can obtain a projection matrix $\mathbf{P}_{\boldsymbol{s}^{\ast}}=\mathbf{A}_{\boldsymbol{s}^{\ast}}\mathbf{M} \in \mathbb R^{n_{cv} \times p}$. Then, the linear predictor $\widetilde{\boldsymbol{\eta}}_{\boldsymbol{s}^{\ast}}=(\widetilde{\eta}_{ik})_{n_{cv}\times k_n}$ at the basis space is defined as
\begin{equation}\label{krigbasis}
\widetilde{\boldsymbol{\eta}}_{\boldsymbol{s}^{\ast}} = \widetilde{\mathbf{X}}_{\boldsymbol{s}^{\ast}} \widetilde{\boldsymbol{\psi}}+\mathbf{P}_{\boldsymbol{s}^{\ast}}\boldsymbol{\delta}.
\end{equation}
From $u$th posterior sample, we have $\widetilde{\eta}^{(u)}_{ik}$ and $\tau^{2(u)}$. Therefore, we can generate a sample from the posterior predictive distribution as $y_{ik}^{(u)} \sim N(\widetilde{\eta}^{(u)}_{ik},\tau^{2(u)})$. Note that $\widetilde{\mathbf{Y}}^{(u)}_{\boldsymbol{s}^{\ast}}=(y_{ik}^{(u)})_{n_{cv}\times k_n}$ is from a posterior predictive distribution in the basis space. For interpretation, we transform it back to the data space by multiplying $\boldsymbol{\Phi}$ as $\mathbf{Y}^{(u)}_{\boldsymbol{s}^{\ast}}=\widetilde{\mathbf{Y}}^{(u)}_{\boldsymbol{s}^{\ast}}\boldsymbol{\Phi} \in \mathbb R^{n_{cv}\times n_t}$, which represent predictive curves from $n_{cv}$ unobserved locations and each of them is evaluated at $n_t$ grid points. From $\lbrace \mathbf{Y}^{(u)}_{\boldsymbol{s}^{\ast}} \rbrace_{u=1}^{U}$, we compute the posterior mean $\widehat{\mathbf{Y}}_{s^{\ast}}$ (the point estimate of predictive curves) and simultaneous credible intervals (prediction uncertainties) described in Section~\ref{sec:simulcre}. 

To effectively implement PSFoFR, we need to tune the following: (1) the number of basis functions ($k_n, g_n$), (2) rank ($p$), and (3) the number of mesh vertices ($m$). Note that $m$ needs to be tuned only for point-level data. In our analysis, $k_n$ and $g_n$ are determined by minimizing the generalized cross-validation (GCV) error. Following the suggestions in \cite{hughes2013dimension, lee2022picar}, we choose the rank from the desired proportion of variation (e.g., 90\% using the leading eigencomponents). In our simulation studies, we show that setting $p$ between 5\% and 10\% of the sample size can achieve both computational and inferential efficiencies. Lastly, $m$ should be large enough to discretize spatial domain, though extremely large $m$ (i.e., too small triangles) can affect computational costs. In {\texttt{inla.mesh.2d}} function, we can control $m$ by choosing the maximum triangle edge length for the inner and outer regions of the mesh via $\texttt{max.edge}$ option. For the normalized spatial domain $\mathcal{D} = [0,1]^2$, we observe that edge lengths between 0.1 to 0.5 provide robust results. 
\subsection{Theoretical Justifications}
\label{sec:theo}

Consider the PSFoFR with the following representations.
\begin{enumerate}
\item Let $\Psi \in \mathcal{L}(\mathcal{V}) \otimes \mathcal{L}(\mathcal{T})$ be the true regression function in the data space defined as $\Psi=\sum_{k=1}^\infty\sum_{g=1}^\infty \psi_{gk}\phi_{gk}$.
\item Let ${\Psi}_{g_nk_n}=\sum_{k=1}^{k_n}\sum_{g=1}^{g_n} \psi_{gk}\phi_{gk}$ be the truncated regression function in the data space with large enough $g_n$ and $k_n$ so that $\sum_{k>k_n} \sum_{g>g_n}\psi_{gk} \phi_{gk}$  is negligible. And let $\widetilde{\boldsymbol{\psi}}_{g_nk_n}=\begin{pmatrix}
\psi_{11} &\cdots&\psi_{1k_n} \\
\vdots & \ddots & \vdots \\
\psi_{g_n1}&\cdots&\psi_{g_nk_n}\\
\end{pmatrix}$.
\item Let $\widetilde{\boldsymbol{\psi}}_{g_nk_n}^{(u)}=\begin{pmatrix}
\psi_{11}^{(u)}&\cdots&\psi_{1k_n} ^{(u)}\\
\vdots & \ddots & \vdots \\
\psi_{g_n1}^{(u)}&\cdots&\psi_{g_nk_n}^{(u)}\\
\end{pmatrix}$ be the posterior sample from $u$th iteration of MCMC.
\item  Let $\Psi_{g_nk_n}^{(u)}=\sum_{k=1}^{k_n}\sum_{g=1}^{g_n} \psi_{gk}^{(u)}\phi_{gk} \in \mathcal{L}(\mathcal{V}) \otimes \mathcal{L}(\mathcal{T})$ be a function constructed using MCMC samples.
\item Let  $\widehat{\Psi}_{g_nk_n}=\sum_{k=1}^{k_n}\sum_{g=1}^{g_n} \widehat{\psi}_{gk}\phi_{gk}$ be the estimated regression function where $\widehat{\psi}_{gk}=\frac{1}{U}\sum_{u=1}^U \psi_{gk}^{(u)}$. And let $\widehat{\boldsymbol{\psi}}_{g_n k_n} =$$\begin{pmatrix}
\widehat{\psi}_{11}&\cdots&\widehat{\psi}_{1k_n}\\
\vdots & \ddots & \vdots \\
\widehat{\psi}_{g_n1}&\cdots&\widehat{\psi}_{g_nk_n}\\
\end{pmatrix}.$ \par
\end{enumerate} 
\vspace{0.5cm}

\begin{assumption}
\label{assumption3.1}
The presumptions are summarized as follows. 
    \begin{enumerate}
    \item \textit{The basis systems $\{\phi_g\}_{g=1}^{\infty}$ and $\{\phi_k\}_{k=1}^{\infty}$ are fixed orthonormal system of $\mathcal{L}(\mathcal{V})$ and $\mathcal{L}(\mathcal{T})$. From this, we can a construct tensor basis system $\{\phi_{gk}\}_{g,k=1}^{\infty} \in \mathcal{L}(\mathcal{V}) \otimes \mathcal{L}(\mathcal{T}).$}
    \item \textit{The truncation $k_n \rightarrow \infty$ as $n \rightarrow \infty$, and truncation $g_n \rightarrow \infty$ as $n \rightarrow \infty$.}
    \item \textit{The grid points of the response curve $\lbrace t \rbrace_{t=1}^{n_t}$ are fixed and for any interval $A,B \in \mathcal{T}$, there exist constants $c_2 \geq c_1>0$ satisfy the following.
    \begin{enumerate}
    \item $c_1 n_t |A|-1 \leq \sum_{t=1}^{n_t}\mathbf{1}(t \in A) \leq \max\lbrace c_2 n_t, 1\rbrace$,
    \item $c_1 n^2_t |A||B|-1 \leq \sum_{t=1}^{n_t}\sum_{t'=1}^{n_t} \mathbf{1}(t \in A)\mathbf{1}(t' \in B)\leq \max\lbrace c_2n_t|A||B|,1\rbrace$,
    \end{enumerate}
    where $|\cdot|$ is the length of the interval. The same condition holds for the grid points of the covariate curve $\lbrace r \rbrace_{r=1}^{n_v}$.
    }
    \item \textit{$\Psi, \Psi_{g_nk_n}, \Psi_{g_nk_n}^{(u)},$ and $\widehat{\Psi}_{g_nk_n}$ are defined on the same probability space.}
    \end{enumerate}
\end{assumption}

Assumption~\ref{assumption3.1} is not restrictive. The first assumption is that we are working under a given fixed tensor product basis system defined in $\mathcal{L}(\mathcal{V}) \otimes \mathcal{L}(\mathcal{T})$. In the second assumption, $k_n, \ g_n$ depends on $n$, and both go to infinity when $n$ goes to infinity, which is commonly assumed. \cite{muller2005fglm} showed that the approximation error from the finite truncation $\sum_{k>k_n} \sum_{g>g_n}\psi_{gk} \phi_{gk}$ vanishes asymptotically by assuming that the number of basis ($k_n,g_n)$ increases as $n \rightarrow \infty$. We also follow this strategy to have convergence of $\Psi_{g_nk_n}$ to $\Psi$ over the function space $\mathcal{L}(\mathcal{V}) \otimes \mathcal{L}(\mathcal{T})$. The third assumption guarantees that each curve is densely observed over the entire function domain so that we can smooth out measurement noise; this is a standard setting in the FDA \citep{ramsay2005, kokoszka2017introduction} and is also assumed in \cite{zhou2022theory}. Note that the grid points do not have to be equally spaced for our implementation. The last assumption is for implementing computation on these functions. Note that $\|\cdot \|$ denotes the norm in $\mathcal{L}(\mathcal{V}) \otimes \mathcal{L}(\mathcal{T})$, and $\|\cdot\|_{\mathbb{R}^{g_n \times k_n}}$ is the norm in $\mathbb{R}^{g_n \times k_n}$. Under these assumptions, Theorem~\ref{3.1} shows that the function obtained through the MCMC algorithm converges in distribution to the true function. From this, we can construct the corresponding credible intervals in Section~\ref{sec:simulcre}.

\begin{theorem}
\label{3.1}
Consider the sequence of random functions $\{\Psi_{g_nk_n}^{(u)}\}_{n,u} \in \mathcal{L}(\mathcal{V})\otimes  \mathcal{L}(\mathcal{T})$. Then, $\Psi_{g_nk_n}^{(u)} \xrightarrow[]{D} \Psi$ as $n \rightarrow \infty$ and $u \rightarrow \infty$.
\end{theorem}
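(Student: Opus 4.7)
The plan is to reduce convergence of the random functions in $\mathcal{L}(\mathcal{V}) \otimes \mathcal{L}(\mathcal{T})$ to convergence of the matrices of coefficients in $\mathbb{R}^{g_n \times k_n}$, via the orthonormality of the tensor basis (Assumption \ref{assumption3.1}(1)), and then split the error into a deterministic truncation piece controlled by $n \to \infty$ and a stochastic piece controlled by $u \to \infty$ combined with Bayesian posterior consistency. Since $\Psi$ is a fixed (non-random) function, convergence in distribution to $\Psi$ is equivalent to convergence in probability to $\Psi$, so it suffices to show $\|\Psi_{g_nk_n}^{(u)} - \Psi\| \xrightarrow{P} 0$.

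First I would use the triangle inequality to write
\begin{equation*}
\|\Psi_{g_nk_n}^{(u)} - \Psi\| \;\leq\; \|\Psi_{g_nk_n}^{(u)} - \Psi_{g_nk_n}\| \;+\; \|\Psi_{g_nk_n} - \Psi\|,
\end{equation*}
and observe, by orthonormality of $\{\phi_{gk}\}_{g,k}$ in $\mathcal{L}(\mathcal{V}) \otimes \mathcal{L}(\mathcal{T})$, that
\begin{equation*}
\|\Psi_{g_nk_n}^{(u)} - \Psi_{g_nk_n}\|^2 \;=\; \sum_{g=1}^{g_n}\sum_{k=1}^{k_n} \bigl(\psi_{gk}^{(u)} - \psi_{gk}\bigr)^2 \;=\; \|\widetilde{\boldsymbol{\psi}}_{g_nk_n}^{(u)} - \widetilde{\boldsymbol{\psi}}_{g_nk_n}\|_{\mathbb{R}^{g_n \times k_n}}^{2},
\end{equation*}
and similarly $\|\Psi_{g_nk_n} - \Psi\|^2 = \sum_{(g,k)\notin[1,g_n]\times[1,k_n]} \psi_{gk}^2$. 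Assumption \ref{assumption3.1}(2) together with the definition of $\Psi_{g_nk_n}$ (where the residual tail $\sum_{g>g_n,k>k_n}\psi_{gk}\phi_{gk}$ is negligible for large $n$) then forces the second (deterministic) term to vanish as $n \to \infty$. This is the standard truncation-error argument used, for instance, in \cite{muller2005fglm}.

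The stochastic piece $\|\widetilde{\boldsymbol{\psi}}_{g_nk_n}^{(u)} - \widetilde{\boldsymbol{\psi}}_{g_nk_n}\|_{\mathbb{R}^{g_n \times k_n}}$ is handled in two stages. For fixed $n$, ergodicity of the Gibbs sampler constructed in Appendix A implies that $\widetilde{\boldsymbol{\psi}}_{g_nk_n}^{(u)}$ converges in distribution, as $u \to \infty$, to a draw from the posterior $\pi(\widetilde{\boldsymbol{\psi}}_{g_nk_n} \mid \widetilde{\mathbf{Y}}, \widetilde{\mathbf{X}})$. Letting $n \to \infty$ jointly, Bayesian posterior consistency in the conjugate Gaussian spatial regression model (invoking a standard posterior contraction theorem in the spirit of Ghosal--van der Vaart, applied to the basis-space model \eqref{fregmodeltrunc} with the projection reduction of Section~\ref{sec:psfofr}) shows that the posterior mass concentrates on shrinking neighborhoods of the true coefficient matrix $\widetilde{\boldsymbol{\psi}}_{g_nk_n}$ at a rate that dominates the growth of $g_n k_n$. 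Combining the two stages via a diagonal-subsequence argument yields $\|\widetilde{\boldsymbol{\psi}}_{g_nk_n}^{(u)} - \widetilde{\boldsymbol{\psi}}_{g_nk_n}\|_{\mathbb{R}^{g_n \times k_n}} \xrightarrow{P} 0$. Slutsky's theorem then assembles the two pieces and gives the claim.

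The main obstacle will be the posterior-consistency step. One has to verify that the prior puts sufficient mass on shrinking neighborhoods of the truth while keeping the Kullback--Leibler covering numbers under control as the parameter dimension $g_n k_n$ and the number of spatial locations $n$ both grow. This requires care because the basis-space likelihood in \eqref{fregmodeltrunc} depends on the between-curve covariance $\boldsymbol{\Lambda}$ (Mat\'ern or intrinsic autoregressive) whose log-determinant grows with $n$, and because the projection via $\mathbf{P}$ induces an $O(p)$-dimensional latent structure; Assumption \ref{assumption3.1}(3), which guarantees densely observed curves, is what makes the empirical Gram matrix $\widetilde{\mathbf{X}}'\widetilde{\mathbf{X}}/n$ well-behaved and is the key ingredient that makes this verification feasible. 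All remaining steps, including the application of Slutsky and the diagonal argument justified by Assumption \ref{assumption3.1}(4), are routine once this is in place.
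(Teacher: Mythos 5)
Your decomposition is structurally the same one the paper uses: a chain-to-posterior step in $u$ for fixed truncation, plus a deterministic truncation step in $n$, glued together on the coefficient matrices via orthonormality of the tensor basis. The difference lies in what each argument demands of the middle link. The paper's proof simply writes $\widetilde{\boldsymbol{\psi}}_{g_nk_n}^{(u)} \xrightarrow[]{D} \widetilde{\boldsymbol{\psi}}_{g_nk_n}$ as a consequence of ergodicity --- that is, it identifies the stationary posterior $\pi(\widetilde{\boldsymbol{\psi}}\mid \mathbf{Y})$ with the true truncated coefficient matrix and never separately argues that the posterior concentrates there; the continuous mapping theorem and the truncation assumption then finish the job. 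You, correctly observing that $\Psi$ is deterministic so that convergence in distribution to it is equivalent to convergence in probability, recognize that this identification is precisely a posterior-consistency statement, and you flag it as the main obstacle, proposing a Ghosal--van der Vaart contraction argument for the growing-dimensional model \eqref{fregmodeltrunc}. That is the honest reading of what a convergence-to-the-truth claim requires, but you do not carry it out: verifying prior-mass and entropy conditions while $g_nk_n$, $n$, and the spatial covariance all grow is nontrivial and is not implied by Assumption~\ref{assumption3.1}. So as a self-contained proof your write-up is incomplete at exactly the step you yourself identify; as a diagnosis it is sharper than the paper's own argument, which secures the distributional statement only by absorbing that same step into notation. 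If you instead match the paper's level of rigor --- read $\widetilde{\boldsymbol{\psi}}_{g_nk_n}$ as a draw from the limiting posterior and take $\Psi_{g_nk_n}\xrightarrow[]{D}\Psi$ as granted by the negligible-tail definition of the truncation --- your argument collapses to theirs via the continuous mapping theorem and the triangle inequality, and no posterior-contraction machinery is needed.
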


\begin{proof}
In terms of the total variation distance, an ergodic Markov chain $\left\{\widetilde{\boldsymbol{\psi}}_{g_n k_n}^{(u)}\right\}_{u \in \mathbb{N}}$ converges to the stationary posterior distribution $\pi(\widetilde{\boldsymbol{\psi}}|\textbf{Y})$. Therefore, we have $\widetilde{\boldsymbol{\psi}}_{g_n k_n}^{(u)} \xrightarrow[]{D} \widetilde{\boldsymbol{\psi}}_{g_n k_n}$ as $u \rightarrow \infty$ for every $g_n$ and $k_n \in \mathbb{N}$. Then, by the continuous mapping theorem $\Psi_{g_nk_n}^{(u)} \xrightarrow[]{D} \Psi_{g_nk_n}$ for every $g_n$ and $ k_n$. Since $\Psi_{g_nk_n} \xrightarrow[]{D} \Psi$ as $n \rightarrow \infty$, we have $\Psi_{g_nk_n}^{(u)} \xrightarrow[]{D} \Psi$ as $n \rightarrow \infty$ and $u \rightarrow \infty$. 
\end{proof}

Theorem~\ref{3.2} shows that the estimated regression function $\widehat{\Psi}_{g_nk_n}^{(u)}$ converges in probability to the true regression function $\Psi$ as the $n$ (sample size) and $u$ (MCMC sample size) go to infinity. From this, we can guarantee that the point estimate $\widehat{\Psi}_{g_nk_n}^{(u)}$ obtained from the posterior mean is consistent, which is stronger than converge in distribution. 

\begin{theorem}
\label{3.2}
    Consider the estimated regression function $\widehat{\Psi}_{g_nk_n}$. Under aforementioned assumptions, we have $\widehat{\Psi}_{g_nk_n} \xrightarrow{P} \Psi$ as $n \rightarrow \infty$ and $u \rightarrow \infty$.
\end{theorem}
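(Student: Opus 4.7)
The plan is to decompose the error $\|\widehat{\Psi}_{g_nk_n} - \Psi\|$ via the triangle inequality into an estimation part and a truncation part, and then show each piece vanishes under the appropriate limit. Specifically, I would write
\begin{equation}
\|\widehat{\Psi}_{g_nk_n} - \Psi\| \;\le\; \|\widehat{\Psi}_{g_nk_n} - \Psi_{g_nk_n}\| \;+\; \|\Psi_{g_nk_n} - \Psi\|,
\end{equation}
and handle the two summands separately. The second (truncation) summand equals $\|\sum_{k>k_n}\sum_{g>g_n}\psi_{gk}\phi_{gk}\|$, which is negligible as $n\to\infty$ by Assumption~\ref{assumption3.1}(2) together with the same truncation control used to obtain $\Psi_{g_nk_n} \xrightarrow{D} \Psi$ in the proof of Theorem~\ref{3.1}; since this is a deterministic quantity going to zero, convergence in probability is immediate.

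For the estimation summand, I would use the orthonormality of the tensor basis (Assumption~\ref{assumption3.1}(1)) and Parseval's identity to pass to the coefficient representation:
\begin{equation}
\|\widehat{\Psi}_{g_nk_n} - \Psi_{g_nk_n}\|^2 \;=\; \|\widehat{\boldsymbol{\psi}}_{g_nk_n} - \widetilde{\boldsymbol{\psi}}_{g_nk_n}\|_{\mathbb{R}^{g_n\times k_n}}^{\,2} \;=\; \sum_{g=1}^{g_n}\sum_{k=1}^{k_n} (\widehat{\psi}_{gk} - \psi_{gk})^2.
\end{equation}
For each fixed pair $(g,k)$, I would split $\widehat{\psi}_{gk} - \psi_{gk} = (\widehat{\psi}_{gk} - E[\psi_{gk}\mid\mathbf{Y}]) + (E[\psi_{gk}\mid\mathbf{Y}] - \psi_{gk})$. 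The first piece is the Monte Carlo error of the ergodic average $\widehat{\psi}_{gk} = U^{-1}\sum_{u=1}^{U}\psi_{gk}^{(u)}$, which converges almost surely to the posterior mean $E[\psi_{gk}\mid\mathbf{Y}]$ as $u\to\infty$ by the MCMC ergodic theorem that was already invoked in Theorem~\ref{3.1}. The second piece is the Bayesian estimation error of the posterior mean relative to the truth, which tends to zero in probability as $n\to\infty$ by posterior consistency for the linear Gaussian basis-space model~\eqref{fregmodeltrunc}; combining these with the continuous mapping theorem on the finite-dimensional map $\widetilde{\boldsymbol{\psi}} \mapsto \sum_{g,k}\widetilde{\psi}_{gk}\phi_{gk}$ yields $\widehat{\Psi}_{g_nk_n}-\Psi_{g_nk_n}\xrightarrow{P}0$, and another application of the triangle inequality finishes the argument.

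The main obstacle, in my view, is the posterior-mean consistency step: although the basis-space model is linear Gaussian (so that conjugate calculations give explicit posterior means for $\widetilde{\boldsymbol{\psi}}$ conditional on the variance parameters), the dimension $g_n\times k_n$ of the coefficient matrix grows with $n$, and the spatial random effect $\widetilde{\mathbf{W}}$ (or its projection $\mathbf{P}\boldsymbol{\delta}$) couples observations across locations. One has to argue that the growth rates $g_n, k_n$ permitted by Assumption~\ref{assumption3.1}(2) are slow enough relative to $n$ and the densely observed grid (Assumption~\ref{assumption3.1}(3)) for the posterior to concentrate on the truth coefficient-by-coefficient, e.g.\ by a Bernstein--von Mises type argument or by controlling the posterior variance of $\widehat{\psi}_{gk}$ directly through the Gaussian conditional posterior derived in Appendix~A. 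Once that ingredient is secured, the remainder of the argument is a routine triangle-inequality and continuous-mapping exercise, strengthening the in-distribution conclusion of Theorem~\ref{3.1} to in-probability convergence of the point estimate.
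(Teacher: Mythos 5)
Your proposal follows essentially the same route as the paper's proof: a triangle-inequality decomposition of $\|\widehat{\Psi}_{g_nk_n}-\Psi\|$ into an estimation term and a truncation term, with the truncation term handled deterministically via Assumption~\ref{assumption3.1}(2) and the estimation term handled by passing to the coefficient matrices through orthonormality and invoking the ergodic law of large numbers for the chain $\{\widetilde{\boldsymbol{\psi}}_{g_nk_n}^{(u)}\}_u$. The one substantive difference is that you split the estimation error further, into the Monte Carlo error $\widehat{\psi}_{gk}-E[\psi_{gk}\mid\mathbf{Y}]$ and the posterior-mean bias $E[\psi_{gk}\mid\mathbf{Y}]-\psi_{gk}$, and you correctly flag the second piece (posterior consistency as $g_n k_n$ grows with $n$) as the step that still needs an argument. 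The paper does not make this split: it asserts directly that $\widehat{\boldsymbol{\psi}}_{g_nk_n}\xrightarrow{P}\widetilde{\boldsymbol{\psi}}_{g_nk_n}$ (the \emph{true} coefficient matrix) ``by the weak law of large numbers'' for the ergodic, stationary chain, thereby identifying the posterior mean with the truth without a separate concentration or Bernstein--von Mises argument. So your finer decomposition is not a detour --- it exposes exactly the step the paper elides --- and apart from that your argument and the paper's coincide; neither supplies the posterior-consistency ingredient you name as the main obstacle.
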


\begin{proof}
First, we fix $g_n$ and $k_n$. Since $\left\{\widetilde{\boldsymbol{\psi}}_{g_n k_n}^{(u)}\right\}_{u \in \mathbb{N}}$ is an ergodic and stationary Markov chain, we have $\widehat{\boldsymbol{\psi}}_{g_n k_n} \xrightarrow{P} \widetilde{\boldsymbol{\psi}}_{g_nk_n}$ as $u \rightarrow \infty$ by the weak law of large numbers. As
$$P(\|\Psi_{g_nk_n}^{(u)} - \Psi_{g_nk_n}\|^2 \geq \frac{\epsilon}{2}) = P(\|\widehat{\boldsymbol{\psi}}_{g_n k_n} - \widetilde{\boldsymbol{\psi}}_{g_n k_n}\|^2_{\mathbb{R}^{g_n \times k_n}} \geq \frac{\epsilon}{2}) \rightarrow 0,$$
we have $\Psi_{g_nk_n}^{(u)} \xrightarrow{P} \Psi_{g_nk_n}$ as $u \rightarrow \infty$ for every $g_n, k_n \in \mathbb{N}$. Since $\Psi_{g_nk_n}$ is truncated version of $\Psi$, we have $\Psi_{g_nk_n} \xrightarrow{P} \Psi$ as $n \rightarrow \infty$ (i.e. $g_n, k_n \rightarrow \infty$), resulting in $P(\|\Psi_{g_nk_n} - \Psi\|^2 \geq \frac{\epsilon}{2}) \rightarrow 0$. Then,
\begin{align*}
P(\|\Psi_{g_nk_n}^{(u)} - \Psi\|^2 \geq \epsilon) &\leq P(\|\Psi_{g_nk_n}^{(u)} - \Psi_{g_nk_n}\|^2 + \|\Psi_{g_nk_n} - \Psi\|^2  \geq \epsilon) \\
&\leq P(\|\Psi_{g_nk_n}^{(u)} - \Psi_{g_nk_n}\|^2 \geq \frac{\epsilon}{2}) + P(\|\Psi_{g_nk_n} - \Psi\|^2 \geq \frac{\epsilon}{2}) \\
&\rightarrow 0+0.
\end{align*}
as $n \rightarrow \infty$ and $u \rightarrow \infty$.
\end{proof}

The ergodicity of the Markov chain $\{\widetilde{\boldsymbol{\psi}}^{(u)}_{g_nk_n}\}$ is important to ensure convergence. For PSFoFR, the Metropolis-Hastings algorithm with a random walk proposal produces an ergodic Markov chain. 
This is directly from Corollary 2 in \cite{tierney1994markov} and Lemmas 1.1, 1.2 in \cite{mengersen1996rates}, which implies that an aperiodic, irreducible, positive Harris recurrent Markov chain is ergodic. In our MCMC implementation, we used a random walk proposal in {\tt{nimble}}. Since we represent a spatial dependence with a fixed basis matrix $\mathbf{P}$ with independent errors $\widetilde{\boldsymbol{\epsilon}}$, Theorems~\ref{3.1}, \ref{3.2} do not require more than the ergodicity of the standard linear models. However, investigating the convergence rate of the algorithm is a challenging problem because these are sampler specific. In generalized linear mixed models (GLMMs), \cite{christensen2001geometric} showed the geometric ergodicity for random walk Metropolis-Hastings and Langevin-Hastings. \cite{roman2015geometric} showed the geometric ergodicity of the block Gibbs Markov chain in GLMMs using a geometric drift condition. Extending such results to functional regression would be an interesting direction for future research. 

\subsection{Simultaneous Credible Intervals for Functional Parameters}
\label{sec:simulcre}

We construct simultaneous credible intervals \citep{meyer2015bayesian} for both the predicted curves and the regression function. Here, we describe the credible intervals in terms of the regression function without loss of generality. The procedure can be described as follows. 
\begin{enumerate}
\item We construct $\Psi_{g_nk_n}^{(u)}(r,t)=\sum_{k=1}^{k_n}\sum_{g=1}^{g_n} \psi_{gk}^{(u)}\phi_{gk}(r,t)$ from the MCMC samples $(\psi^{(u)}_{gk})_{g_n \times k_n}$, for $u=1,\cdots,U$.
\item We evaluate $\Psi_{g_nk_n}^{(u)}(r, t)$ for each grid point $r = 1,\cdots, n_v$ and $t = 1, \cdots, n_t$, where $1 < 2 < \cdots < n_v$ and $1 < 2 < \cdots < n_t$, respectively.
\item We obtain ${M_\alpha}$ from the $(1-\alpha)$ sample quantile of 
 $$Z^{(u)} = \max_{1 \leq r \leq n_v, 1 \leq t \leq n_t}\left|\frac{\Psi_{g_nk_n}^{(u)}(r, t)-\widehat{\Psi}_{g_nk_n}(r, t)}{\text{SD}(\Psi_{g_nk_n}(r, t))}\right|,$$
where $\widehat{\Psi}_{g_nk_n}(r, t)$ and $\text{\text{SD}}(\Psi_{g_nk_n}(r, t))$ be the sample mean and sample standard deviation of $\left\{\Psi_{g_nk_n}^{(u)}(r, t)\right\}_{u=1}^U$, respectively.
\item We construct $(1-\alpha)$ simultaneous credible intervals can be obtained as
$$I(\alpha, r, t)=\widehat{\Psi}_{g_nk_n}(r, t)\pm {M_\alpha} \text{SD}(\Psi_{g_nk_n}(r, t)).$$
\end{enumerate}
In Step 1 and Step 2, we reconstruct the regression functions from posterior samples and evaluate them at grid points; such steps are illustrated in Figure~\ref{outline_credible}(a). In Step 3, we standardize the reconstructed regression functions $\left\{\Psi_{g_nk_n}^{(u)}(r,t)\right\}_{u=1}^U$ (Figure~\ref{outline_credible}(b)) and obtain the maximum absolute value across the entire grid (Figure~\ref{outline_credible}(c)). Then, we use the $(1-\alpha)$ quantile of the maximum variability to compute the simultaneous credible intervals. This implies that the simultaneous credible intervals consider the distribution of the largest variability across the entire grid (i.e., $\left\{Z^{(u)}\right\}_{u=1}^{U}$) instead of only using point-wise variability. 
\begin{figure}[htbp]
     \centering
         \centering
         \includegraphics[width=\linewidth]{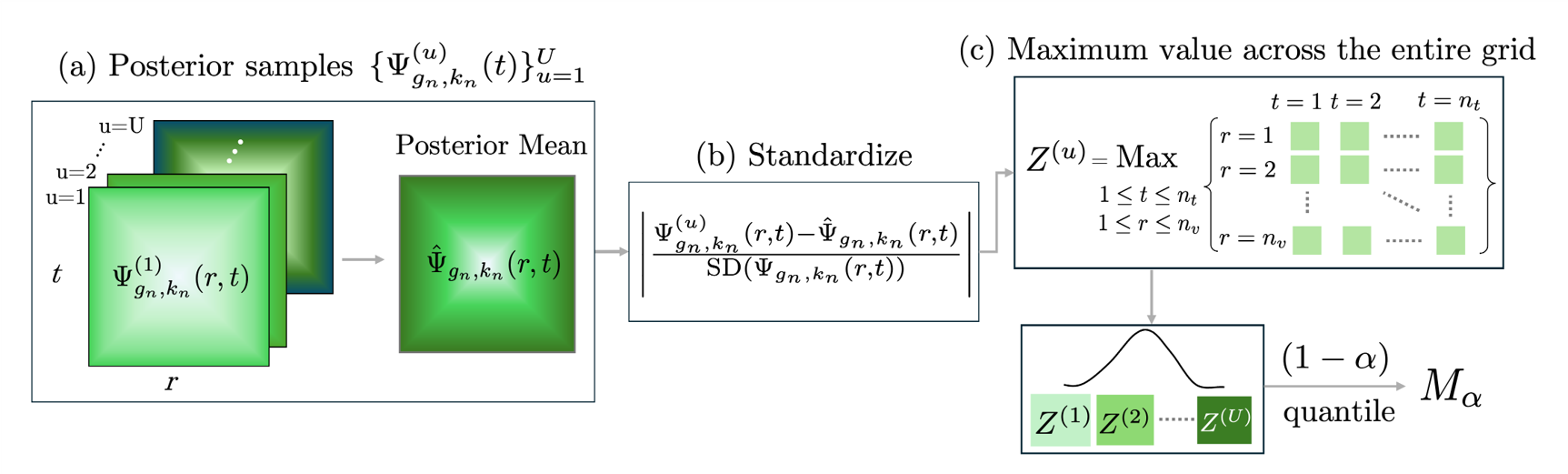}
         \caption{A graphical illustration for constructing simultaneous credible intervals. }
         \label{outline_credible}
\end{figure}

To detect the significant region of the regression function, we also compute Simultaneous Band Scores (SimBaS) \citep{meyer2015bayesian} by converting simultaneous intervals. As in \cite{meyer2015bayesian}, we first construct $I(\alpha, r, t)$ for multiple levels of $\alpha$. Then, for each $(r, t)$, we determine the minimum $\alpha$ at which each $I(\alpha, r, t)$ excludes zero (i.e., $P_\text{SimBaS}(r, t) = \min \{\alpha: 0 \notin I(\alpha, r, t)\}$). SimBaS can be computed as 
$$P_\text{SimBaS}(r, t) = \frac{1}{U} \sum_{u=1}^U 1 \left\{\left|\frac{\widehat{\Psi}_{g_nk_n}(r, t)}{\text{SD}(\Psi_{g_nk_n}(r, t))}\right| \leq  Z^{(u)} \right\}. $$
Here, we can specify the level $\alpha$ and identify the significant region $(r, t)$ that satisfies $P_\text{SimBaS}(r, t)<\alpha$. This is equivalent to checking whether $I(\alpha, r, t)$ covers zero at a chosen $\alpha$-level. 

There is a connection between the simultaneous credible intervals \citep{meyer2015bayesian} and the contour avoiding regions \citep{bolin2015excursion}. The contour avoiding region is defined as the union of two sets: one where the set is significantly above the reference level and another where it is significantly below the reference level. If we set the reference level as 0, we can detect the statistically significant region. Specifically, we can define the contour avoidance function \citep{bolin2015excursion} as
$$F_0(r,t)=\sup\{1-\alpha: (r,t) \in E_{0,\alpha}\},$$ 
where $E_{0,\alpha}$ is the contour avoiding region with probability $1-\alpha$ at reference level 0. By specifying $\alpha = 0.05$, we can identify the statistically significant region $(r,t)$ that satisfies $F_0(r,t) > 1-\alpha$. We provide computational details about contour avoiding regions for the regression function in the supplementary material (Appendix C). We compute the contour avoiding regions using {\tt{excursions}} package \citep{excursion2023} in {\tt R}. We observe that the simultaneous credible intervals and the contour avoiding regions show similar patterns in general, though the contour avoiding regions are slightly wider. Note that the direct comparison is challenging because the package assumed a latent Gaussian process for the Monte Carlo samples $\left\{\Psi_{g_nk_n}^{(u)}(r,t)\right\}_{u=1}^U$ in its computation. 

\section{Simulation Study}
\label{sec:simulation}

In this section, we apply the proposed method to simulated spatial functional datasets. To represent the performance of PSFoFR, we compare it with FoFR (i.e., independent model), SFoFR and standard UK. We implement PSFoFR, SFoFR, and FoFR in {\texttt{nimble}} \citep{de2017programming}, and implement UK through {\texttt{R}} package {\texttt{fdagstat}} \citep{fdagstat_2017}. We show the main results with B-spline basis functions and provide fitted results using the Fourier and FPC basis functions in the supplementary material (Appendix D). We also provide a simulation study with a different regression function in the supplementary material (Appendix E). We observe that the performance of PSFoFR is robust in different scenarios. We run MCMC algorithms until the Monte Carlo standard errors \citep{jones2006fixed} for PSFoFR are at or below 0.04 (70,000 iterations). All codes were run on 16-core AMD Ryzen 9 7950X processors. The source code can be downloaded from \url{https://github.com/codinheesang/PSFoFR}. The simulation procedures are summarized as follows. 
\begin{enumerate}
\item We simulate 1,000 locations in the $\mathcal{D} = [0,1]^2$ domain. For each location,  we generate $\{x_{ig}\}_{1,000 \times 15}$ from a normal distribution with the mean as the $x$-axis of the location and a variance of 1. By using 15 B-spline basis functions, we create $X_{\boldsymbol{s}_i}(r)=  \sum_{g=1}^{15} x_{ig}\phi_g(r)$ for $i=1,\cdots,1,000$.
\item Then we generate $\{{w}_{ik}\}_{1,000 \times 15}$ from a Gaussian process with a mean 0 and a Mat\'{e}rn covariance, where the distance matrix is calculated from the simulated locations. We use a variance $0.5$, range $0.2$, and a smoothness 0.5. Similarly, we create $W_{\boldsymbol{s}_i}(t)=  \sum_{k=1}^{15} w_{ik}\phi_k(t)$ for $i=1,\cdots,1,000$.
\item Following \cite{meyer2015bayesian}, we use the true regression function defined over the $[0,225] \times [0,225]$ domain as
$$\Psi(r,t) = \frac{7}{500} \frac{1}{\sqrt{0.006\pi}} \exp\left[-\frac{1}{0.006} \left(\frac{t-r}{225}\right)^2\right].$$
Then, from the finite basis expansion $\Psi(r, t) = \sum_{k=1}^{15}\sum_{g=1}^{15}\psi_{gk}\phi_{gk}(r,t)$, we obtain $\psi_{gk}$ for $g,k=1,\cdots,15$.
\item We simulate $\widetilde{\mathbf{Y}}=(y_{ik})_{1,000\times 15}$ from \eqref{fregmodeltrunc} and obtain $Y_{\boldsymbol{s}_i}(t) = \sum_{k=1}^{15} y_{ik}\phi_k(t)$ for $i=1,\cdots,1,000$. When we do not consider the measurement error, we simulate $\{y_{ik}\}_{1,000\times 15}$ from $\widetilde{\mathbf{Y}} = \widetilde{\mathbf{X}} \widetilde{\boldsymbol{\psi}}+\widetilde{\mathbf{W}}$. 
\end{enumerate}


We use $n=700$ samples for training and $n_{cv}=300$ samples for prediction. For UK, we use a default Gaussian model for variogram fitting and plug in the simulated $\{x_{ig}\}_{n \times 15}$ as covariates; note that UK cannot directly model functional covariates. For PSFoFR, SFoFR, and FoFR, we run the MCMC algorithm for 70,000 iterations, with 50,000 discarded for burn-in, and 1,000 thinned samples obtained from the remaining 20,000. We choose the number of basis functions $k_n$ and $g_n$ by minimizing the generalized cross-validation (GCV) error. For PSFoFR, we use 1,520 triangular meshes to construct the projection matrix over the continuous spatial domain. We study PSFoFR with different rank values $p=50, 100, 150$, corresponding to 5\%, 10\%, and 15\% of the sample size. To assess the performance of the algorithms, we compare the mean square prediction error (MSPE) defined as $\sqrt{\frac{1}{n_{cv}n_t}\sum_{\forall \boldsymbol{s}^{\ast},t} (Y_{\boldsymbol{s}^{\ast}}(t)-\widehat{Y}_{\boldsymbol{s}^{\ast}}(t))^2}$ with $n_t=225$ grid points over the $[0,225]$ domain. For functional regression models, we also compute the mean square error (MSE) of the regression function as $\sqrt{\frac{1}{n_t n_v}\sum_{\forall r,t}(\Psi(r,t)-\widehat{\Psi}(r,t))^2}$ with $n_t \times n_v = 225 \times 225$ grid points over the $[0,225]\times[0,225]$ domain.

\newcolumntype{M}[1]{>{\centering\arraybackslash}m{#1}}
\newcolumntype{P}[1]{>{\centering\arraybackslash}p{#1}}
\begin{table}[htbp]
  \centering
    \begin{tabular}{M{30mm}M{20mm}M{20mm}M{20mm}M{20mm}} \toprule
    \textbf{Model} &\multicolumn{1}{M{20mm}}{\textbf{MSPE}}&\multicolumn{1}{M{20mm}}{\textbf{Coverage}} &\multicolumn{1}{M{20mm}}{\textbf{MSE }} &\multicolumn{1}{M{20mm}}{\textbf{Time}} \\ \midrule
        PSFoFR $(p=50)$ & 0.106&  99   &0.004& 34 \\ \cmidrule(l){2-5}
        PSFoFR $(p=100)$ & 0.099 &  99  &0.004& 39 \\ \cmidrule(l){2-5}
        PSFoFR $(p=150)$  & 0.098 &  99  &0.004& 43  \\ \cmidrule(l){1-5}
        SFoFR & 0.091 &  98 &0.004& 287  \\  \cmidrule(l){1-5}
        FoFR & 0.182 &   99  &0.004& 19  \\  \cmidrule(l){1-5}
        UK   &0.123  &-&-& $< 1$ \\  \bottomrule
    \end{tabular}
    \caption{Inference results for the simulated dataset. MSPE, prediction coverage (\%), MSE, and computing time (min) are reported.}
    \label{tab1}
\end{table}

\begin{figure}[htbp]
     \centering
        \centering
         \includegraphics[width=\linewidth]{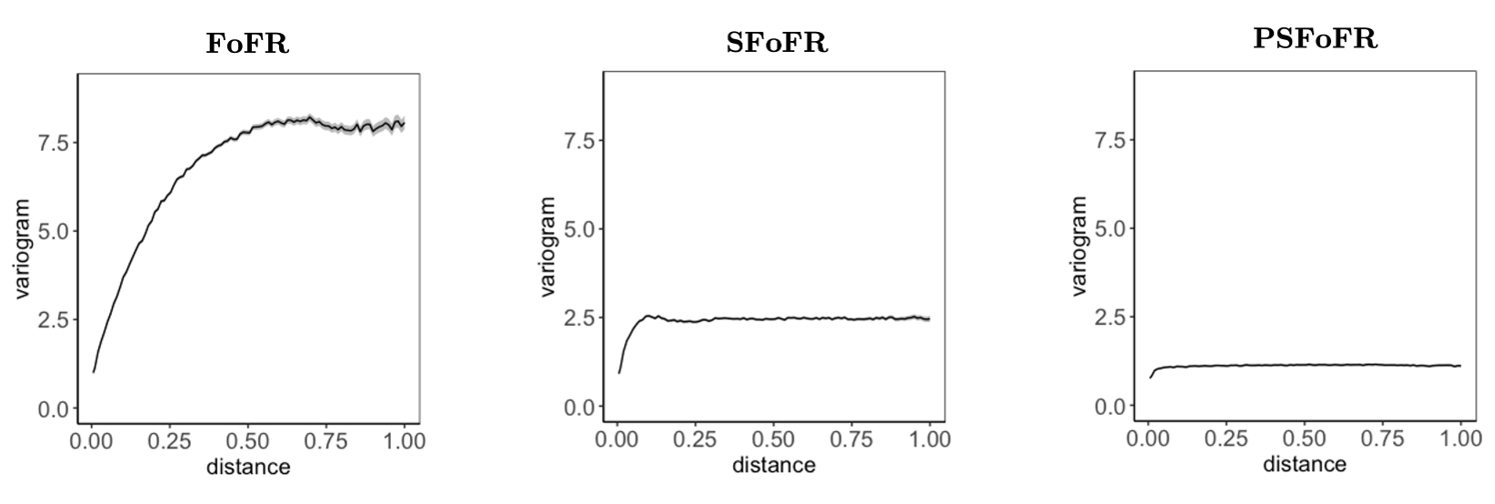}
         \caption{Fitted trace-variograms for the simulated datasets. Black lines indicate the point estimates obtained from posterior means. Grey areas represent the corresponding 95\% credible intervals.}
         \label{fig:SimulVario}
\end{figure}

Table~\ref{tab1} shows that functional regression approaches can accurately estimate $\Psi(r,t)$ with a small MSE, while UK cannot. Compared to SFoFR, PSFoFR is computationally much faster due to the projection step. The performance of PSFoFR is similar across different ranks. This implies that choosing $p$ as 5\% to 10\% of the samples can quickly provide accurate inference results, as suggested in \cite{hughes2013dimension, lee2022picar}. We observe that FoFR has the highest MSPE value. To analyze the correlation among residuals, we compute the trace-variogram through {\tt fdagstat} package. Figure~\ref{fig:SimulVario} indicates that residuals from FoFR are more correlated than those from PSFoFR and SFoFR. To quantify uncertainties, we compute 95\% simultaneous credible intervals of the trace-variograms. We observe that the uncertainties about the variogram estimates are small across all models.

\begin{figure}[htbp]
     \centering
         \centering
         \includegraphics[width=\linewidth]{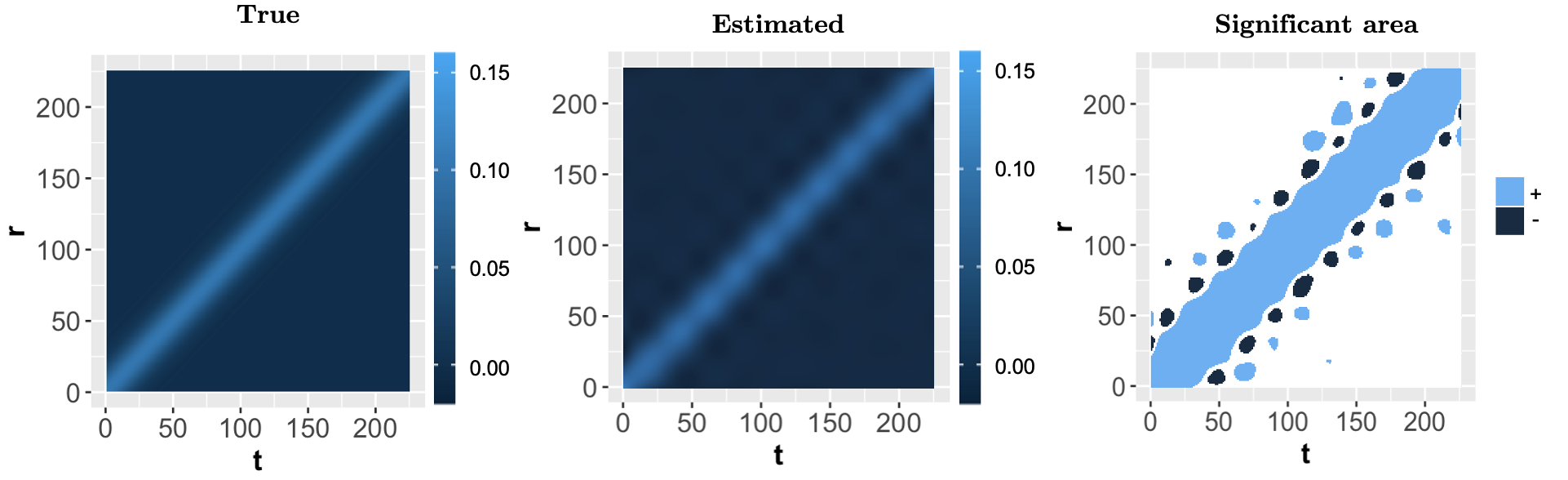}
         \caption{The estimated $\Psi(r,t)$ obtained from the posterior mean of PSFoFR. Significant areas in the functions are detected from $\alpha=0.05$. Sky blue colors represent positive significant areas, whereas dark blue colors represent negative significant areas. 
         }
         \label{fig:SimulParam}
\end{figure}

Here, we visualize the inference results from PSFoFR with $p=50$ since the results from other rank choices are quantitatively similar. Figure~\ref{fig:SimulParam} indicates that the estimated regression functions from PSFoFR are similar to the true regression function. Furthermore, the significant areas detected from the 95\% simultaneous credible intervals and SimBaS are well aligned with the important regions of the true function. Note that UK approaches cannot quantify such relationships between functional variables. 

\begin{figure}[htbp]
     \centering
         \centering
         \includegraphics[width=\linewidth]{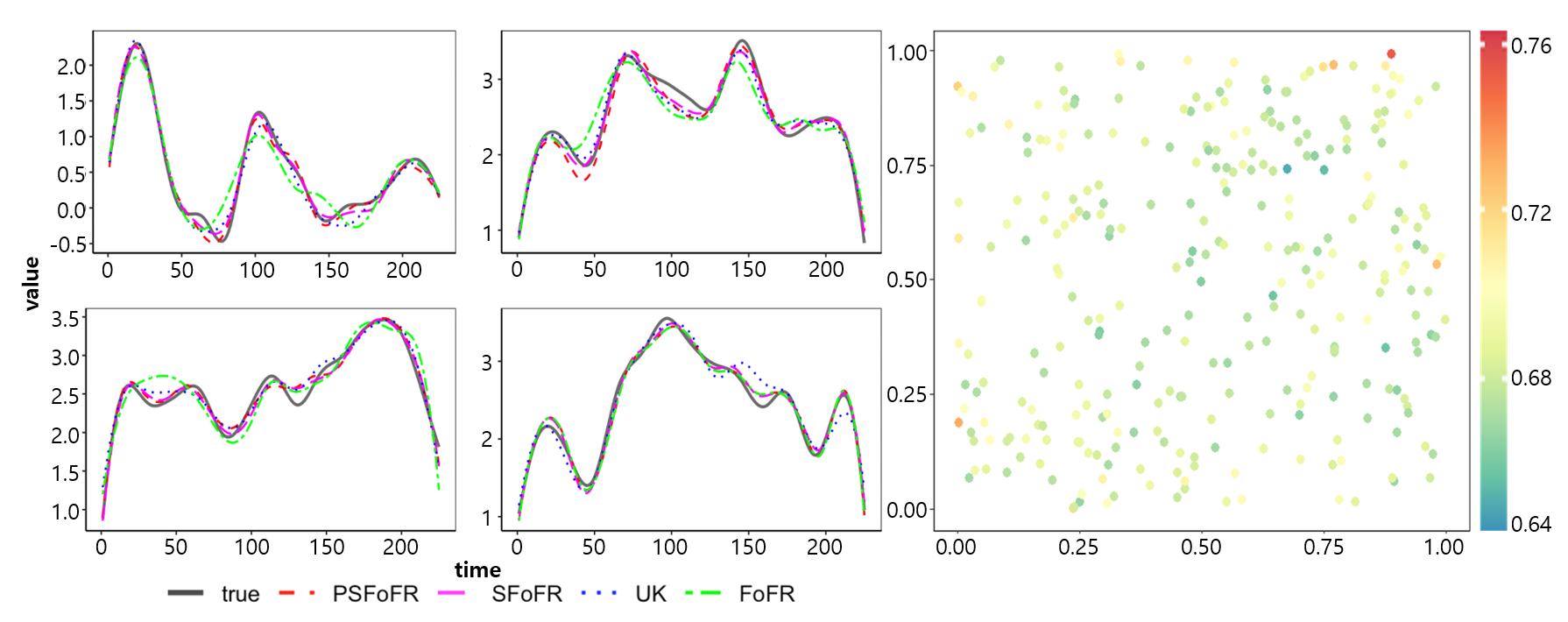}
         \caption{Visualizations of predicted curves (left panel) and prediction uncertainties (right panel) from PSFoFR. }
         \label{fig:SimulKrig}
\end{figure}

Figure~\ref{fig:SimulKrig} provides four predicted curves from 300 test observations. All methods can capture the trend of the true response functions, while FoFR shows slight differences, which are aligned with the results in Table~\ref{tab1}. We compute the mean coverage for 300 predicted curves. For each curve, we calculate the proportion of $n_t$ grid points where the 95\% simultaneous credible intervals include the true curve value. Then, we take the average of the proportions computed from the 300 predicted curves to obtain mean coverage. Table~\ref{tab1} indicates that the credible intervals from all models can cover the true curves, although the coverages are slightly higher than the nominal rate of 95\%. Note that coverage of FoFR is also high even with lower MSPE compared to other spatial models, which is due to the wider credible intervals. We also provide the visualization of the simultaneous credible intervals for predicted curves from each method in the supplementary material (Appendix F). Figure~\ref{fig:SimulKrig} also visualizes the prediction uncertainties for each spatial location. We calculate the length of the 95\% credible intervals for every time grid and compute the mean of them for each test spatial location. In general, prediction uncertainties are larger at the boundaries of the region, which are more difficult to predict due to a lack of neighboring information.


\section{Application}
\label{sec:appli}

In this section, we apply our methods to two real data examples with spatial functional variables: (1) PM2.5 data (continuous domain) and (2) mobility data (discrete domain). For both cases, 
the proposed approach can estimate regression functions and detect the significant region of the estimated functions. Furthermore, our method can provide more (or comparable) accurate kriging results than UK in the point-level data example (PM 2.5 data). 

\subsection{PM2.5 Data} 
\label{sec:pm25}

Particle pollution from fine particulates (PM2.5) is a crucial indicator of air quality. Long-term exposure to PM2.5 can increase the risk of health problems, such as heart disease, low birth weight, and asthma. Note that PM2.5 is influenced by various environmental factors, meteorological conditions, and other pollutant emissions. It is well known that anthropogenic emissions of nitrogen oxides (NOx) and carbon monoxide (CO) can affect PM2.5 concentrations on an intercontinental scale \citep{Leibensperger2011PMNox}. Especially, NOx emissions affect air quality since the reduction of NOx emission consecutively reduces $\text{NO}_{3^{-}}$ and  $\text{NH}_{4^{+}}$ which compose PM2.5 particles. Therefore, studying the functional relationship between PM2.5 and NOx would be useful for establishing public health policy. Furthermore, predicting PM2.5 curves for unobserved locations is crucial for public health management. 

Here, we study the PM2.5 dataset collected from the Ministry of Environment Japan \citep{JapanPMData} observed at 991 monitoring stations. The dataset contains daily mean values of PM2.5 (January 20, 2024, to January 31, 2024) and NOx values (January 8, 2024, to January 19, 2024); therefore, $n_t=n_v=12$ in this example. We use the log PM2.5 values as a functional response and the NOx values as a functional covariate in the proposed model. 
For UK, we use a default Gaussian model for the variogram fitting. For PSFoFR, we use $k_n=11$ and $g_n=11$ orthonormalized B-spline basis functions; the number of basis functions is chosen by minimizing the GCV error. To construct the projection matrix over the continuous spatial domain, we use 1,719 triangular meshes and set $p$ as about 5\% of the sample size (i.e., $p=46$). We run MCMC algorithms until the Monte Carlo standard errors \citep{jones2006fixed} for PSFoFR are at or below 0.04 (50,000 iterations). We use 638 samples for training and the remaining 273 samples for validation. For the same 50,000 iterations, PSFoFR takes 11.80 minutes for computation, while SFoFR takes 94 minutes, indicating that PSFoFR is computationally more efficient. Furthermore, to achieve the same Monte Carlo standard error criteria (0.04), SFoFR needs two times longer iterations; the mixing of the chain is slower than those from PSFoFR. FoFR takes about 5.8 minutes because the model does not consider spatial correlation. Compared to functional regression models, UK only takes about a second.

\begin{figure}[htbp]
     \centering
     \begin{subfigure}[b]{0.47\linewidth}
         \centering
        \medskip         
         \includegraphics[width=\linewidth]{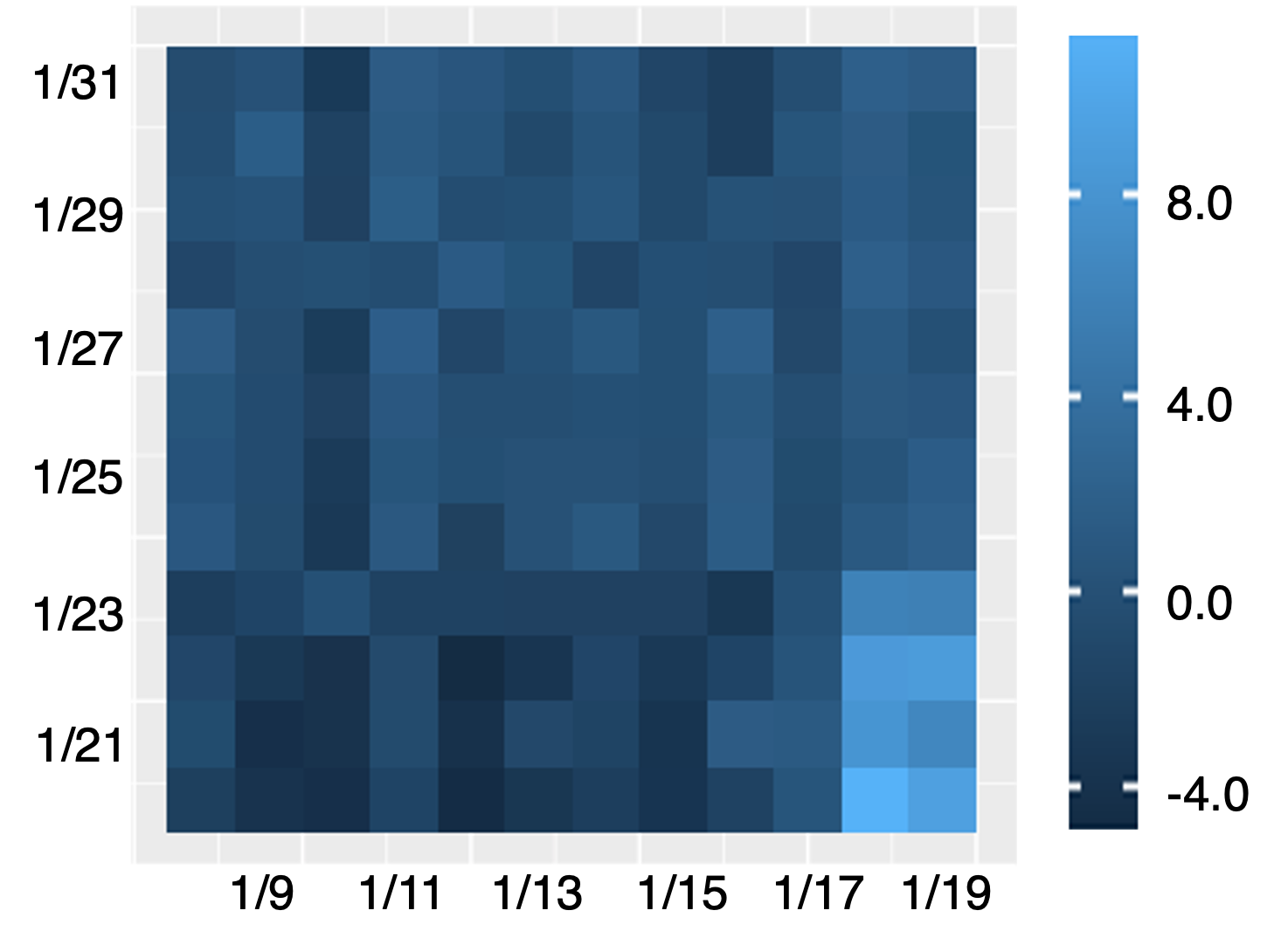}
         \caption{2D plot of estimated $\Psi(r,t)$}
         \label{fig:JapEst2d}
     \end{subfigure}
     \hfill
     \begin{subfigure}[b]{0.45\linewidth}
         \centering
         \includegraphics[width=\linewidth]{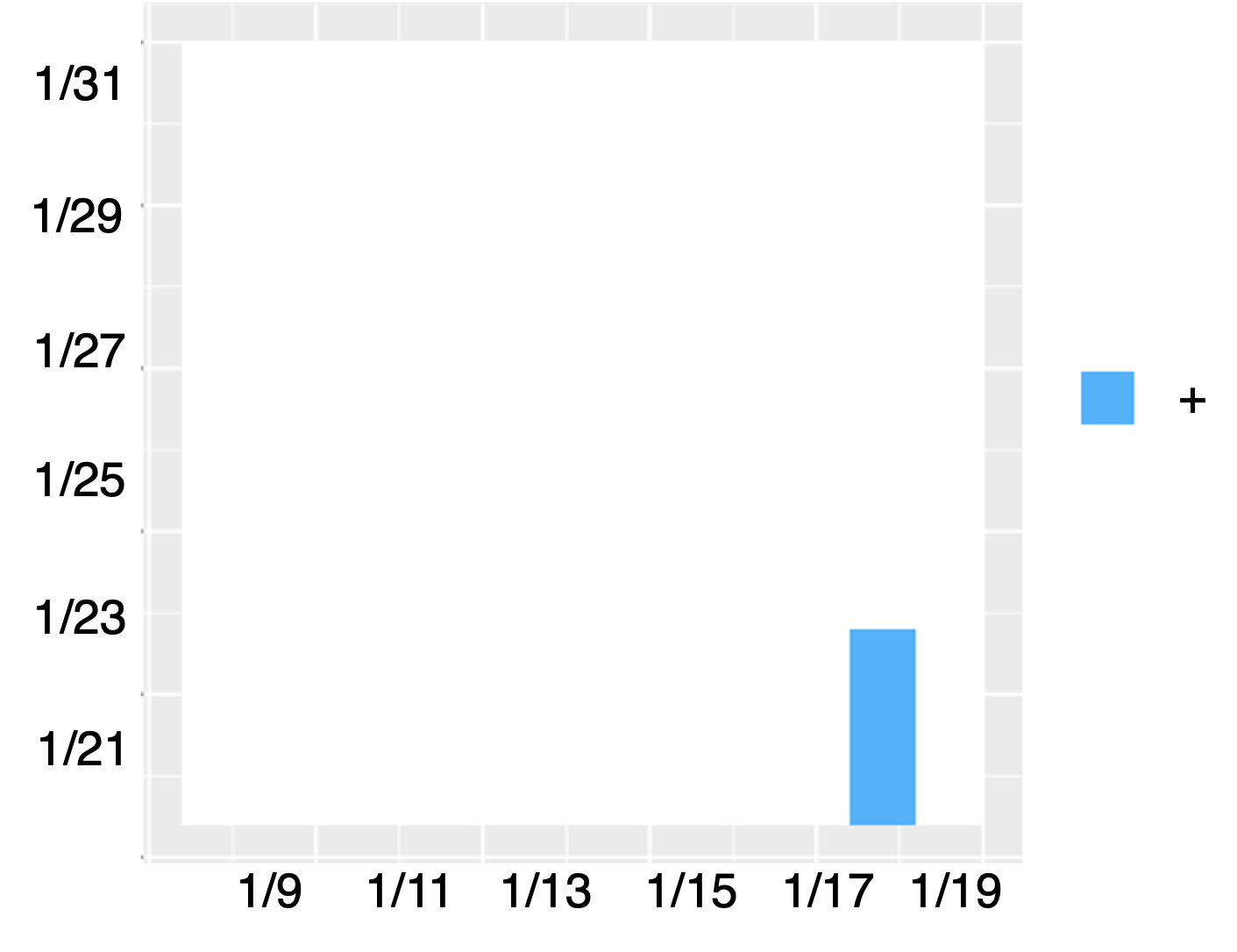}
         \caption{Significant areas}
         \label{fig:JapSigArea}
     \end{subfigure}
     \hfill
     \begin{subfigure}[b]{0.65\linewidth}
         \centering
         \includegraphics[width=\linewidth]{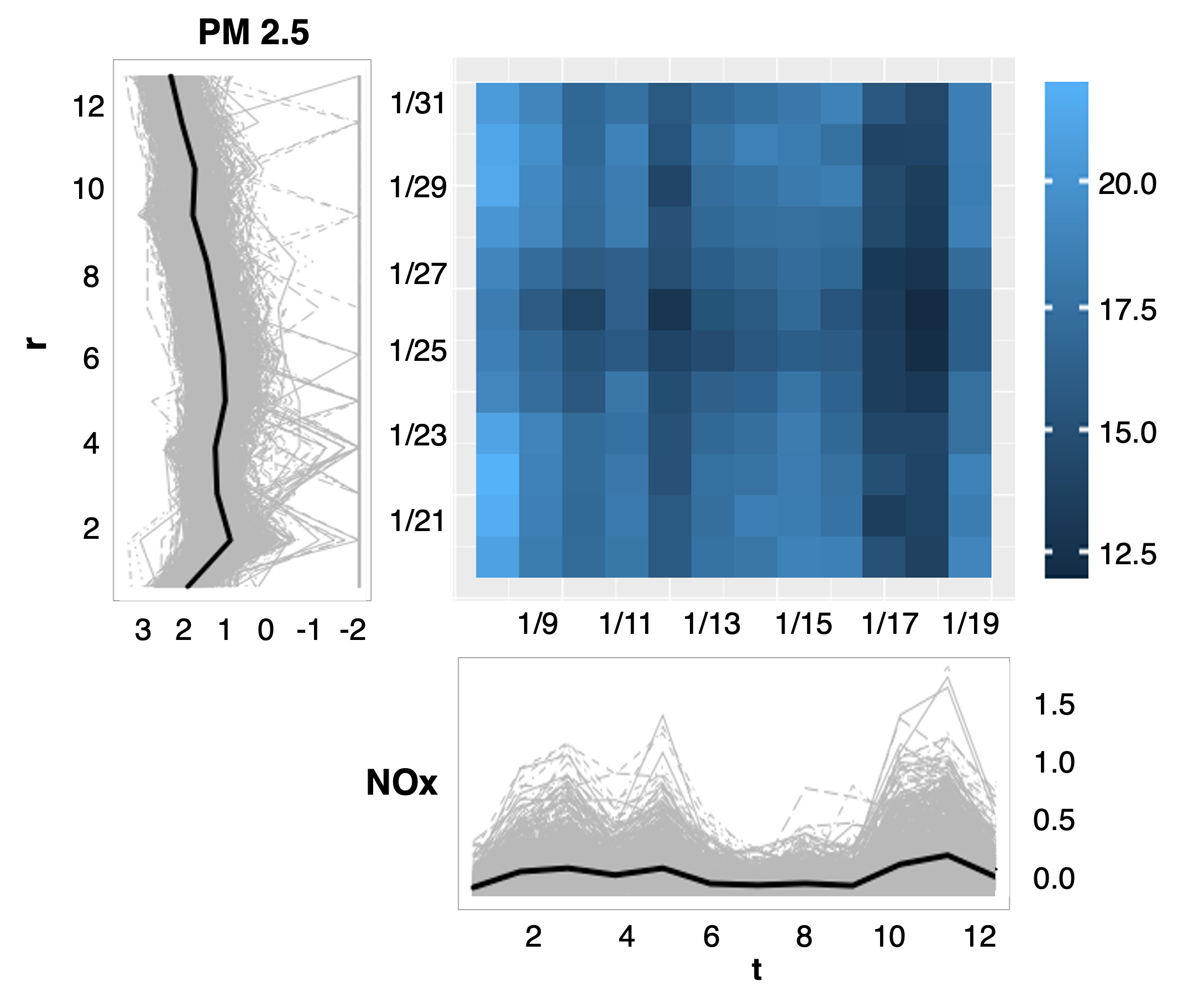}
         \caption{Uncertainty surface of estimated $\Psi(r,t)$}
         \label{fig:Japwidth}
     \end{subfigure}
        \caption{Japan PM2.5 data: Estimated $\Psi(r,t)$ obtained from the posterior mean of PSFoFR. Significant areas in the functions are detected from $\alpha=0.05$. Sky blue colors represent positive significant areas, whereas dark blue colors represent negative significant areas. (c) The uncertainty surface is obtained from the width of 95\% simultaneous credible intervals at each $(r, t)$. Black solid lines indicate the mean curves of the functional variables.    
        }
        \label{fig:Jap3Graphs}
\end{figure}

Figure~\ref{fig:Jap3Graphs} illustrates that the lower right corner parts of the estimated $\Psi(r,t)$ have high values, which implies that there is a positive relationship between NOx values and PM2.5 in the near term. Specifically, it shows that the NOx values on January 18th have a positive relationship with PM2.5 levels from January 20th to January 23rd. This result coincides with the previous studies \citep{fu2023significantly, huang2024assessing}. \cite{fu2023significantly} showed that reducing NOx emissions can effectively decrease PM2.5 levels during winter. \cite{huang2024assessing} demonstrated that reducing NOx emissions significantly decreases $\text{NO}_3^-$ and $\text{NH}_4^+$, which are major components of PM2.5, leading to a substantial reduction in PM2.5 levels during winter. However, such a relationship becomes insignificant in the long term but only lasts for a few days. The reason is that the atmospheric lifetime of NOx is less than half a day on average \citep{liu2016no}; therefore, NOx cannot significantly impact PM2.5 over a long period.

To quantify uncertainties for $\widehat{\Psi}(r,t)$, we provide an uncertainty surface obtained from the width of 95\% simultaneous credible intervals. Figure~\ref{fig:Jap3Graphs} shows that uncertainties of $\widehat{\Psi}(r,t)$ are small when $t=3,5,10,11$ (i.e.,  January 10th, 12th, 17th, and 18th). We observe that the variability of NOx curves these days is larger than that of the others. Note that the variability of $\psi_{gk}$ (i.e., the regression function coefficient) is inversely proportional to the variability of $x_{ig}$ (i.e., the functional covariate coefficient). See the conditional distribution of $\psi_{gk}$ derived in Appendix A in the supplementary material for further details. In general, NOx concentrations in major cities such as Toyko are higher than in other regions, resulting in significant regional variability in NOx concentrations across Japan \citep{kodama2002environmental}. However, we observe that the variability of NOx curves is low when $t=6,7,8,9$ (i.e., from January 13th to 16th). This is because the high wind speed in Tokyo leads to low NOx concentrations in the city, reducing regional variability. Therefore, uncertainties of $\widehat{\Psi}(r,t)$ are relatively large when $t=6,7,8,9$. 

\begin{figure}[htbp]
     \centering
         \centering
         \includegraphics[width=\linewidth]{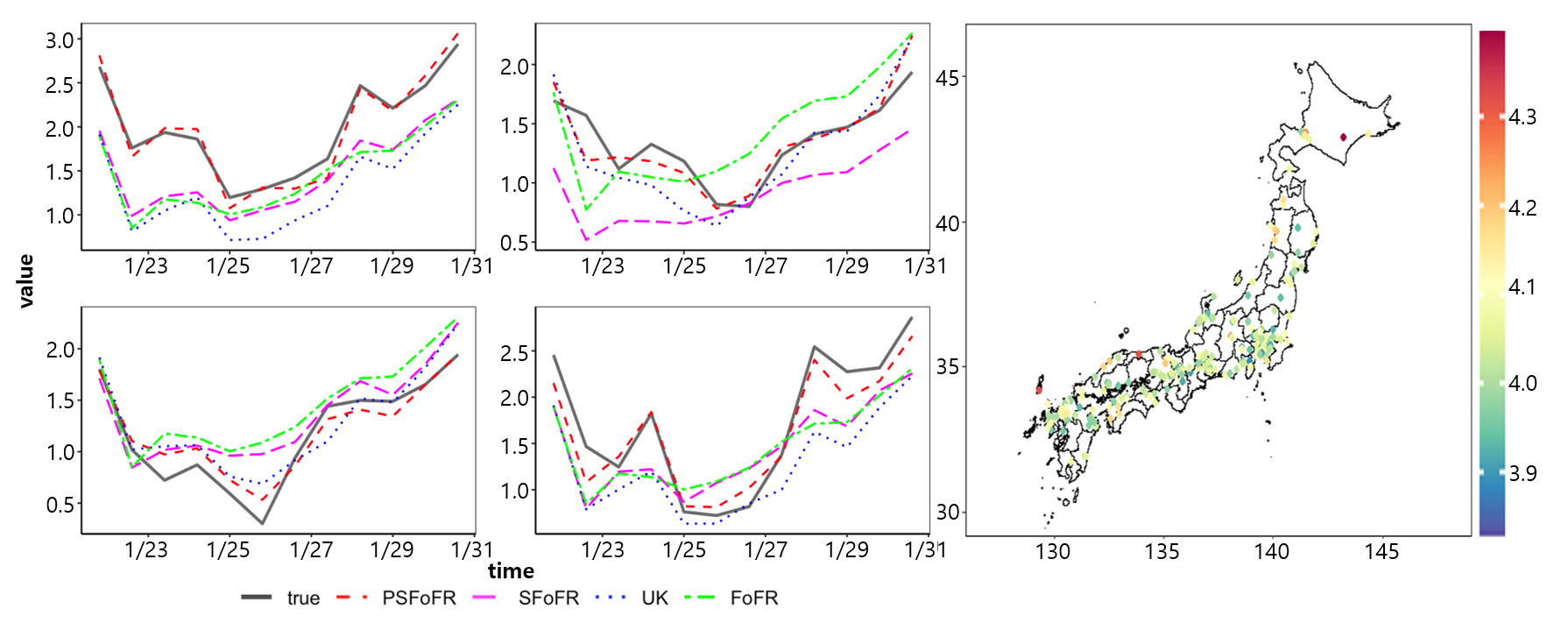}
         \caption{ Visualizations of predicted curves (left panel) and prediction uncertainties (right panel) from PSFoFR.}
         \label{fig:JapKrig}
\end{figure}

\begin{figure}[htbp]
     \centering
         \centering
         \includegraphics[width=0.95\linewidth]{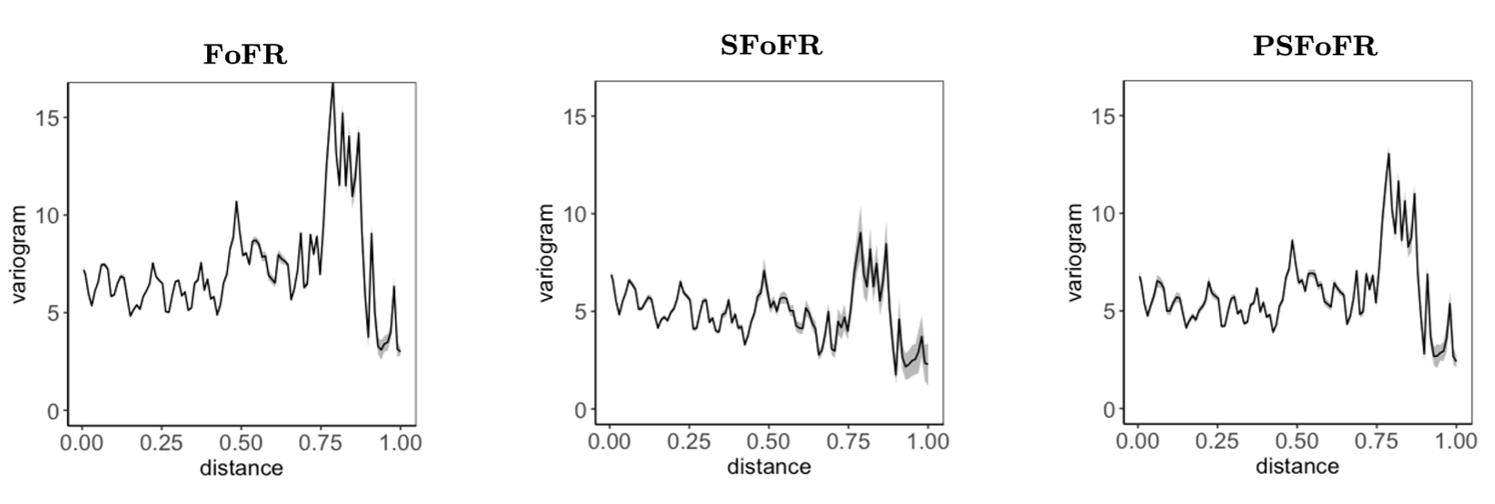}
         \caption{Fitted trace-variograms for the PM2.5 datasets. Black lines indicate the point estimates obtained from posterior means. Grey areas represent the corresponding 95\% credible intervals.}
         \label{fig:JapVario}
\end{figure}

Figure~\ref{fig:JapKrig} provides functional kriging results from PSFoFR and UK at four locations among 273 test observations. We observe that our methods provide more accurate kriging performance than UK. The MSPE of PSFoFR, SFoFR, UK, and FoFR are 0.780, 0.823, 0.806, and 0.838, respectively. PSFoFR shows the lowest MSPE value, while FoFR has the highest MSPE value. The MSPE value from SFoFR is relatively high because high-dimensional random effects $\widetilde{\mathbf{W}}$ cause the poor mixing of the chains. The trace-variograms indicate that residuals from FoFR show a more distinct pattern compared to spatial models (Figure~\ref{fig:JapVario}). We observe that uncertainties about the variogram estimates are small in general. As in the previous section, we compute the mean coverage for 273 predicted curves. We calculate the proportion of $n_t$ grid points where the 95\% simultaneous credible intervals include the true curve. The mean coverage of PSFoFR, SFoFR, and FoFR are 96.55\%, 89.56\%, and 96.46\%, respectively. In summary, PSFoFR provides the most accurate prediction with a coverage close to the 95\% nominal rate. We provide the visualization of the simultaneous credible intervals for predicted curves from each model in the supplementary material (Appendix F).
The right panel of Figure~\ref{fig:JapKrig} summarizes prediction uncertainties for PSFoFR. We compute the length of 95\% simultaneous credible intervals for every grid and take an average of them for each unobserved spatial location. We observe that the uncertainties are larger for the region with fewer observations, which is natural.


\subsection{Mobility Data}
\label{sec:mob}

Population mobility counts the number of individuals in a specific area during a given period based on the location information obtained from mobile phone base stations. It encompasses the total count of the mobile population, accounting for time overlaps, which include both the population not moving within the area and the population actively moving within the area. This population mobility data is sourced from a major mobile service provider in South Korea, SK Telecom (SKT), which holds 40\% of the total mobile telecommunication subscribers. The data is the form of origin-destination data, which has been preprocessed and expanded from the raw data to represent the entire population by adjusting the number of SKT users according to their market share in each region.

Data items are categorized by the daily outflow and inflow populations and segmented by gender, age group, and region. The "origin" is determined based on the previous month, specifically on the area where individuals spent the most time between 00:00 and 06:00. Based on this, individuals leaving the origin for other locations (destination) are defined as the outflow population, and those entering the origin location are defined as the inflow population. Geographic regions are divided into the city (Si), county (Gun), and district level (Gu), and the population count is based on individuals who have spent at least 2 hours in a specific area. If an individual visits multiple locations for over 2 hours daily, the same individual may be counted multiple times. However, duplicate counts are excluded if the same individual visits the same area on the same day, on a daily basis. 

In our analysis, we use 182 weekly mean values of total flow (outflow$+$inflow) populations from April 2019 to September 2022 by the areal unit as a functional response.  We use the population pyramid of each region as a functional covariate, which is collected from KOSIS (\url{https://kosis.kr/statHtml/statHtml.do?orgId=101&tblId=DT_1IN1509&conn_path=I2}). The dataset contains the proportion of the 16 age groups ($\leq$15, 15-19, 20-24, 25-29, 30-34, 35-39, 40-44, 45-49, 50-54, 55-59, 60-64, 65-69, 70-74, 75-79, 80-84 $\geq$85) in 250 geographic regions, city (Si), county (Gun), and district (Gu) in 2022. We use the population mobility values as a functional response and the proportion of the age group as a functional covariate in the proposed model. 
For PSFoFR, we use $k_n=39$ and $g_n=10$  orthonormalized B-spline basis functions; the number of basis functions are chosen by minimizing the GCV error. To construct the projection matrix over the discrete spatial domain, we use $p=24$, which is about 10\% of the sample size. We run MCMC algorithms until the Monte Carlo standard errors \citep{jones2006fixed} for PSFoFR are at or below 0.04 (50,000 iterations). 
Both PSFoFR and SFoFR take about 30 minutes for the same iterations. However, to achieve the same Monte Carlo standard error criteria (0.04), SFoFR needs four times longer iterations, indicating that the mixing of the chain is slower than those from PSFoFR. 


To analyze the spatial correlation among residuals, we use Moran's I, which is a commonly used exploratory method for areal data. For each time grid, we compute Moran's I values and average them. We observe that the averaged Moran's I values (p-values) are 0.000 (0.721), 0.162 (0.019), and 0.211 (0.002) for PSFoFR, SFoFR, and FoFR, respectively. PSFoFR can address the correlation in the data better than other methods, signified by the higher p-values.

\begin{figure}[htbp]
     \centering
     \begin{subfigure}[b]{0.47\linewidth}
         \centering
         \includegraphics[width=\linewidth]{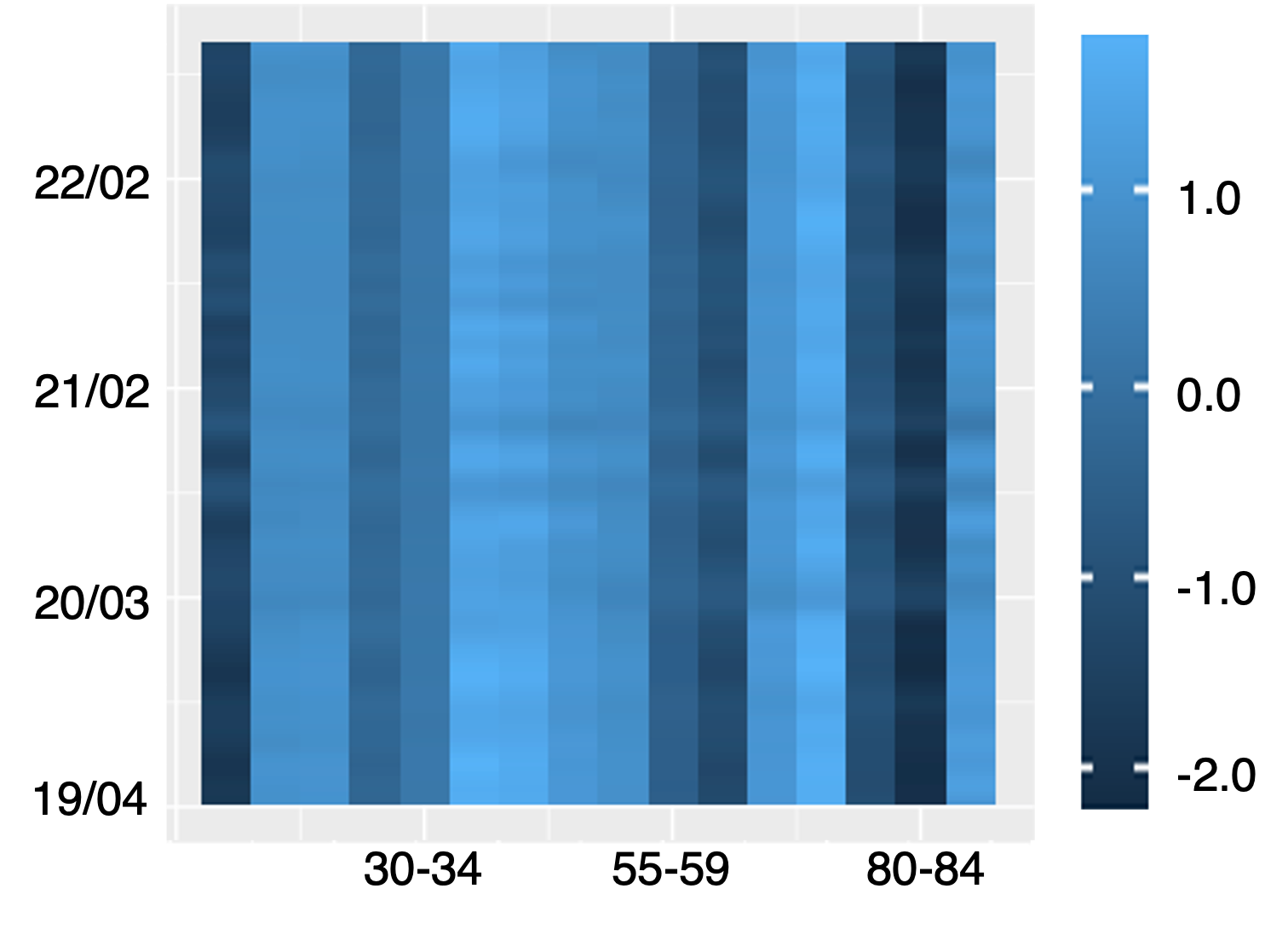}
         \caption{2D plot of estimated $\Psi(r,t)$}
         \label{fig:MobEsF2}
     \end{subfigure}
     \hfill
     \begin{subfigure}[b]{0.45\linewidth}
         \centering
         \includegraphics[width=\linewidth]{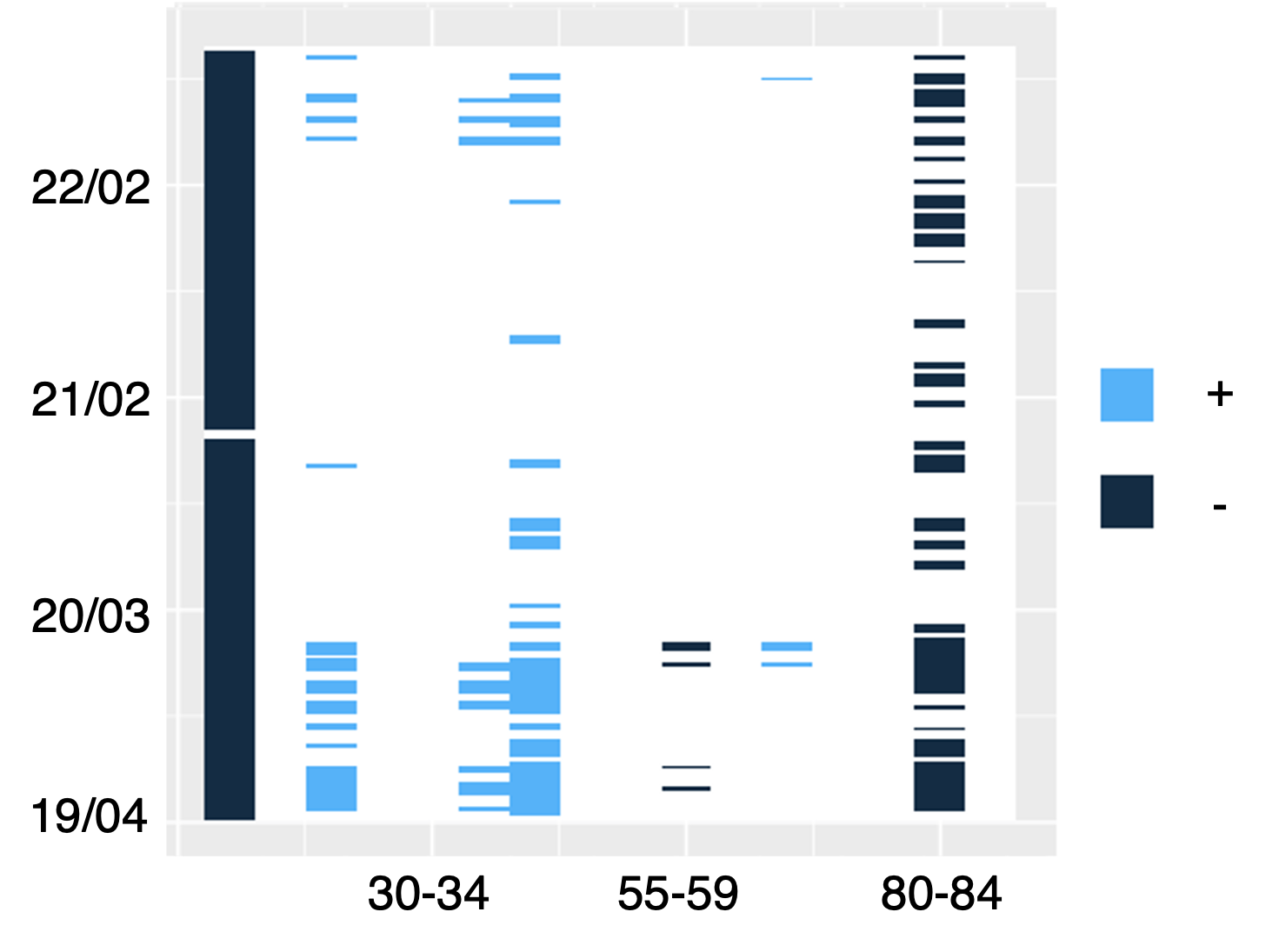}
         \caption{Significant areas}
         \label{fig:MobSigArea}
     \end{subfigure}
     \hfill
     \begin{subfigure}[b]{0.65\linewidth}
         \centering
         \includegraphics[width=\linewidth]{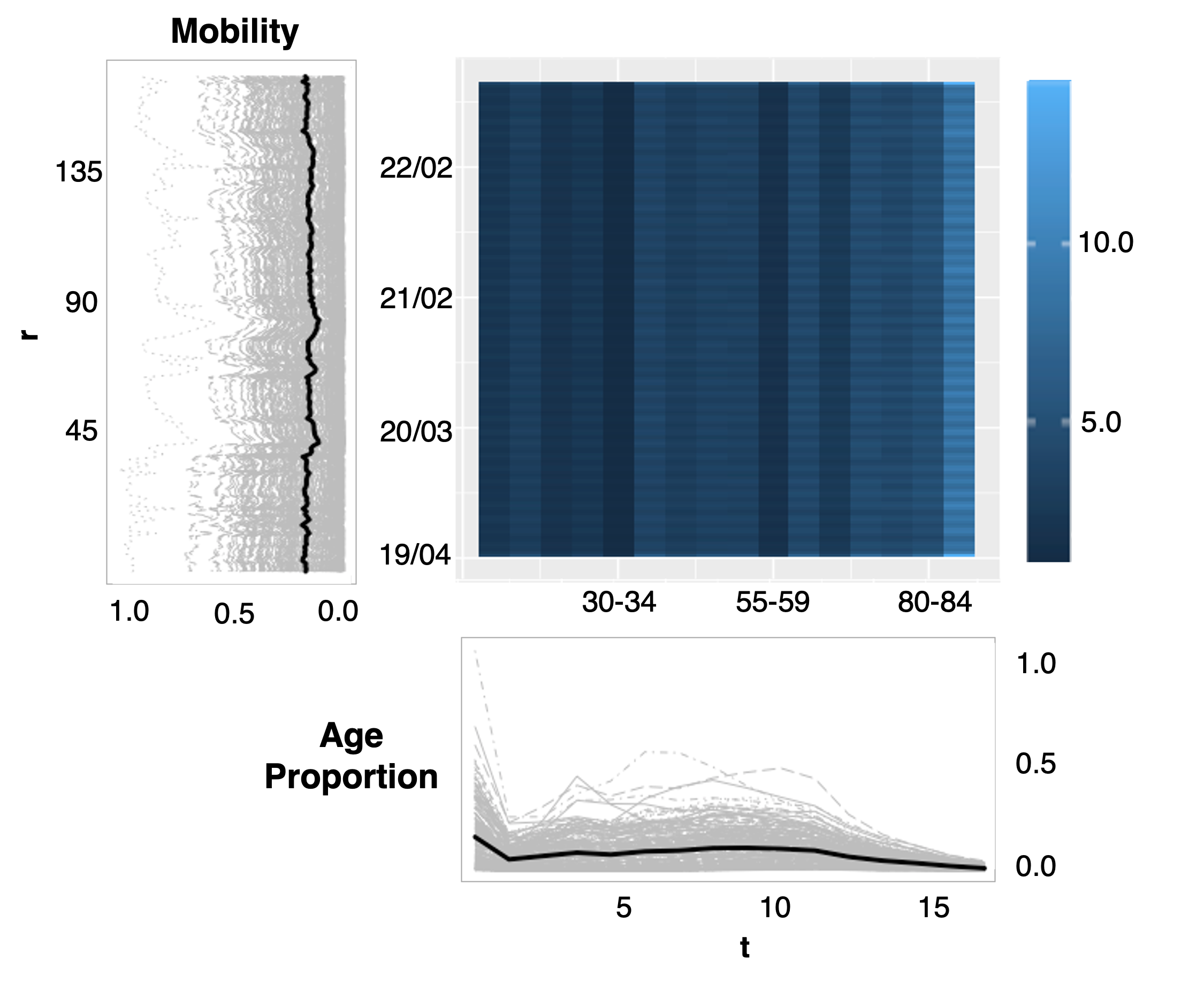}
         \caption{Uncertainty surface of estimated $\Psi(r,t)$}
         \label{fig:Mobwidth}
     \end{subfigure}
        \caption{Korea Mobility data: (a) Estimated $\Psi(r,t)$ from the posterior mean of PSFoFR. (b) Significant areas in the functions are detected from $\alpha=0.05$. Sky blue colors represent positive significant areas, whereas dark blue colors represent negative significant areas. (c) The uncertainty surface is obtained from the width of 95\% simultaneous credible intervals at each $(r, t)$. Black solid lines indicate the mean curves of the functional variables.    
        }
        \label{fig:MobEstimate}
\end{figure}

Figure \ref{fig:MobEstimate} shows the estimated regression function and its significant areas. We observe that the age groups 20-24 ($t=3$) and 40-44 ($t=7$) have a significant positive relationship with mobility in general. On the other hand, the age groups $\leq$15 ($t=1$) and 80-84 ($t=15$) have a significant negative relation with mobility. This implies that the regions with a high population of individuals in the 20-24 and 40-44 ($\leq$15 and 80-84) age groups tend to record a high (low) mobility. However, we observe that a positive relationship between age group 20-24 with mobility is not significant from March 2020 to February 2022. The age group 40-44 also shows less significant relationships in that period. This is due to the social distancing policies implemented in South Korea from March 2020 to April 2022 \citep{ha2023covid}. In that period, the mean of the mobility curves becomes lower than in other periods. The social distancing policy for COVID-19 control decreases overall mobility, resulting in a less significant relationship from March 2020 to February 2022. We observe that the significant area for the age group 20-24 has decreased to a larger degree than the age group 40-44 during social distancing. This is because individuals in the group 20-24 were more affected by the policy; universities were forced to shift to online teaching \citep{hong2020factors}. On the other hand, individuals in the group 40-44 worked for their livelihood, and full remote work was not mandatory by the social distancing period \citep{SYJ2021online}. Furthermore, during the social distancing period, $\widehat{\Psi}(r,t)$ shows consistently low values in the age groups $\leq$15 and 80-84. The mobility of the age group $\leq$15 and 80-84 is inherently low and has little variability because their life radius is relatively small.

As in the previous example, we provide an uncertainty surface of $\widehat{\Psi}(r,t)$. Figure~\ref{fig:MobEstimate} shows that uncertainties of $\widehat{\Psi}(r,t)$ are large when $t=16$ (i.e., age group $\geq$85). We observe that the variability of the age proportion curves is small in the age group $\geq$85. As we pointed out, the variability of the regression function is inversely proportional to the variability of the predictor curve. In general, the proportion of the age group $\geq$85 is low, resulting in low variability across the region. On the other hand, uncertainties of $\widehat{\Psi}(r,t)$ are small in the age group 30-34, which can be attributed to the large regional differences.


\section{Discussion}
\label{sec:discuss}

In this manuscript, we propose a Bayesian framework for FoFR in the presence of spatial correlation. We utilize the basis transformation and projection methods to address within-curve and between-curve dependencies in the model efficiently. By incorporating functional covariates into the model, both SFoFR and PSFoFR can show improved kriging performance and provide interpretations of regression functions. The proposed methods can also be applicable to the spatial functional data over discrete domains. We have also studied the convergence of the regression function with an increasing sample size and MCMC iteration. Uncertainty quantification in estimation and prediction through posterior samples is also one of the advantages over the existing method. PSFoFR is computationally more efficient than SFoFR and shows faster mixing of the chain, while they provide comparable inferential results. 


The proposed framework is flexible and can be extendable to a wide variety of hierarchical functional regression models. For example, we can consider additive function-on-function regressions \citep{kim2018additive} or non-Gaussian functional response models \citep{scheipl2016generalized} for spatial functional data. Once the models include spatially dependent latent processes within the hierarchical model, we can readily use SFoFR or PSFoFR with minor changes to the {\tt nimble} code. As per future research direction, one might consider different spatial basis functions to represent correlated $\mathbf{W}$. Examples include thin plate splines \citep{lee2023scalable} and multi-resolution basis \citep{katzfuss2017multi}. A closer examination of adopting different spatial basis functions can improve the performance of the proposed method.

\section*{Supplementary Material}
Supplementary materials available online contain the full conditionals for MCMC, assessing the convergence of MCMC, details for contour avoiding region, estimated regression functions from SFoFR, additional simulation studies, and simultaneous credible intervals for predicted curves. 

\section*{Acknowledgement}
This work was supported by the National Research Foundation of Korea (RS-2025-00513129, RS-2023-00217705, 2021R1A2C1095639). The authors are grateful to the anonymous reviewers for their careful reading and valuable comments.

\clearpage
\appendix
\begin{center}
\title{\LARGE\bf Supplementary Material for Bayesian Function-on-Function Regression for Spatial Functional Data}\\~\\
\author{\Large{Heesang Lee, Dagun Oh, Sunhwa Choi, and Jaewoo Park}}
\end{center}

\section{Full Conditionals for Markov Chain Monte Carlo}

\subsection{Full Conditionals for SFoFR}

For continuous spatial domains, let $f(y_{ik} | \psi_{gk}, \widetilde{\mathbf{W}}, \tau^2, \boldsymbol{\Sigma}_{\widetilde{\mathbf{W}}})$ be the normal likelihood function with priors $p(\psi_{gk}), p(\tau^2), p(\boldsymbol{\Sigma}_{\widetilde{\mathbf{W}}})$ where $\boldsymbol{\Sigma}_{\widetilde{\mathbf{W}}}$ is covariance structure of $\widetilde{\mathbf{W}}$. Then, we can define the joint posterior distribution for SFoFR (basis space) as $\pi( \psi_{gk}, \widetilde{\mathbf{W}}, \tau^2, \boldsymbol{\Sigma}_{\widetilde{\mathbf{W}}} | y_{ik}) \propto f(y_{ik}|\psi_{gk}, \widetilde{\mathbf{W}}, \tau^2)$ $f(\widetilde{\mathbf{W}} |\boldsymbol{\Sigma}_{\widetilde{\mathbf{W}}})p(\psi_{gk})p(\tau^2)p(\boldsymbol{\Sigma}_{\widetilde{\mathbf{W}}}).$ Here, we use independent normal conjugate priors for $\psi_{gk} \sim \mathcal{N}(0, 10)$, inverse gamma conjugate priors for $\tau^2 \sim \mathcal{IG}(2,0.1)$ and $\sigma^2 \sim \mathcal{IG}(2,0.1)$, and a uniform prior for $\rho \sim \mathcal{\mbox{Unif}}(0,1).$ Let $\boldsymbol{\Sigma}_{\widetilde{\mathbf{W}}}=\sigma^2\Gamma(\rho)$ be the Mat\'{e}rn class \citep{stein2012interpolation} covariance matrix with the range parameter $\rho$ and variance $\sigma^2$. Then, the full conditional distributions are derived as follows.

\begin{itemize}
\item The conditional distribution of $\psi_{gk}$:
\begin{equation}
\begin{split}
\pi(\psi_{gk}|\widetilde{\mathbf{W}}, \tau^2, \boldsymbol{\Sigma}_{\widetilde{\mathbf{W}}}, y_{ik}) & \propto \prod_{i=1}^n \prod_{k=1}^{k_n} f(y_{ik}|\psi_{gk},\widetilde{\mathbf{W}}, \tau^2)\times p(\psi_{gk})\\
& \propto  \exp \left(-\frac{1}{2\tau^2} \sum_{i=1}^n \sum_{k=1}^{k_n}(y_{ik}-x_{ig}\psi_{gk}-\widetilde{w}_{ik})^2\right) \times \exp\left(-\frac{1}{20} \psi_{gk}^2\right),
\end{split}
\end{equation}
\begin{align*}
  \therefore \psi_{gk}|\widetilde{\mathbf{W}}, \tau^2, \boldsymbol{\Sigma}_{\widetilde{\mathbf{W}}}, y_{ik} &\sim \mathcal{N}\left(\frac{M}{V}, \frac{1}{V}\right)\\ &\text{where} \; M =\frac{\sum_{i=1}^n\sum_{k=1}^{k_n}(y_{ik}-\widetilde{w}_{ik})x_{ig}}{\tau^2}, \,V=\frac{\sum_{i=1}^n \sum_{k=1}^{k_n} x_{ig}^2}{\tau^2}+\frac{1}{10}.  
\end{align*}

\item The conditional distribution of $\widetilde{\mathbf{W}}$:
\begin{equation}
\begin{split}
\pi(\widetilde{\mathbf{W}}|\psi_{gk}, \tau^2, \boldsymbol{\Sigma}_{\widetilde{\mathbf{W}}}, y_{ik}) & \propto \prod_{i=1}^n \prod_{k=1}^{k_n} f(y_{ik}|\psi_{gk},\widetilde{\mathbf{W}}, \tau^2)\times f(\widetilde{\mathbf{W}}|\boldsymbol{\Sigma}_{\widetilde{\mathbf{W}}})\\
& \propto  \exp \left(-\frac{1}{2\tau^2} \sum_{i=1}^n \sum_{k=1}^{k_n}(y_{ik}-x_{ig}\psi_{gk}-\widetilde{w}_{ik})^2\right)\\
&\times \exp\left(-\frac{1}{2\sigma^2 } (\text{vec}(\widetilde{\mathbf{W}})'(I_{k_n} \otimes \Gamma(\rho)^{-1})\text{vec}(\widetilde{\mathbf{W}}))\right),
\end{split}
\end{equation}
\begin{align*}
  \therefore \widetilde{\mathbf{W}}|\psi_{gk}, \tau^2, \boldsymbol{\Sigma}_{\widetilde{\mathbf{W}}}, y_{ik} &\sim \mathcal{N}\left(\frac{M}{V}, \frac{1}{V}\right)\\ &\text{where} \; M =\frac{\sum_{i=1}^n\sum_{k=1}^{k_n}(x_{ig}\psi_{gk}-y_{ik})}{\tau^2}, \,V=\frac{n k_n}{\tau^2}+\frac{I_{k_n} \otimes \Gamma(\rho)^{-1}}{\sigma^2}.  
\end{align*}

\item The conditional distribution of $\tau^2$:
\begin{equation}
    \begin{split}
\pi(\tau^2|\psi_{gk}, \widetilde{\mathbf{W}}, \boldsymbol{\Sigma}_{\widetilde{\mathbf{W}}}, y_{ik}) & \propto 
\prod_{i=1}^n \prod_{k=1}^{k_n}f(y_{ik} |\psi_{gk}, \widetilde{\mathbf{W}}, \tau^2) \times p(\tau^2)\\
& \propto (\tau^2)^{-\frac{nk_n}{2}} \exp \left(-\frac{1}{2\tau^2} \sum_{i=1}^n \sum_{k=1}^{k_n}(y_{ik}-x_{ig}\psi_{gk}-\widetilde{w}_{ik})^2\right)\times (\tau^2)^{-3} \exp\left(-\frac{0.1}{\tau^2}\right)\\
& = (\tau^2)^{-\left(\frac{nk_n}{2}+2\right)-1} \exp\left(-\frac{1}{\tau^2} \left(\frac{\sum_{i=1}^n \sum_{k=1}^{k_n}(y_{ik}-x_{ig}\psi_{gk}-\widetilde{w}_{ik})^2}{2}+0.1\right)\right),      
    \end{split}
\end{equation}
$$\therefore \tau^2|\psi_{gk}, \widetilde{\mathbf{W}}, \boldsymbol{\Sigma}_{\widetilde{\mathbf{W}}}, y_{ik} \sim \mathcal{IG}\left(\frac{nk_n}{2}+2, \frac{\sum_{i=1}^n \sum_{k=1}^{k_n}(y_{ik}-x_{ig}\psi_{gk}-\widetilde{w}_{ik})^2}{2}+0.1\right).$$

\item The conditional distribution of $\boldsymbol{\Sigma}_{\widetilde{\mathbf{W}}}$:
\begin{equation}
    \begin{split}
\pi(\boldsymbol{\Sigma}_{\widetilde{\mathbf{W}}}|\psi_{gk}, \widetilde{\mathbf{W}}, \tau^2, y_{ik}) & \propto 
f(\widetilde{\mathbf{W}} | \boldsymbol{\Sigma}_{\widetilde{\mathbf{W}}}) \times f(\boldsymbol{\Sigma}_{\widetilde{\mathbf{W}}}|\sigma^2, \rho) \times p(\sigma^2) \times p(\rho)\\
& \propto  (\sigma^2)^{-\frac{k_n}{2}} \exp\left(-\frac{1}{2\sigma^2 } (\widetilde{\mathbf{W}}'\Gamma(\rho)^{-1}\widetilde{\mathbf{W}})\right) \times (\sigma^2)^n \exp\left(-\sum_{i=1}^n\sum_{j=1}^n\frac{d_{ij}}{\rho}\right)\\ &\times (\sigma^2)^{-3} \exp \left(-\frac{0.1}{\sigma^2}\right) \times 1_{\left[0,1\right]}(\rho)\\
&= (\sigma^2)^{n-\frac{k_n}{2}-2} \exp\left(-\left(\frac{1}{2\sigma^2 } (\widetilde{\mathbf{W}}'\Gamma(\rho)^{-1}\widetilde{\mathbf{W}})+\sum_{i=1}^n\sum_{j=1}^n\frac{d_{ij}}{\rho}+\frac{0.1}{\sigma^2}\right)\right)\times 1_{\left[0,1\right]}(\rho),    
    \end{split}
\end{equation}
where $d_{ij}$ is distance between location $i$ and location $j$.
\end{itemize}

Similarly, for discrete spatial domains, we can define the normal likelihood function $f(y_{ik} | \psi_{gk}, \widetilde{\mathbf{W}}, \tau^2, \boldsymbol{\Sigma}_{\widetilde{\mathbf{W}}})$ and priors $p(\psi_{gk}), p(\tau^2)p(\boldsymbol{\Sigma}_{\widetilde{\mathbf{W}}})$. Then, the joint posterior distribution for SFoFR (basis space) as 
$\pi( \psi_{gk}, \widetilde{\mathbf{W}}, \tau^2, \boldsymbol{\Sigma}_{\widetilde{\mathbf{W}}} | y_{ik}) \propto \prod_{i=1}^{n}\prod_{k=1}^{k_n} f(y_{ik} |\psi_{gk}, \widetilde{\mathbf{W}}, \tau^2)$ $f(\widetilde{\mathbf{W}} | \nu)p(\psi_{gk})p(\tau^2)p(\boldsymbol{\Sigma}_{\widetilde{\mathbf{W}}}).$ Here, we use independent normal conjugate priors for $\psi_{gk} \sim \mathcal{N}(0, 10)$, an inverse gamma conjugate prior for $\tau^2 \sim \mathcal{IG}(2,0.1)$ and a Gamma conjugate piror for $\nu \sim \mathcal{G}(0.5,\frac{1}{2,000})$. Let $\mathbf{Q}=\text{diag}(\mathbf{D1})-\mathbf{D}$ be a precision matrix, and $\boldsymbol{\Sigma}_{\widetilde{\mathbf{W}}}=\nu\mathbf{Q}$. Then, the full conditional distributions are derived as 
\begin{itemize}
\item The conditional distribution of $\psi_{gk}$:
\begin{equation}
\begin{split}
\pi(\psi_{gk}|\widetilde{\mathbf{W}}, \tau^2, \boldsymbol{\Sigma}_{\widetilde{\mathbf{W}}}, y_{ik}) & \propto \prod_{i=1}^n \prod_{k=1}^{k_n} f(y_{ik}|\psi_{gk},\widetilde{\mathbf{W}}, \tau^2)\times p(\psi_{gk})\\
& \propto  \exp \left(-\frac{1}{2\tau^2}(y_{ik}-x_{ig}\psi_{gk}-\widetilde{w}_{ik})^2\right) \times \exp\left(-\frac{1}{20} \psi_{gk}^2\right),
\end{split}
\end{equation}
\begin{align*}
    \therefore \psi_{gk}|\widetilde{\mathbf{W}}, \tau^2, \boldsymbol{\Sigma}_{\widetilde{\mathbf{W}}}, y_{ik} &\sim \mathcal{N}\left(\frac{M}{V}, \frac{1}{V}\right)\\ &\text{where} \; M =\frac{\sum_{i=1}^n\sum_{k=1}^{k_n}(y_{ik}-\widetilde{w}_{ik})x_{ig}}{\tau^2}, \,V=\frac{\sum_{i=1}^n \sum_{k=1}^{k_n} x_{ig}^2}{\tau^2}+\frac{1}{10}.
\end{align*}

\item The conditional distribution of $\widetilde{\mathbf{W}}$:
\begin{equation}
\begin{split}
\pi(\widetilde{\mathbf{W}}|\psi_{gk}, \tau^2, \boldsymbol{\Sigma}_{\widetilde{\mathbf{W}}}, y_{ik}) & \propto \prod_{i=1}^n \prod_{k=1}^{k_n} f(y_{ik}|\psi_{gk},\widetilde{\mathbf{W}}, \tau^2)\times f(\widetilde{\mathbf{W}}|\nu)\\
& \propto  \exp \left(-\frac{1}{2\tau^2} \sum_{i=1}^n \sum_{k=1}^{k_n}(y_{ik}-x_{ig}\psi_{gk}-\widetilde{w}_{ik})^2\right) \\
&\times \exp\left(\frac{\nu}{2} (\text{vec}(\widetilde{\mathbf{W}})'(I_{k_n}\otimes \mathbf{Q})\text{vec}(\widetilde{\mathbf{W}}))\right),
\end{split}
\end{equation}
$$\therefore \widetilde{\mathbf{W}}|\psi_{gk}, \tau^2, \boldsymbol{\Sigma}_{\widetilde{\mathbf{W}}}, y_{ik} \sim \mathcal{N}\left(\frac{M}{V}, \frac{1}{V}\right)\; \text{where} \; M =\frac{\sum_{i=1}^n\sum_{k=1}^{k_n}(x_{ig}\psi_{gk}-y_{ik})}{\tau^2}, \,V=\frac{n k_n}{\tau^2}-\nu(I_{k_n}\otimes \mathbf{Q}).$$

\item The conditional distribution of $\tau^2$:
\begin{equation}
    \begin{split}
\pi(\tau^2|\psi_{gk}, \widetilde{\mathbf{W}}, \boldsymbol{\Sigma}_{\widetilde{\mathbf{W}}}, y_{ik}) & \propto 
\prod_{i=1}^n \prod_{k=1}^{k_n}f(y_{ik} |\psi_{gk}, \widetilde{\mathbf{W}}, \tau^2) \times p(\tau^2)\\
& \propto (\tau^2)^{-\frac{nk_n}{2}} \exp \left(-\frac{1}{2\tau^2} \sum_{i=1}^n \sum_{k=1}^{k_n}(y_{ik}-x_{ig}\psi_{gk}-\widetilde{w}_{ik})^2\right)\times (\tau^2)^{-3} \exp\left(-\frac{0.1}{\tau^2}\right)\\
& = (\tau^2)^{-\left(\frac{nk_n}{2}+2\right)-1} \exp\left(-\frac{1}{\tau^2} \left(\frac{\sum_{i=1}^n \sum_{k=1}^{k_n}(y_{ik}-x_{ig}\psi_{gk}-\widetilde{w}_{ik})^2}{2}+0.1\right)\right),     
    \end{split}
\end{equation}
$$\therefore \tau^2|\psi_{gk}, \widetilde{\mathbf{W}}, \boldsymbol{\Sigma}_{\widetilde{\mathbf{W}}}, y_{ik} \sim \mathcal{IG}\left(\frac{nk_n}{2}+2, \frac{\sum_{i=1}^n \sum_{k=1}^{k_n}(y_{ik}-x_{ig}\psi_{gk}-\widetilde{w}_{ik})^2}{2}+0.1\right).$$

\item The conditional distribution of $\boldsymbol{\Sigma}_{\widetilde{\mathbf{W}}}$:
\begin{equation}
    \begin{split}
\pi(\boldsymbol{\Sigma}_{\widetilde{\mathbf{W}}}|\psi_{gk}, \widetilde{\mathbf{W}}, \tau^2, y_{ik}) & \propto f(\widetilde{\mathbf{W}} | \boldsymbol{\Sigma}_{\widetilde{\mathbf{W}}}) \times
p(\boldsymbol{\Sigma}_{\widetilde{\mathbf{W}}})\\
& \propto (\nu)^{-\frac{n}{2}} \exp\left(\frac{\nu}{2} (\widetilde{\mathbf{W}}'\mathbf{Q}\widetilde{\mathbf{W}})\right) \times (\nu)^{\frac{n}{2}} \exp\left(-\frac{n}{2,000}\nu\right)\\
& = (\nu)^{1-1} \exp\left(-\left(\frac{\widetilde{\mathbf{W}}'\mathbf{Q}\widetilde{\mathbf{W}}}{2}+\frac{n}{2,000}\right)\nu\right),         
    \end{split}
\end{equation}
$$\therefore \boldsymbol{\Sigma}_{\widetilde{\mathbf{W}}}|\psi_{gk}, \widetilde{\mathbf{W}}, \tau^2, y_{ik} \sim \mathcal{G}\left(1, \frac{\widetilde{\mathbf{W}}'\mathbf{Q}\widetilde{\mathbf{W}}}{2}+\frac{n}{2,000}\right).$$
\end{itemize}

\subsection{Full Conditionals for PSFoFR}

Let $f(y_{ik} | \psi_{gk}, \boldsymbol{\delta}, \tau^2, \boldsymbol{\Sigma}_{\boldsymbol{\delta}})$ be the normal likelihood function with priors $p(\psi_{gk}), p(\tau^2), p(\boldsymbol{\Sigma}_{\boldsymbol{\delta}})$ where $\boldsymbol{\Sigma}_{\boldsymbol{\delta}}=\sigma^2 \mathbf{P}'\mathbf{QP}$ is covariance structure of $\boldsymbol{\delta}$. Then we can define the joint posterior distribution for PSFoFR (basis space) as $\pi( \psi_{gk}, \boldsymbol{\delta}, \tau^2, \boldsymbol{\Sigma}_{\boldsymbol{\delta}} | y_{ik}) \propto \prod_{i=1}^{n}\prod_{k=1}^{k_n} f(y_{ik} |\psi_{gk}, \mathbf{P},  \boldsymbol{\delta}, \tau^2)$ $f(\boldsymbol{\delta} | \boldsymbol{\Sigma}_{\boldsymbol{\delta}})p(\psi_{gk})p(\tau^2)p(\boldsymbol{\Sigma}_{\boldsymbol{\delta}}).$ Here, we use independent normal conjugate priors for $\psi_{gk} \sim \mathcal{N}(0, 10)$, an inverse gamma conjugate priors for $\tau^2 \sim \mathcal{IG}(2,0.1)$ and $\sigma^2 \sim \mathcal{G}(0.5, \frac{1}{2,000})$. We define $(\mathbf{P}\boldsymbol{\delta})_{ik}$ as an element of $\mathbf{P}\boldsymbol{\delta}$ at $i$-th row and $k$-th column. From this, we can obtain the full conditionals of PSFoFR as follows:

\begin{itemize}
\item The conditional distribution of $\psi_{gk}$:
\begin{equation}
\begin{split}
\pi(\psi_{gk}|\boldsymbol{\delta}, \tau^2, \boldsymbol{\Sigma}_{\boldsymbol{\delta}}, y_{ik}) & \propto \prod_{i=1}^n \prod_{k=1}^{k_n} f(y_{ik}|\psi_{gk},\mathbf{P}, \boldsymbol{\delta}, \tau^2)\times p(\psi_{gk})\\
& \propto  \exp \left(-\frac{1}{2\tau^2}\sum_{i=1}^n\sum_{k=1}^{k_n}(y_{ik}-x_{ig}\psi_{gk}-(\mathbf{P}\boldsymbol{\delta})_{ik})^2\right) \times \exp\left(-\frac{1}{20} \psi_{gk}^2\right),
\end{split}
\end{equation}
\begin{align*}
    \therefore \psi_{gk}|\boldsymbol{\delta}, \tau^2, \boldsymbol{\Sigma}_{\boldsymbol{\delta}}, y_{ik} &\sim \mathcal{N}\left(\frac{M}{V}, \frac{1}{V}\right)\\ &\text{where} \; M =\frac{\sum_{i=1}^n\sum_{k=1}^{k_n}(y_{ik}-(\textbf{P}\boldsymbol{\delta})_{ik})x_{ig}}{\tau^2}, \,V=\frac{\sum_{i=1}^n \sum_{k=1}^{k_n} x_{ig}^2}{\tau^2}+\frac{1}{10}.
\end{align*}

\item The conditional distribution of $\tau^2$:
\begin{equation}
    \begin{split}
\pi(\tau^2|\psi_{gk}, \boldsymbol{\delta},\boldsymbol{\Sigma}_{\boldsymbol{\delta}}, y_{ik}) & \propto 
\prod_{i=1}^n \prod_{k=1}^{k_n}f(y_{ik} |\psi_{gk}, \mathbf{P}, \boldsymbol{\delta}, \tau^2) \times p(\tau^2)\\
& \propto (\tau^2)^{-\frac{nk_n}{2}} \exp \left(-\frac{1}{2\tau^2}\sum_{i=1}^{n}\sum_{k=1}^{k_n}(y_{ik}-x_{ig}\psi_{gk}-(\mathbf{P}\boldsymbol{\delta})_{ik})^2\right) \times (\tau^2)^{-3} \exp\left(-\frac{0.1}{\tau^2}\right)\\     
& = (\tau^2)^{-\left(\frac{nk_n}{2}+2\right)-1} \exp\left(-\frac{1}{\tau^2} \left(\frac{\sum_{i=1}^n \sum_{k=1}^{k_n}(y_{ik}-x_{ig}\psi_{gk}-(\mathbf{P}\boldsymbol{\delta})_{ik})^2}{2}+0.1\right)\right),      
    \end{split}
\end{equation}
$$\therefore \tau^2|\psi_{gk}, \boldsymbol{\delta},\boldsymbol{\Sigma}_{\boldsymbol{\delta}}, y_{ik} \sim \mathcal{IG}\left(\frac{nk_n}{2}+2, \frac{\sum_{i=1}^n \sum_{k=1}^{k_n}(y_{ik}-x_{ig}\psi_{gk}-(\mathbf{P}\boldsymbol{\delta})_{ik})^2}{2}+0.1\right).$$

\item The conditional distribution of $\boldsymbol{\delta}$:
\begin{equation}
\begin{split}
\pi(\boldsymbol{\delta}|\psi_{gk}, \tau^2, \boldsymbol{\Sigma}_{\boldsymbol{\delta}}, y_{ik}) & \propto \prod_{i=1}^{n}\prod_{k=1}^{k_n}f(y_{ik}|\psi_{gk}, \mathbf{P}, \boldsymbol{\delta}, \tau^2)\times f(\boldsymbol{\delta} | \boldsymbol{\Sigma}_{\boldsymbol{\delta}})\\
& \propto \exp \left(-\frac{1}{2\tau^2}\sum_{i=1}^{n}\sum_{k=1}^{k_n}(y_{ik}-x_{ig}\psi_{gk}-(\mathbf{P}\boldsymbol{\delta})_{ik})^2\right) \times \exp\left(-\frac{\sigma^2}{2} \boldsymbol{\delta}' \mathbf{P}'\mathbf{QP}\boldsymbol{\delta}\right),\\
\end{split} 
\end{equation}
\begin{align*}
    \therefore \boldsymbol{\delta}|\psi_{gk}, \tau^2, \boldsymbol{\Sigma}_{\boldsymbol{\delta}}, y_{ik} &\sim \mathcal{N}\left(\frac{M}{V}, \frac{1}{V}\right)\\ &\text{where} \; M =\frac{\sum_{i=1}^n\sum_{k=1}^{k_n}(x_{ig}\psi_{gk}-y_{ik})\mathbf{P}_i}{\tau^2}, \,V=\frac{n k_n \mathbf{P}'\mathbf{P}}{\tau^2}+\sigma^2\mathbf{P}'\mathbf{QP}.
\end{align*}

\item The conditional distribution of $\boldsymbol{\Sigma}_{\boldsymbol{\delta}}$:
\begin{equation}
    \begin{split}
\pi(\boldsymbol{\Sigma}_{\boldsymbol{\delta}}|\psi_{gk}, \boldsymbol{\delta}, \tau^2, y_{ik}) & \propto 
f(\boldsymbol{\delta} | \boldsymbol{\Sigma}_{\boldsymbol{\delta}}) \times p(\boldsymbol{\Sigma}_{\boldsymbol{\delta}})\\
& \propto (\sigma^2)^{\frac{p}{2}} \exp\left(-\frac{\sigma^2}{2} \boldsymbol{\delta}' \mathbf{P}'\mathbf{QP}\boldsymbol{\delta}\right) \times (\sigma^2)^{-\frac{p}{2}} \exp\left(-\frac{p}{2,000}\sigma^2\right)\\
&=(\sigma^2)^{1-1} \exp\left(-\left(\frac{\boldsymbol{\delta}' \mathbf{P}'\mathbf{QP}\boldsymbol{\delta}}{2}+\frac{p}{2,000}\right)\sigma^2\right),        
    \end{split}
\end{equation}
$$\therefore \boldsymbol{\Sigma}_{\boldsymbol{\delta}}|\psi_{gk}, \boldsymbol{\delta}, \tau^2, y_{ik} \sim \mathcal{G}(1, \frac{\boldsymbol{\delta}' \mathbf{P}'\mathbf{QP}\boldsymbol{\delta}}{2}+\frac{p}{2,000}).$$
\end{itemize}

\clearpage
\section{Assessing Convergence and Mixing of MCMC}

We assess the improvement in MCMC mixing from the projection step. As described in the manuscript, we run MCMC algorithms until the Monte Carlo standard errors are at or below 0.04 level. In simulation studies, we run 70,000 iterations for both PSFoFR and SFoFR. For the PM2.5 data example, we run 50,000 iterations for PSFoFR and 100,000 iterations for SFoFR. For the mobility data example, we run 50,000 iterations for PSFoFR and 200,000 iterations for SFoFR.  

Here, we provide selected trace plots of the 1,000 thinned posterior samples for each application (Figure~\ref{tofsignal1} and Figure~\ref{tofsignal2}). PSFoFR has better mixing compared to SFoFR, especially for random effect parameters. We also observe that PSFoFR shows a larger ESS, indicating better Markov chain mixing (Table~\ref{tabess}). 

\begin{table}[htbp]
  \centering
    \begin{tabular}{M{25mm}M{25mm}M{25mm}M{25mm}} \toprule
    $\boldsymbol{\delta}$ or $\widetilde{\textbf{W}}$ &\multicolumn{1}{M{25mm}}{Simulated data} &\multicolumn{1}{M{25mm}}{PM2.5 data} &\multicolumn{1}{M{25mm}}{Mobility data} \\ \midrule
        PSFoFR & 115 & 110 & 539 \\ \cmidrule(l){2-4}
        SFoFR & 17 & 16 & 31  \\  \cmidrule(l){1-4}
    $\widetilde{\boldsymbol{\psi}}$  &\multicolumn{1}{M{25mm}}{Simulated data} &\multicolumn{1}{M{25mm}}{PM2.5 data} &\multicolumn{1}{M{25mm}}{Mobility data} \\ \midrule
        PSFoFR & 226 & 389 & 725 \\ \cmidrule(l){2-4}
        SFoFR & 169 & 475 & 383  \\  \cmidrule(l){1-4}
    \end{tabular}
    \caption{
    Effective sample sizes for random effect parameters ($\boldsymbol{\delta}$ for PSFoFR or $\widetilde{\textbf{W}}$ for SFoFR) and regression function parameters ($\widetilde{\boldsymbol{\psi}}$). Effective sample sizes are computed from the 1,000 thinned posterior samples. 
    }
    \label{tabess}
\end{table}

\begin{figure}[htbp!]
\subfloat[
Simulated data example: $\boldsymbol{\delta}$ for PSFoFR.]{{\includegraphics[width=0.5\linewidth]{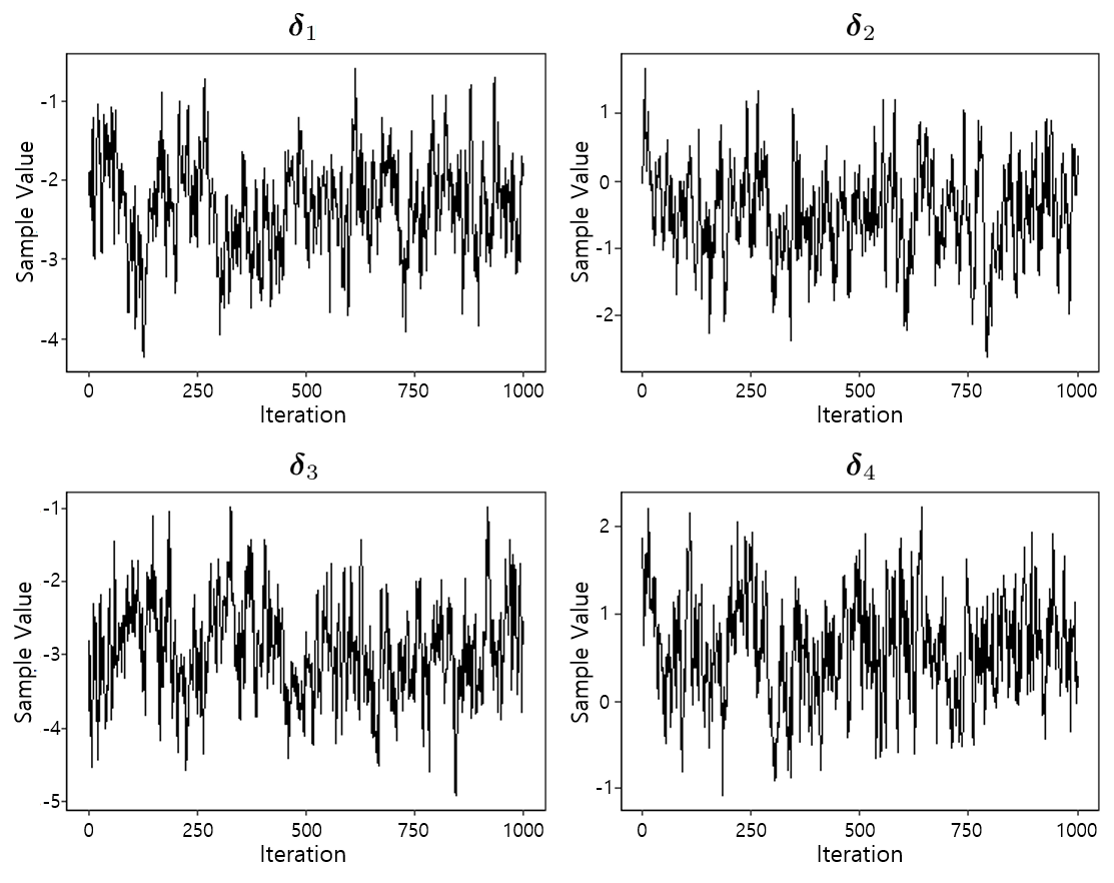} }}%
\subfloat[Simulated data example: $\widetilde{\textbf{W}}$ for SFoFR.]{{\includegraphics[width=0.5\linewidth ]{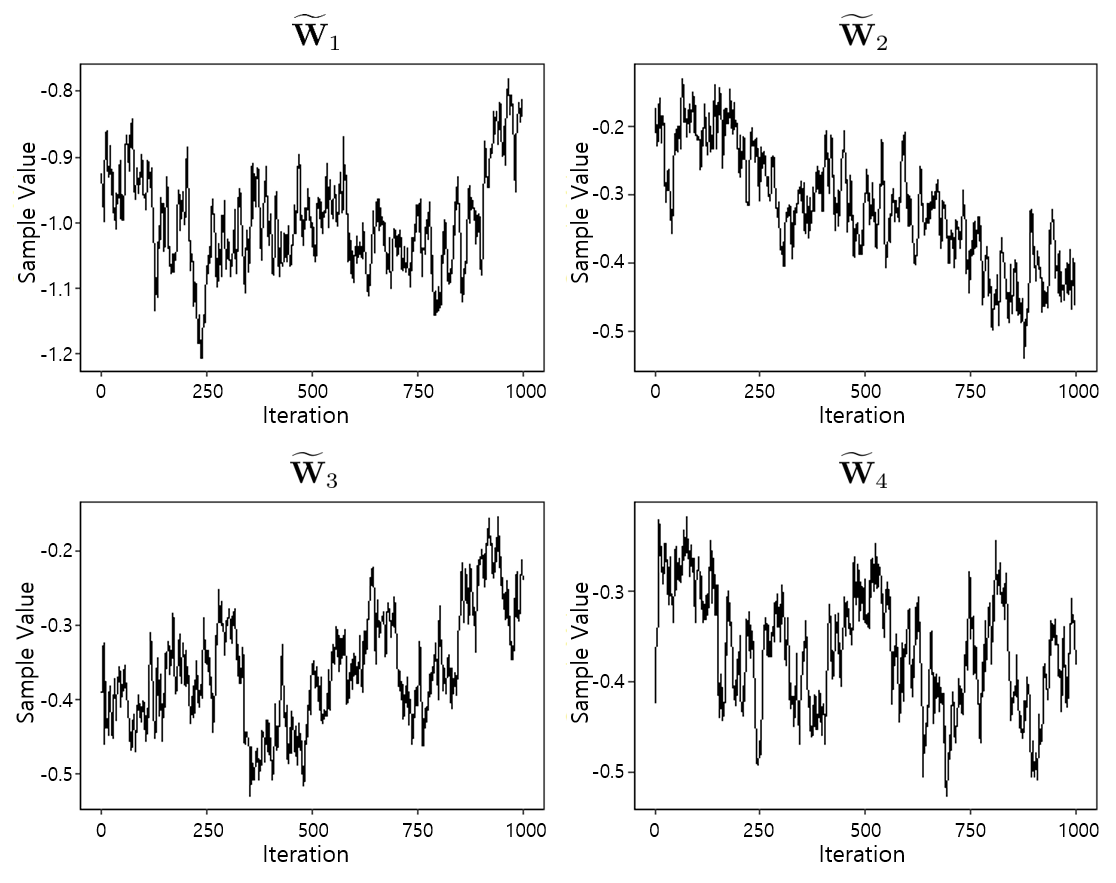} }}%
\hfill 
\subfloat[PM2.5 data example: $\boldsymbol{\delta}$ for PSFoFR.]{{\includegraphics[width=0.5\linewidth ]{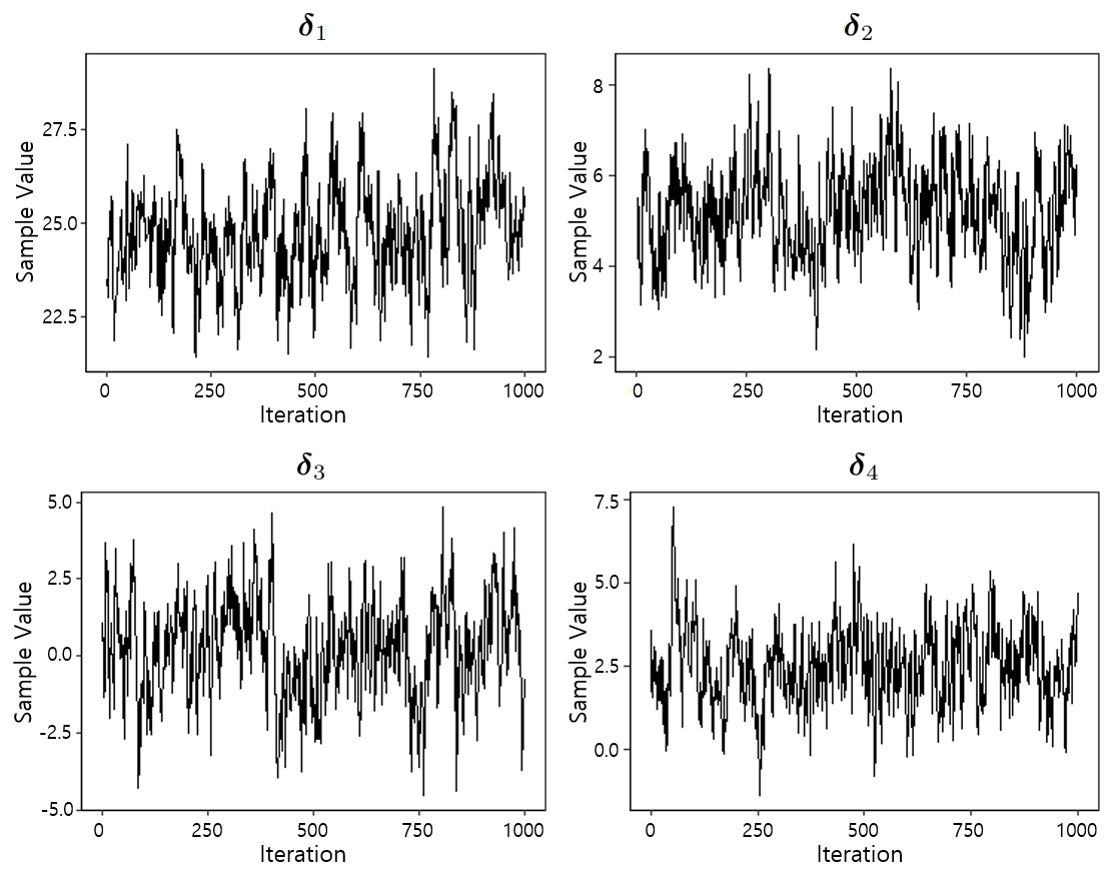} }}%
\subfloat[PM2.5 data example: $\widetilde{\textbf{W}}$ for SFoFR.]{{\includegraphics[width=0.5\linewidth ]{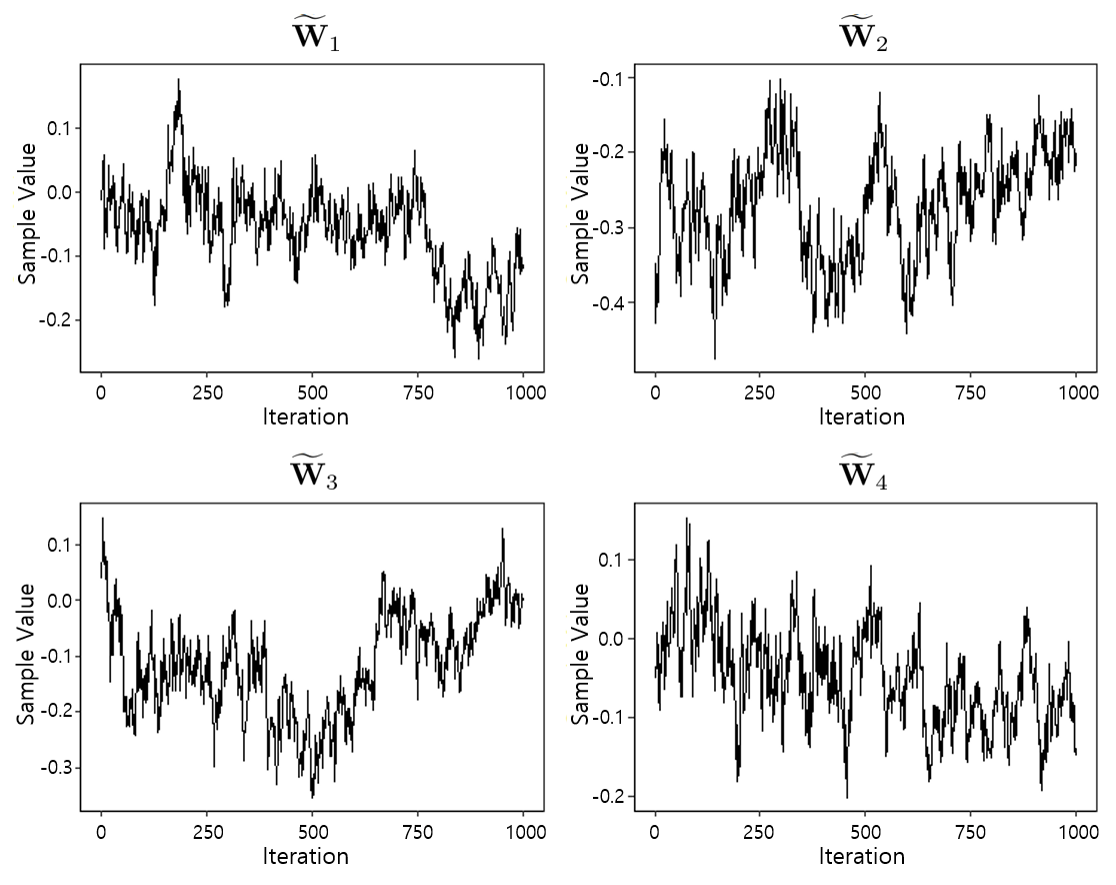} }}%
\hfill 
\subfloat[Mobility data example: $\boldsymbol{\delta}$ for PSFoFR.]{{\includegraphics[width=0.5\linewidth ]{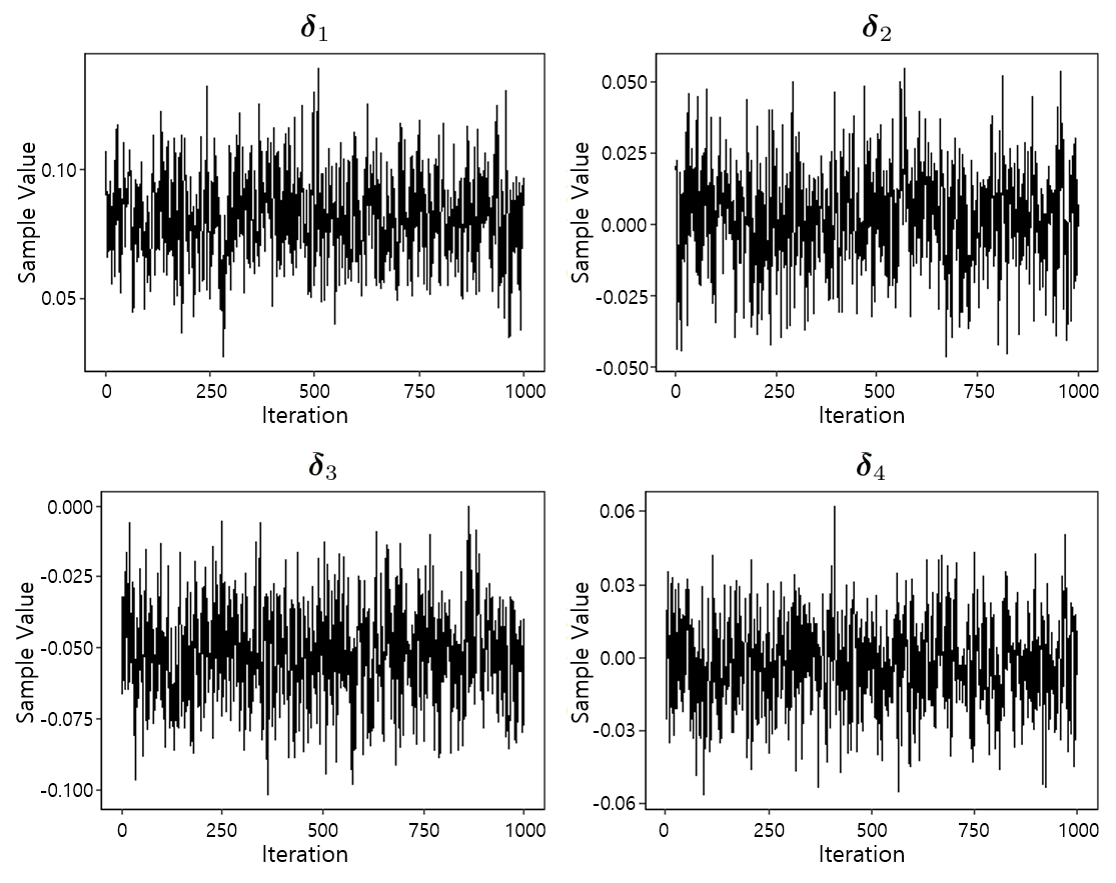} }}%
\subfloat[Mobility data example: $\widetilde{\textbf{W}}$ for SFoFR.]{{\includegraphics[width=0.5\linewidth ]{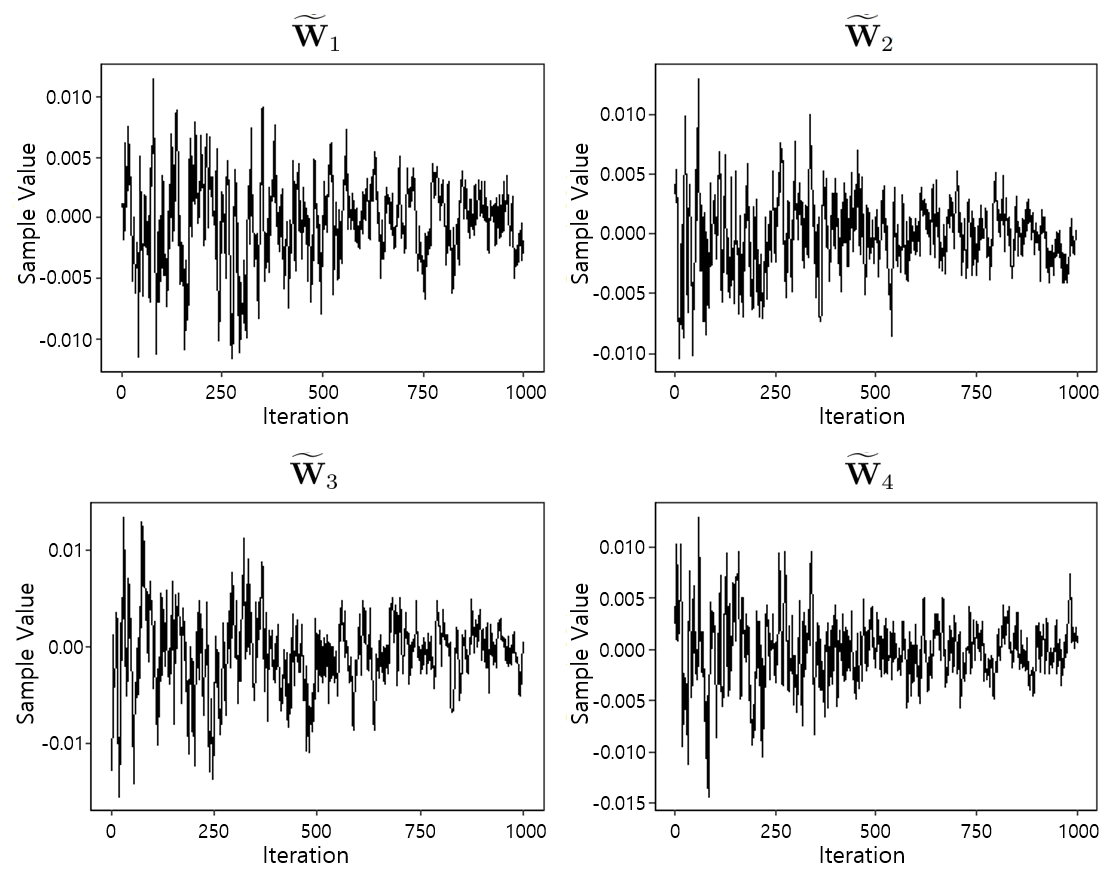} }}%
\caption{Selected trace plots for random effect parameters.}%
\label{tofsignal1}%
\end{figure}

\begin{figure}[htbp!]
\subfloat[Simulated data example: $\widetilde{\boldsymbol{\psi}}$ for PSFoFR.]{{\includegraphics[width=0.49\linewidth]{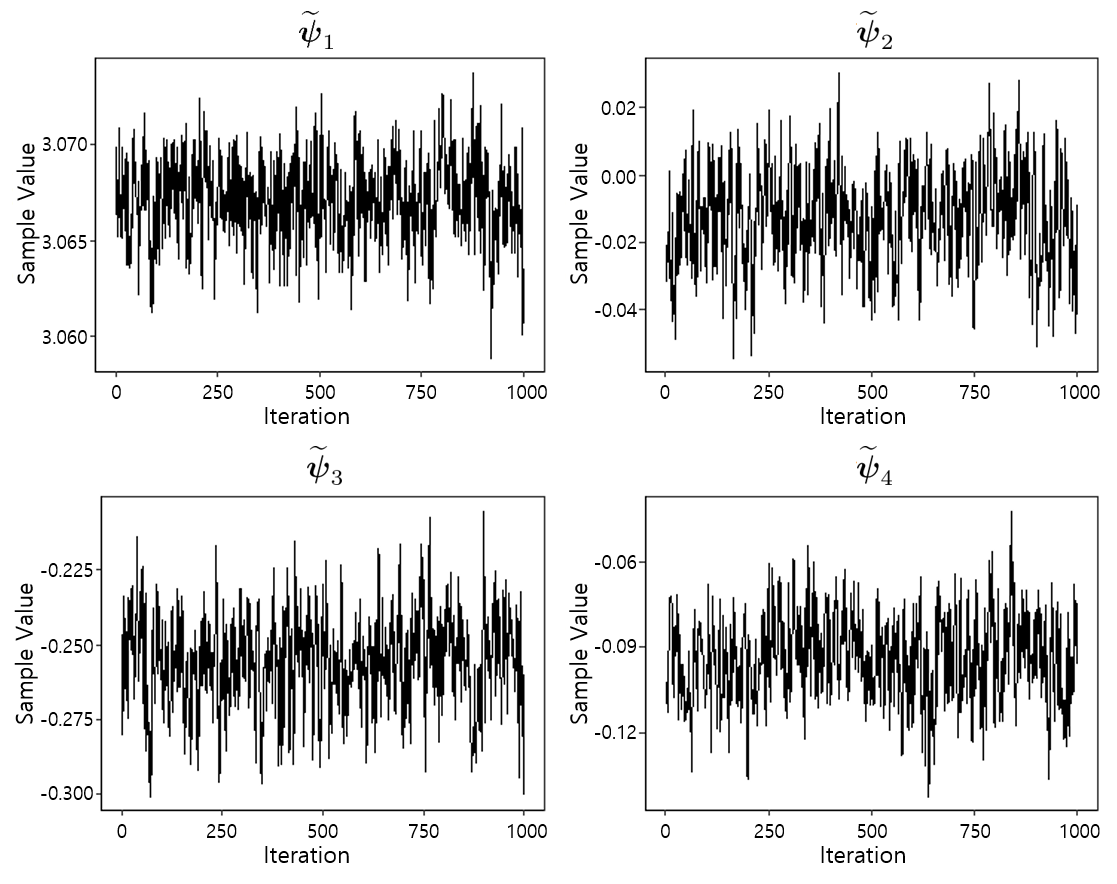} }}%
\subfloat[Simulated data example: $\widetilde{\boldsymbol{\psi}}$ for SFoFR.]{{\includegraphics[width=0.49\linewidth ]{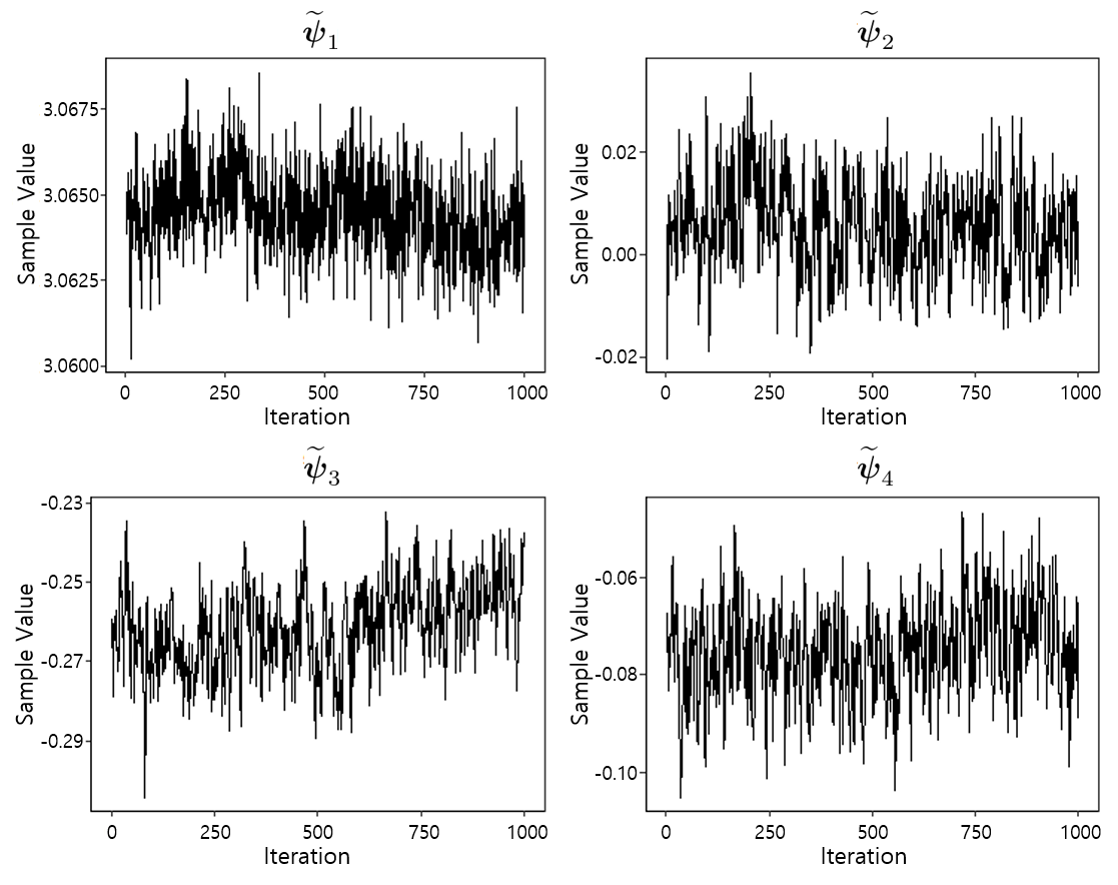} }}%
\hfill 
\subfloat[PM2.5 data example: $\widetilde{\boldsymbol{\psi}}$ for PSFoFR.]{{\includegraphics[width=0.49\linewidth ]{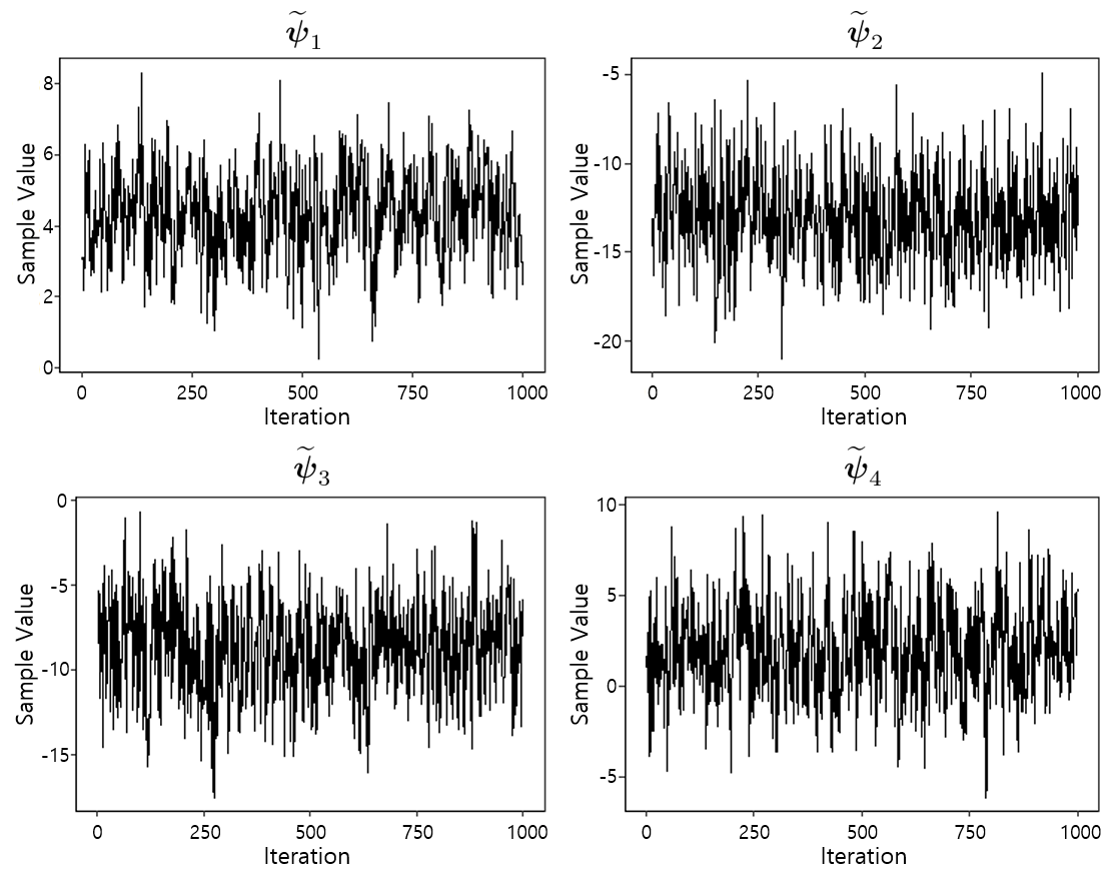} }}%
\subfloat[PM2.5 data example: $\widetilde{\boldsymbol{\psi}}$ for SFoFR.]{{\includegraphics[width=0.49\linewidth ]{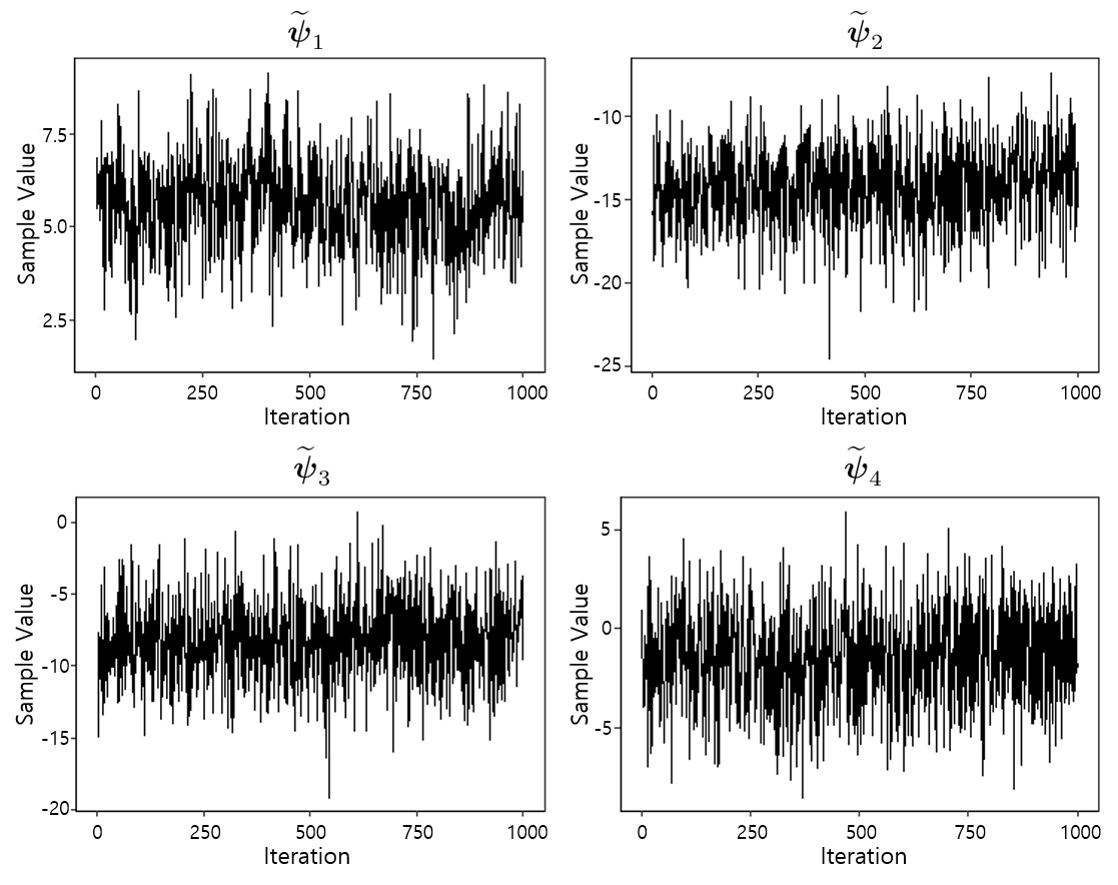} }}%
\hfill 
\subfloat[Mobility data example: $\widetilde{\boldsymbol{\psi}}$ for PSFoFR.]{{\includegraphics[width=0.49\linewidth ]{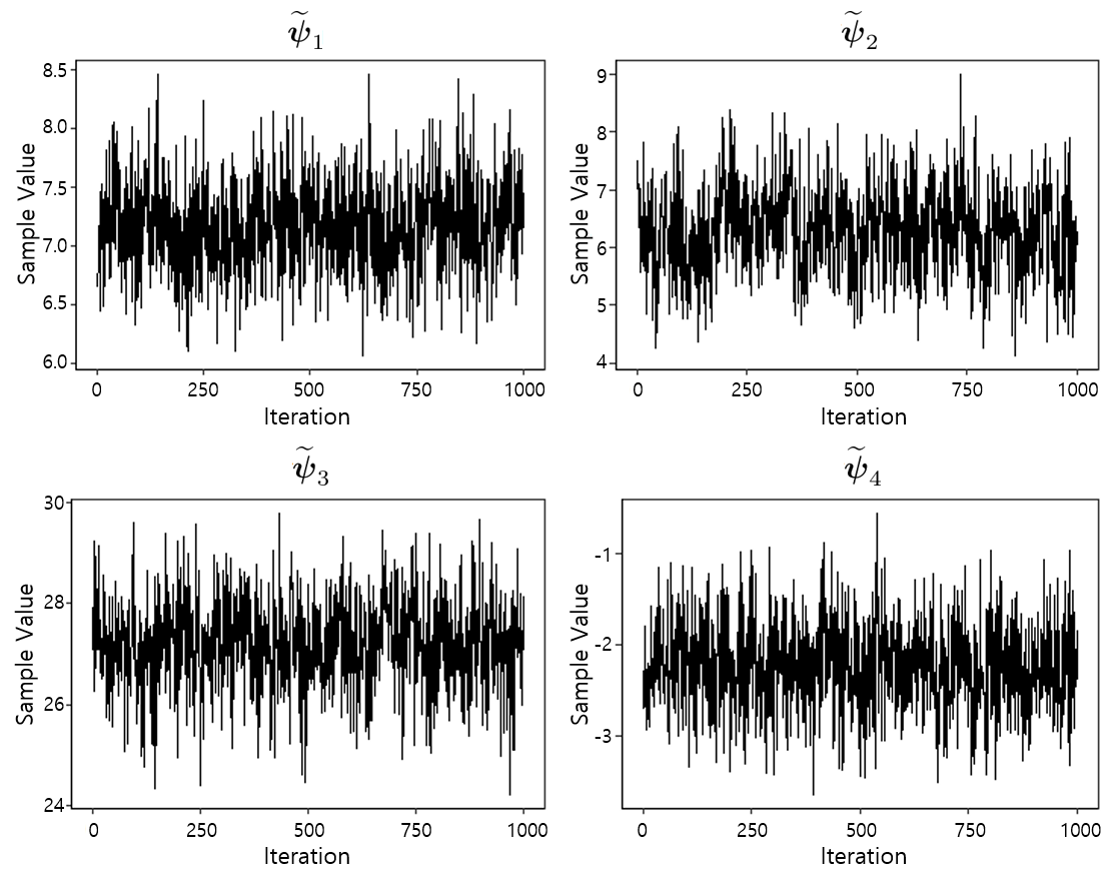} }}%
\subfloat[Mobility data example: $\widetilde{\boldsymbol{\psi}}$ for SFoFR.]{{\includegraphics[width=0.49\linewidth ]{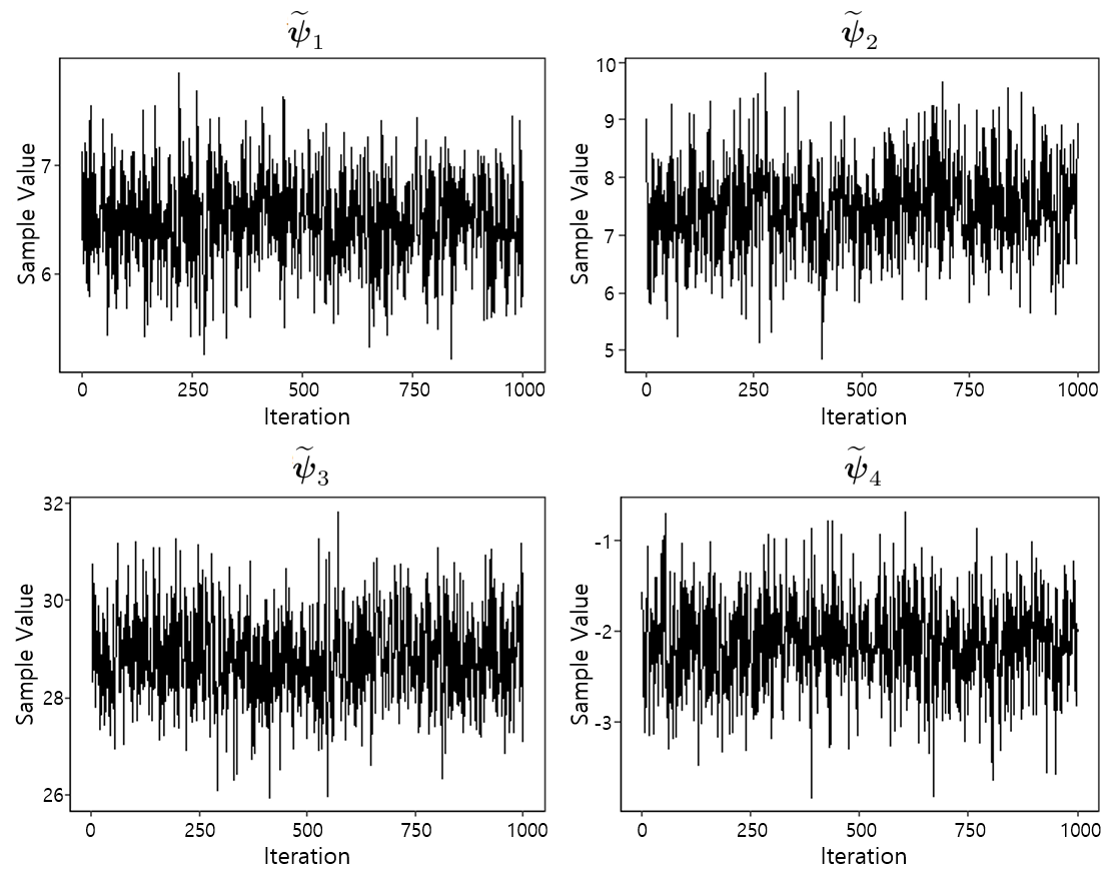} }}%
\caption{Selected trace plots for regression function parameters.
}%
\label{tofsignal2}%
\end{figure}

\clearpage
\section{Comparison with Contour Avoiding Region}
We obtain the contour avoiding region \citep{bolin2015excursion} for the regression function as follows. All the notations follow the main manuscript.   
\begin{enumerate}
\item We construct $\Psi_{g_nk_n}^{(u)}(r,t)=\sum_{k=1}^{k_n}\sum_{g=1}^{g_n} \psi_{gk}^{(u)}\phi_{gk}(r,t)$ from the MCMC samples $(\psi^{(u)}_{gk})_{g_n \times k_n}$, for $u=1,\cdots,U$.
\item We evaluate $\Psi_{g_nk_n}^{(u)}(r, t)$ for each grid point $r = 1,\cdots, n_v$ and $t = 1, \cdots, n_t$, where $1 < 2 < \cdots < n_v$ and $1 < 2 < \cdots < n_t$, respectively.
\item For each $(r,t)$, we compute $P(\Psi_{g_nk_n}(r,t)>0)$ as $\rho_U(r,t)=\frac{1}{U}\sum_{u=1}^{U}\mathbf{1}(\Psi_{g_nk_n}^{(u)}(r,t) > 0)$ and $P(\Psi_{g_nk_n}(r,t)<0)$ as $\rho_L(r,t)=\frac{1}{U}\sum_{u=1}^{U}\mathbf{1}(\Psi_{g_nk_n}^{(u)}(r,t) < 0)$.
\item We compute $\rho(r,t) = \max(\rho_U(r,t), \rho_L(r,t))$.
\item If $\rho_U(r,t) \leq 0.5$, we set limits $a(r,t)=-\infty$ and $b(r,t)=\text{SD}(\Psi_{g_nk_n}(r,t))\times \Phi^{-1}(\rho(r,t))$; otherwise $a(r,t)=\text{SD}(\Psi_{g_nk_n}(r,t))\times \Phi^{-1}(1-\rho(r,t))$ and $b(r,t)=\infty$, where $\Phi$ is the CDF of the standard normal distribution.
\item We compute $F_0(r,t)$ as $\frac{1}{U}\sum_{u=1}^{U}\mathbf{1}( \Psi_{g_nk_n}^{(u)}(r,t) \in [a(r,t), b(r,t)])$.
\item If $F_0(r,t)$ is exceed $1-\alpha$, we assign 1 to $(r,t)$ grid point; otherwise 0. 
\end{enumerate}
In Step 1 and Step 2, we construct the regression functions from posterior samples. In Step 3, we compute the marginal probability that $\Psi_{g_nk_n}(r,t)$ exceeds 0 or below 0 through a Monte Carlo approximation. In Step 5, we obtain the upper and lower limits to compute the contour avoidance function, $F_0(r,t)$. Note that $\rho_U(r,t) > 0.5$ implies that $P(\Psi_{g_nk_n}(r,t)>0) > P(\Psi_{g_nk_n}(r,t)<0)$ (i.e., the positive excursion set). On the other hand, $\rho_U(r,t) < 0.5$ implies that $P(\Psi_{g_nk_n}(r,t)>0) < P(\Psi_{g_nk_n}(r,t)<0)$ (i.e., the negative excursion set). The limits are computed using the CDF of the standard normal distribution because \cite{bolin2015excursion} assumed that the process follows a latent Gaussian model. In Step 7, the contour avoiding (significantly different from 0) regions are identified based on $F_0(r,t)$. We obtain contour avoiding regions ($\alpha = 0.05$) for the simulated and the real data examples using {\tt{excursions}} package \citep{excursion2023} in {\tt{R}}. Figure~\ref{tofsignal22} shows that the simultaneous credible intervals and the contour avoiding regions show similar patterns in general, though the contour avoiding regions are slightly wider.
\begin{figure}[htbp!]
\centering
\subfloat[Simulated data example.]{{\includegraphics[width=0.83\linewidth]{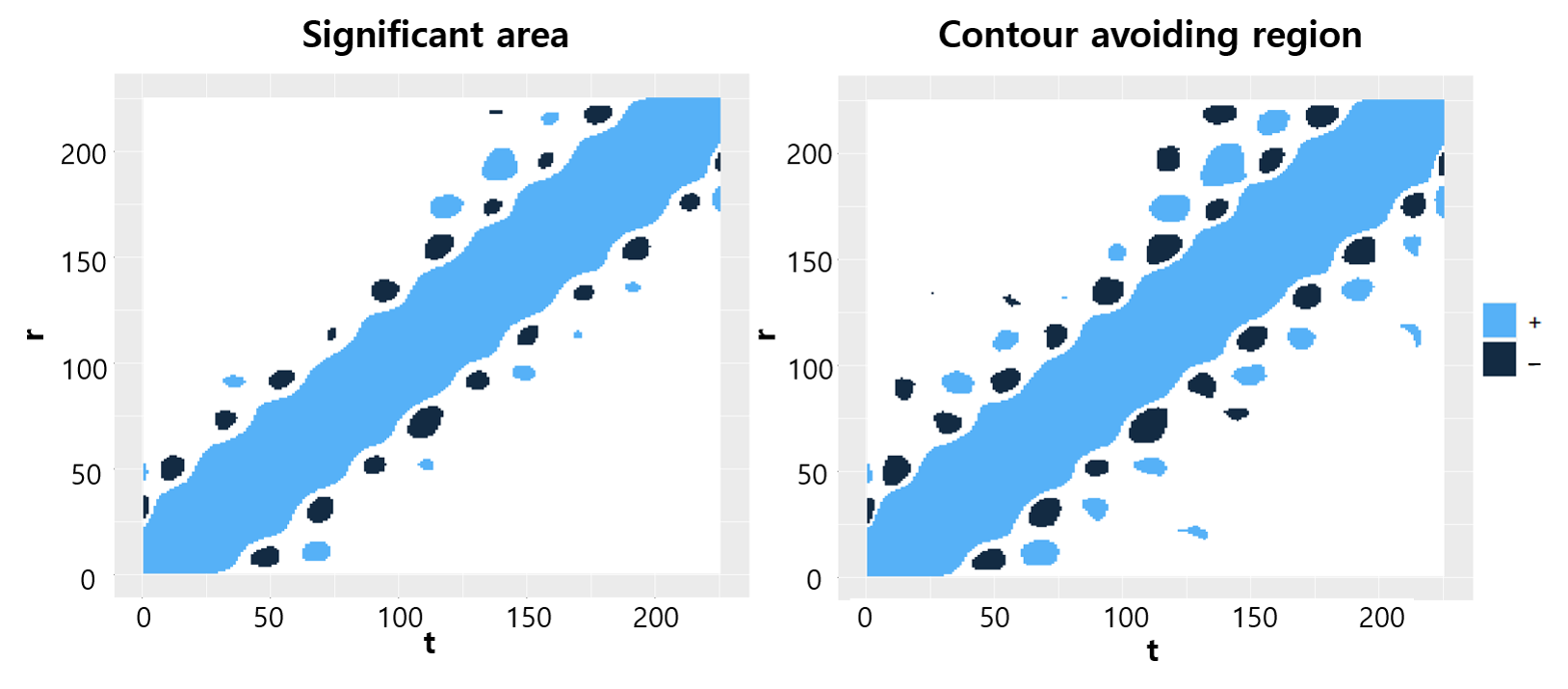} }}%
\hfill 
\subfloat[PM2.5 data example.]{{\includegraphics[width=0.83\linewidth ]{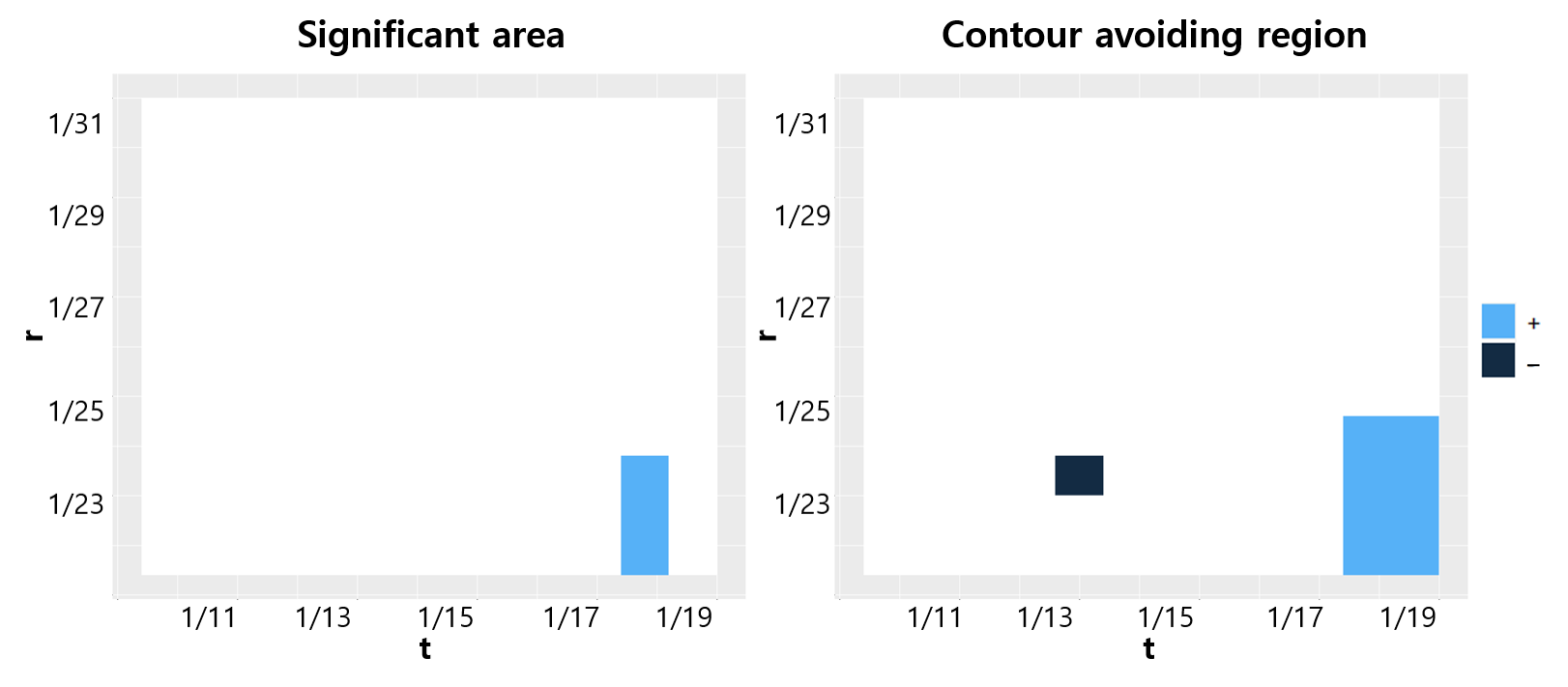} }}%
\hfill 
\subfloat[Mobility data example.]{{\includegraphics[width=0.83\linewidth ]{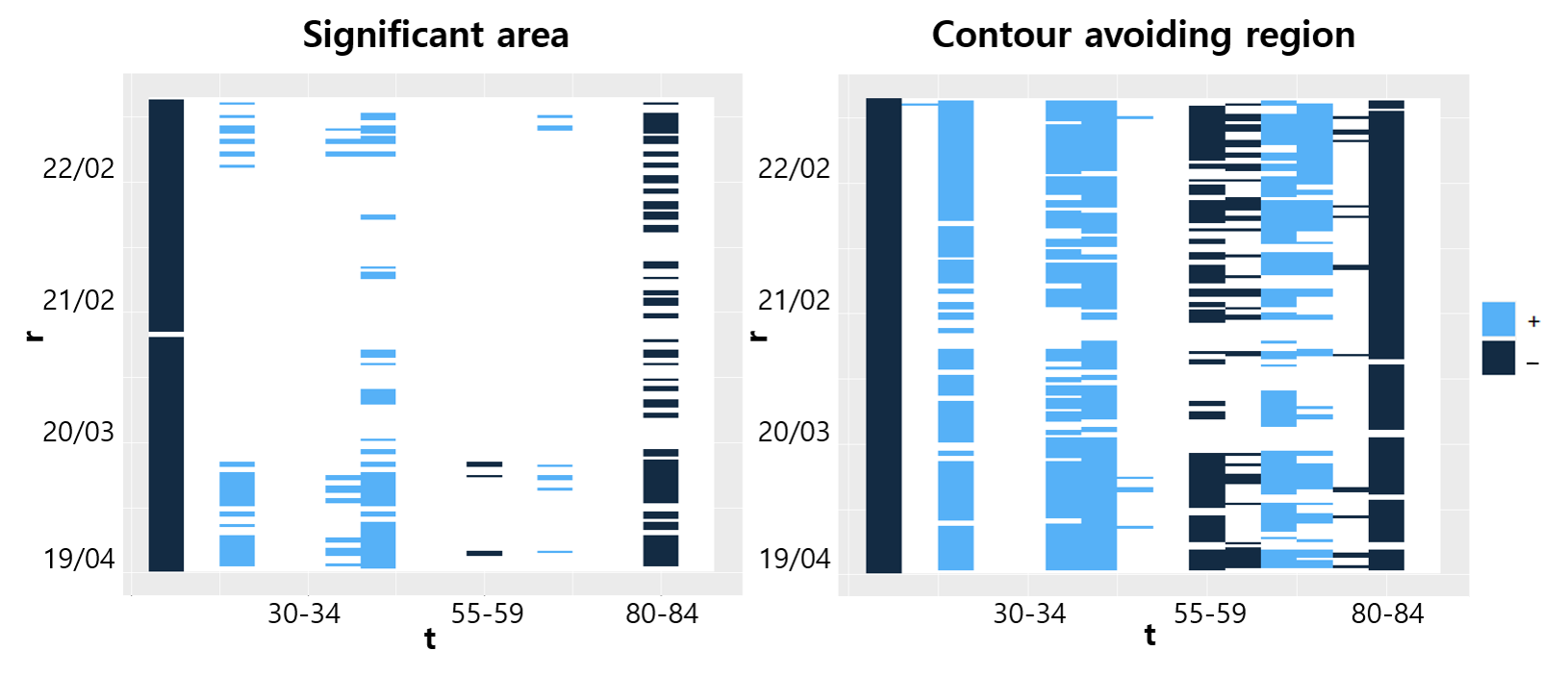} }}%
\caption{Significant areas detected from the simultaneous credible intervals (left panel) and contour avoiding regions (right panel).
}%
\label{tofsignal22}%
\end{figure}

\clearpage
\section{Simulation Results with Different Basis Functions}
\subsection{Fourier Basis}
We use $n=700$ samples for training and $n_{cv}=300$ samples for prediction. For PSFoFR, we run the MCMC algorithm for 70,000 iterations with 50,000 discarded for burn-in, and 1,000 thinned samples are obtained from the remaining 20,000. We use $k_n=19$ and $g_n=13$; the number of basis functions is chosen by minimizing GCV error. We set the rank values as 5\% of the sample size.
\begin{figure}[htbp!]
    \centering
    \includegraphics[width=0.8\linewidth]{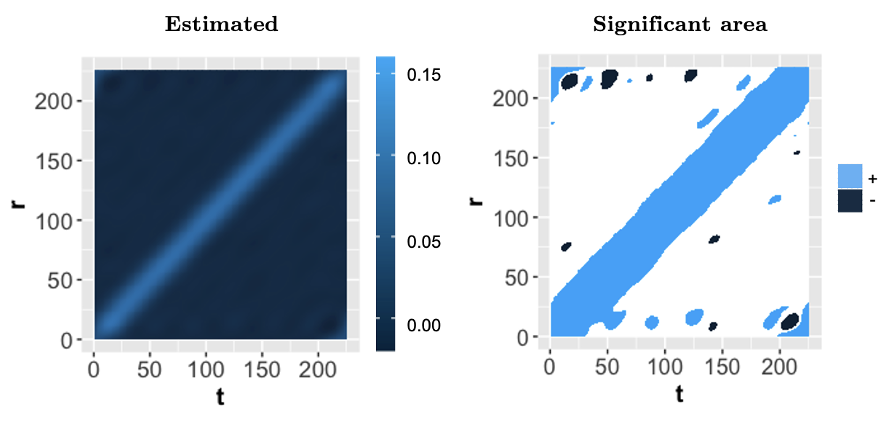}
    \caption{The estimated regression functions from PSFoFR with Fourier basis. Significant areas in the functions are detected from $\alpha=0.05$. Sky blue colors represent positive significant areas, whereas dark blue colors represent negative significant areas. 
         }
    \label{fig:AppWithSimulFourier}
\end{figure}

\subsection{FPC Basis}
We use $n=700$ samples for training and $n_{cv}=300$ samples for prediction. For PSFoFR, we run the MCMC algorithm for 70,000 iterations with 50,000 discarded for burn-in, and 1,000 thinned samples are obtained from the remaining 20,000. We use $k_n=11$ and $g_n=171$. We set $k_n$ and $g_n$ to achieve the function of variance explained at 90\%. As in the other basis functions, we set the rank values as 5\% of the sample size.
\begin{figure}[htbp!]
    \centering
    \includegraphics[width=0.8\linewidth]{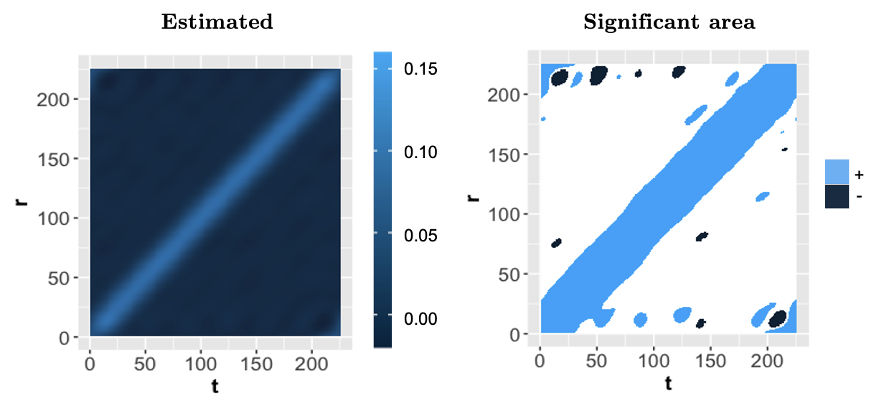}
    \caption{The estimated regression function from PSFoFR with FPC basis. Significant areas in the functions are detected from $\alpha=0.05$. Sky blue colors represent positive significant areas, whereas dark blue colors represent negative significant areas.}
    \label{fig:AppWithSimulPC}
\end{figure}

\subsection{Comparison}
\begin{table}[htbp]
  \centering
    \begin{tabular}{M{25mm}M{25mm}M{25mm}M{25mm}} \toprule
    \textbf{Basis Type} &\multicolumn{1}{M{25mm}}{\textbf{MSPE}} &\multicolumn{1}{M{25mm}}{\textbf{MSE }} &\multicolumn{1}{M{25mm}}{\textbf{Time}} \\ \midrule
        Fourier  & 0.124 &0.004& 45  \\ \cmidrule(l){1-4}
        FPC   &0.125 &0.004& 27 \\  \bottomrule
    \end{tabular}
    \caption{Inference results for the simulated datasets. MSPE, MSE, and computing time (min) are reported.}
    \label{tab2}
\end{table}
Figures \ref{fig:AppWithSimulFourier}, \ref{fig:AppWithSimulPC} indicate that the estimated regression functions are quantitatively similar to those from the B-spline basis expansion in the main manuscript. Although the overall trend is similar, we observe that the estimated functions from FPC are rough compared to the other two basis functions. Table \ref{tab2} shows that the results using Fourier and FPC bases are similar. However, the MSPE is higher compared to the B-spline basis.

\clearpage
\section{Simulation Results with a Different Regression Function}

Consider a regression function defined over the [0,225] $\times$ [0,225] domain as 
$$\Psi(r,t) =
\begin{cases} 
0, & \text{if } -0.03 \leq \Psi^*(r,t) \leq 0.03 \\
\Psi^*(r,t) - 0.03, & \text{if } \Psi^*(r,t) > 0.03 \\
\Psi^*(r,t) + 0.03, & \text{if } \Psi^*(r,t) < -0.03
\end{cases}
$$
where $$\Psi^*(r,t) = \frac{1}{10} \left( \sin\left(10r\right) \cos\left(10t\right) 
+ \exp\left(-5 \left(r^2 + t^2 \right)\right) 
+ 0.5 \sin\left(5 \left(r + t\right) \right) \right).$$
The rest of the simulation settings are identical to those in Section 4. We use $n=700$ samples for training and $n_{cv}=300$ samples for prediction. For PSFoFR, we run 70,000 iterations with 50,000 discarded for burn-in and 1,000 thinned samples obtained from the remaining 20,000. We choose $k_n=29$ and $g_n=12$ by minimizing the generalized
cross-validation (GCV). We use 1,520 triangular meshes and set $p$ as 5\% of the sample size. Training PSFoFR takes 103.52 minutes, which is longer than the training time in Section 4 due to the complexity of the regression function.

\begin{figure}[htbp]
     \centering
         \centering
         \includegraphics[width=\linewidth]{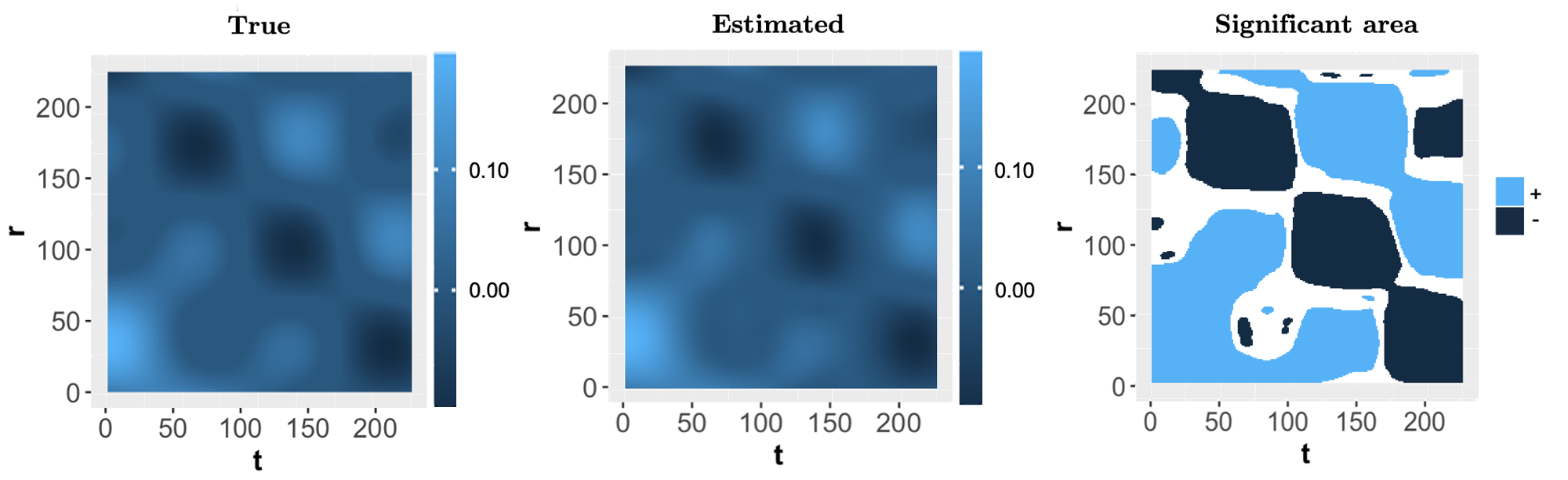}
         \caption{The estimated $\Psi(r,t)$ obtained from the posterior mean of PSFoFR. Significant areas in the functions are detected from $\alpha=0.05$. Sky blue colors represent positive significant areas, whereas dark blue colors represent negative significant areas.} 
         \label{fig:SimulcomParam}
\end{figure}

Figure~\ref{fig:SimulcomParam} indicates that the estimated regression functions are similar to the true regression function. In addition, the significant areas are well aligned with the important regions of the true $\Psi(r,t)$. MSE of the estimated regression function is computed as 0.003.
\clearpage
\begin{figure}[htbp]
     \centering
         \centering
         \includegraphics[width=\linewidth]{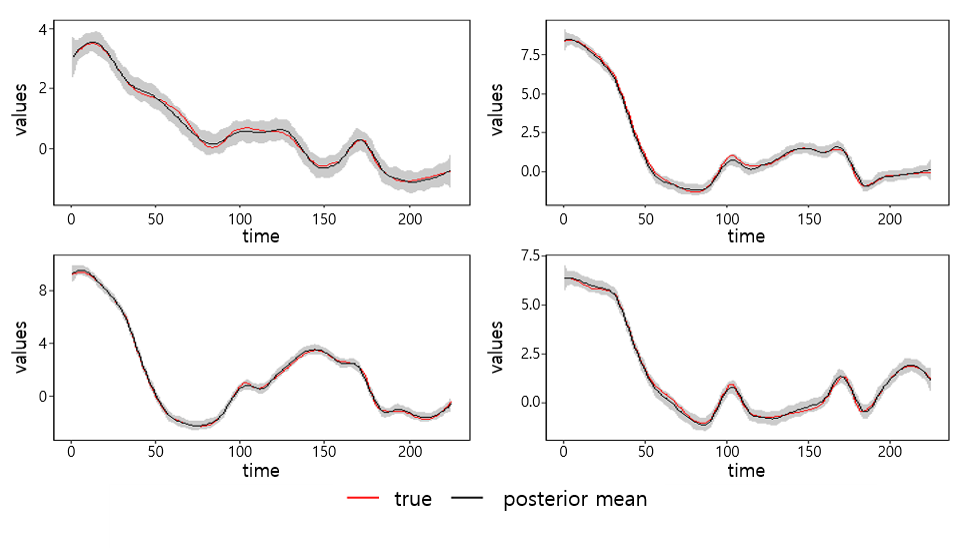}
         \caption{Visualizations of predicted curves (black lines) with 95\% simultaneous credible intervals (grey areas) from PSFoFR. Predicted curves are obtained from the sample mean of the posterior predictive distribution. Redlines indicate true curves at unobserved locations.         
         } 
         \label{fig:SimulcomPred}
\end{figure}

Figure~\ref{fig:SimulcomPred} provides four predicted curves from 300 test observations. We observe that the predicted curves are aligned with the true test functions; the MSPE of the predicted curves is computed as 0.108. As in Section 4, we compute the mean coverage for 300 predicted curves. Figure~\ref{fig:SimulcomPred} shows that the credible intervals include the true curves well, and the coverage is computed as 99\%. 

\clearpage

\clearpage
\section{Simultaneous Credible Intervals for Predicted Curves}

\subsection{Simulated Data}

\begin{figure}[htbp!]
    \centering
    \includegraphics[width=0.94\linewidth]{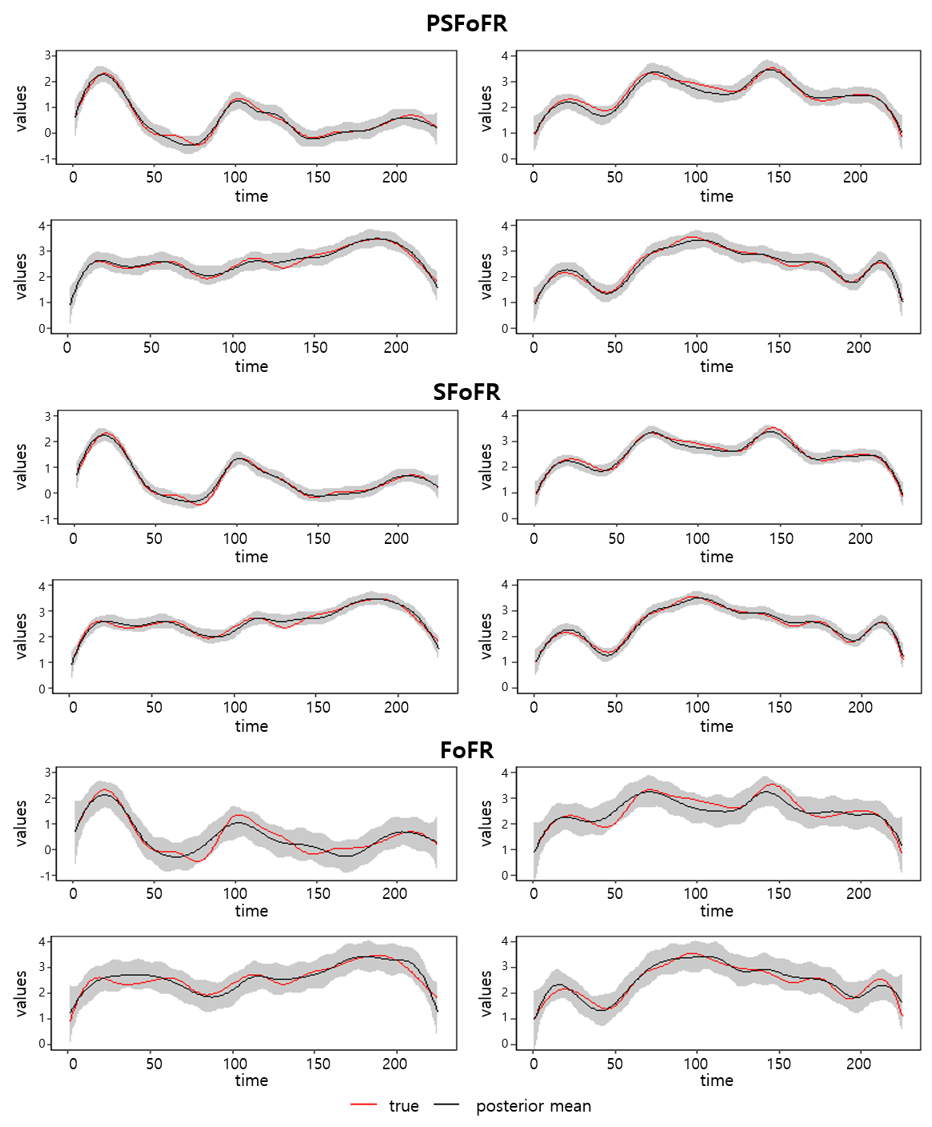}
    \caption{Visualizations of predicted curves (black lines) with 95\% simultaneous credible intervals (grey areas). Predicted curves are obtained from the sample mean of the posterior predictive distribution. Redlines indicate true curves at unobserved locations.}
    \label{fig:simu_credible}
\end{figure}

\clearpage
\subsection{PM2.5 Data}

\begin{figure}[htbp!]
    \centering
    \includegraphics[width=1\linewidth]{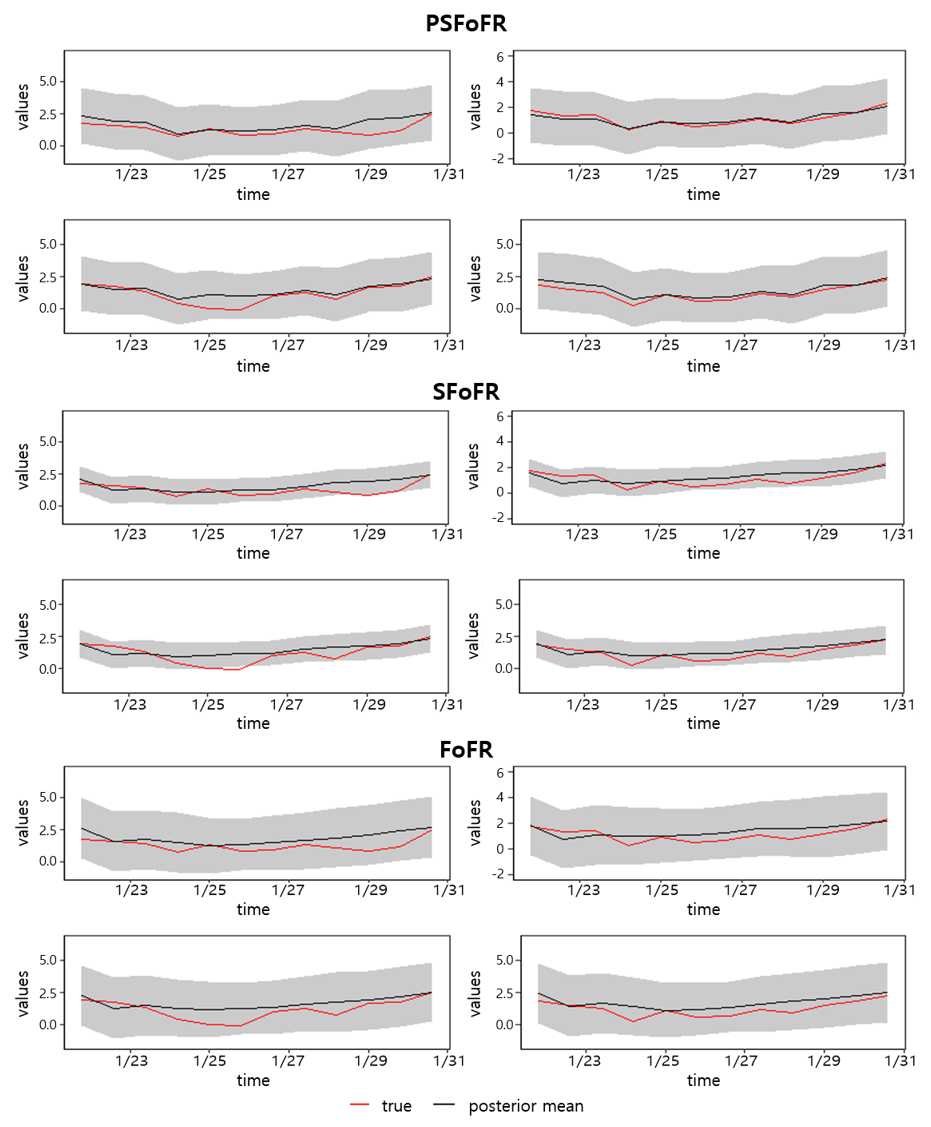}
    \caption{Visualizations of predicted curves (black lines) with 95\% simultaneous credible intervals (grey areas). Predicted curves are obtained from the sample mean of the posterior predictive distribution. Redlines indicate true curves at unobserved locations.}
    \label{fig:japan_credible}
\end{figure}

\bibliography{ref}

\end{document}